\documentclass[12pt]{article}
\usepackage{amsmath,amssymb}
\usepackage{colortbl} 
\usepackage{tabularx}
\usepackage{quantikz}
\providecommand{\setwiretype}[1]{}
\usepackage{setspace}
\usepackage{tikz-cd}
\usepackage{verbatim}
\usepackage{subcaption}
\usetikzlibrary{external}
\usetikzlibrary{arrows.meta,decorations.pathreplacing}
\usepackage{algorithm}
\usepackage{algpseudocode}
\usepackage{placeins}
\algrenewcommand{\algorithmicrequire}{\textbf{Require:}}
\algrenewcommand{\algorithmicensure}{\textbf{Ensure:}}
\algrenewcommand{\algorithmicindent}{1em}
\newcommand{\routine}[1]{\textsc{#1}}
\newcommand{\Sample}{\routine{Sample}}
\usepackage{braket}
\usepackage{enumerate}
\usepackage{hyperref}
\hypersetup{hidelinks}
\usepackage{mathtools}
\numberwithin{equation}{section}
\usepackage{graphicx}
\usepackage{booktabs}
\usepackage{xcolor}

\usepackage{amsthm}

\usepackage[margin=1in]{geometry}
\newcommand{\be}{\begin{eqnarray} \begin{aligned}}
\newcommand{\ee}{\end{aligned} \end{eqnarray} }
\newcommand{\benn}{\begin{eqnarray*} \begin{aligned}}
\newcommand{\eenn}{\end{aligned} \end{eqnarray*}}

\newcommand*{\bbN}{\mathbb{N}}

\newcommand*{\bbC}{\mathbb{C}}

\newcommand*{\cC}{\mathcal{C}}
\newcommand*{\sU}{\mathsf{U}}
\newcommand*{\tsU}{\tilde{\sU}}
\newcommand*{\gateLocations}[1]{\mathsf{Loc}(#1)}
\newcommand*{\numCuts}{N_{\mathrm{cut}}}

\newcommand*{\cF}{\mathcal{F}}
\newcommand*{\cG}{\mathcal{G}}
\newcommand*{\cH}{\mathcal{H}}

\newcommand*{\cN}{\mathcal{N}}

\newcommand*{\cR}{\mathcal{R}}

\newcommand*{\Pauli}{\mathsf{Pauli}}

\newcommand*{\cU}{\mathcal{U}}
\newcommand*{\cT}{\mathcal{T}}

\newcommand{\Good}{\mathsf{Good}}
\newcommand{\Bad}{\mathsf{Bad}}

\newcommand{\bits}{\{0,1\}}

\providecommand{\abs}[1]{\left|#1\right|}
\providecommand{\given}{\,|\,}

\newcommand{\bc}{\begin{center}}
\newcommand{\ec}{\end{center}}

\newtheorem{theorem}{Theorem}[section]
\newtheorem{lemma}[theorem]{Lemma}

\newtheorem{proposition}[theorem]{Proposition}
\newtheorem{corollary}{Corollary}[section]

\DeclareMathOperator{\tr}{tr}
\DeclareMathOperator{\poly}{poly}

\def\01{\{0,1\}}

\newcommand*{\unitarychannel}[1]{\mathcal{U}_{#1}}

\begin{document}

\title{Parallel classical simulation of noisy shallow  circuits:\\
 no quantum advantage in $1D$}

\author{Robert K\"onig\thanks{
Department of Mathematics, School of Computation, Information and Technology, Technical University of Munich, \&  Munich Center for Quantum Science and Technology, Munich, Germany} \and Marco Tomamichel\thanks{Department of Electrical and Computer Engineering and Centre for Quantum Technologies, National University of Singapore, Singapore}}
\date{September 30, 2026}
\maketitle

\begin{abstract}
We consider quantum circuits consisting of $d$~layers of nearest-neighbor two-qubit gates acting on $n$~qubits arranged on a line, where every qubit is independently depolarized with a constant probability before each layer. We describe a randomized parallel algorithm which samples from the output distribution of any such circuit to within total variation error~$\delta$, with parallel runtime $2^{O(d)}\log\log(n/\delta)$ and $n\,2^{O(d)}$ elementary real-arithmetic operations. Without noise, the same approach gives an exact sampler with parallel runtime $O(\log n)$. For constant depth, we further show that the input/output behavior of the noisy quantum circuit is reproduced up to a constant error by a randomized $\mathsf{AC}^0$-circuit, that is, a Boolean circuit of polynomial size and constant depth with unbounded fan-in AND and OR gates and NOT gates. Consequently, every relation problem solved by a noisy constant-depth quantum circuit in one dimension is also solved, with essentially the same success probability, by a randomized $\mathsf{AC}^0$-circuit. This rules out, for noisy circuits in one dimension, the unconditional quantum advantage established for noisy shallow circuits in two and three dimensions, where polynomial-size classical circuits over the same gate set require depth $\Omega(\log n/\log\log n)$. Our algorithm exploits the fact that depolarizing noise cuts a one-dimensional circuit into independent pieces of logarithmic width, each of which can be sampled exactly in parallel.
\end{abstract}

\section{Introduction}
\label{sec:introduction}

Shallow quantum circuits -- circuits whose depth does not grow with the number of qubits -- are among the simplest models of quantum computation, and among the most instructive. On the one hand, they are believed to be hard to simulate classically. Terhal and DiVincenzo~\cite{TD04} showed that no polynomial-time classical algorithm samples exactly from the output distribution of a certain depth-$3$ circuit unless the polynomial hierarchy collapses, and under additional average-case hardness and anticoncentration conjectures this remains true for certain constant-depth circuits even if a constant error in total variation is allowed~\cite{BMS16,BHSRE18}. On the other hand, shallow circuits are the setting in which quantum advantage can be established unconditionally. Ref.~\cite{BGK18} exhibited a relation problem which is solved with certainty by a constant-depth circuit of nearest-neighbor gates on a two-dimensional grid, whereas any classical circuit of bounded fan-in gates which solves it with constant probability must have depth $\Omega(\log n)$. This separation was extended to classical circuits which additionally have unbounded fan-in AND and OR gates, for which the depth lower bound becomes $\Omega(\log n/\log\log n)$~\cite{WKST19}, and to quantum circuits subject to local stochastic noise~\cite{BGKT20,GJS21,CCR26,CKP26}. Equating circuit depth with parallel running time, these results show that constant-time parallel quantum computation can outperform logarithmic-time parallel classical computation.

Whether such a separation is relevant for experiments depends on three features of the quantum circuit: it should tolerate noise of constant strength, it should be geometrically local in the architecture at hand, and the gap between quantum and classical depth should be large as a function of the number~$n$ of qubits, which is limited on near-term devices. The noise-robust, geometrically local constructions known to date all use a two- or three-dimensional layout~\cite{BGKT20,CCR26,CKP26}. It is natural to ask whether a similar advantage is possible in one dimension, the simplest geometry and a natural one for several hardware platforms. Without noise, the answer is yes: there are constant-depth circuits of nearest-neighbor gates on a line ($1D$-local shallow circuits) which solve relation problems requiring classical depth $\Omega(\log n)$ with bounded fan-in gates~\cite{BGKT20} and $\Omega(\log n/\log\log n)$ with unbounded fan-in gates~\cite{CCR26}. Can such a quantum advantage also be achieved by $1D$-local quantum circuits which are noisy?

In this work we show that the answer is negative. Our main result is a randomized parallel classical algorithm which samples from the output distribution of any depth-$d$ circuit of nearest-neighbor gates on $n$~qubits on a line, subject to independent depolarizing noise of constant strength before every layer, to within total variation error~$\delta$, with parallel runtime $2^{O(d)}\log\log(n/\delta)$ and $n\,2^{O(d)}$ arithmetic operations. For a constant depth, the algorithm can be implemented by randomized Boolean circuits of size $O(n^2(\log n)^4)$ and depth $O(\log\log n)$ composed of unbounded fan-in AND and OR gates and NOT gates. If a larger polynomial size is allowed, sampling each local marginal in a single step using a precomputed table reduces the depth to a constant, which gives a randomized $\mathsf{AC}^0$-circuit. These circuits reproduce the input/output behavior of the noisy quantum circuit up to a constant error on every input. Any relation problem solved by a noisy constant-depth quantum circuit in one dimension is therefore solved, with an arbitrarily small constant loss in success probability, by a randomized $\mathsf{AC}^0$-circuit, and no superconstant lower bound on the depth of polynomial-size classical circuits with unbounded fan-in can hold in this setting.

We emphasize that the relevant complexity measure here is parallel runtime, or classical circuit depth. Polynomial sequential time does not exclude a depth advantage: the problems underlying the classical lower bounds of~\cite{BGK18,WKST19,BGKT20,CCR26,CKP26} can all be solved in classical polynomial time, and noisy one-dimensional circuits of logarithmic depth are known to be simulable in polynomial time by a sequential algorithm with runtime at least linear in~$n$~\cite{YDCM25}. To our knowledge, our result provides the first classical simulation algorithm for noisy quantum circuits with a bound on parallel runtime or classical circuit depth.

\subsection{Parallel classical simulation of noisy one-dimensional circuits}
\label{sec:simulation-overview}

To describe our results, consider $n$~qubits arranged on a line and a brickwork circuit~$\sU$ of depth~$d$, that is, $d$~layers of two-qubit unitary gates acting on nearest neighbors, where consecutive layers alternate between the two possible pairings of neighboring qubits (see Fig.~\ref{fig:bilayerqubits}). Before every layer, each qubit is independently replaced by the maximally mixed state with probability $p\in(0,1)$. The noisy circuit is applied to the input state~$\ket{0^n}$ and all qubits are measured in the computational basis; we write $P^{(p)}$ for the resulting distribution on $n$-bit strings. Classical (weak) simulation of the noisy circuit is the task of sampling from a distribution~$\widehat P$ which is close to~$P^{(p)}$ in total variation distance, $d_{\mathrm{TV}}(P,Q):=\frac12\sum_{z}|P(z)-Q(z)|$. Section~\ref{sec:quantum-noise-model} specifies the model in more detail.

\begin{figure}[!htbp]
\centering
\begin{tikzpicture}
  \node[anchor=north west,inner sep=0] (overview) at (0,0)
    {\includegraphics[height=0.30\textheight]{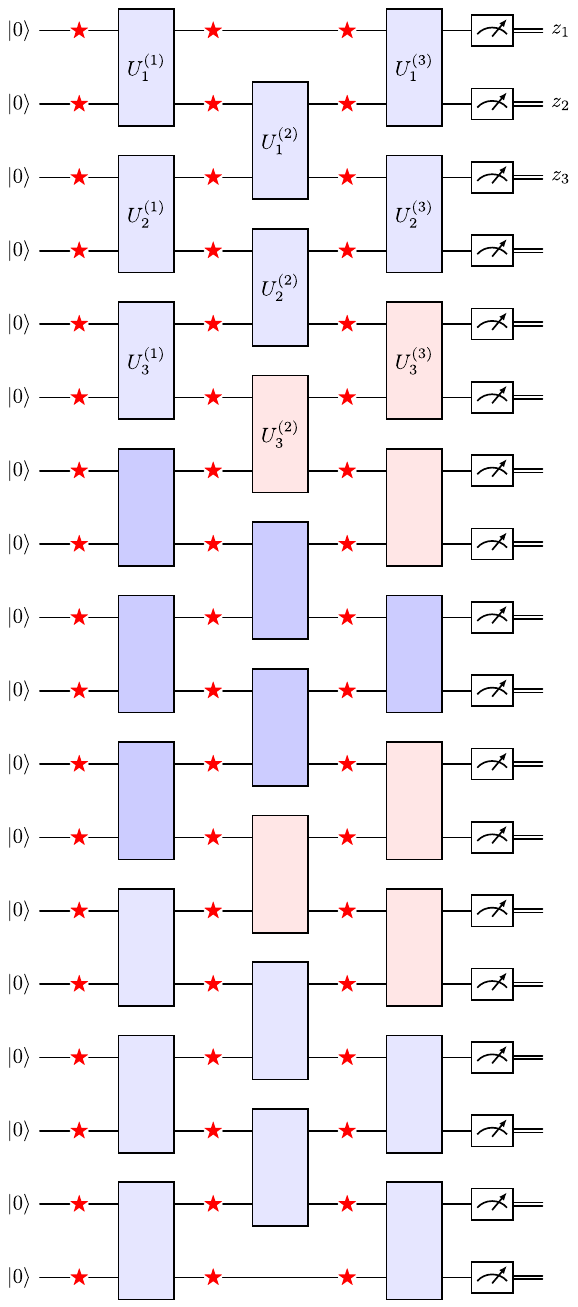}};
  \node[anchor=north west,inner sep=0] (detail)
    at ([xshift=1.4cm]overview.north east)
    {\includegraphics[width=0.47\textwidth,
      trim=0 385.036bp 0 0,clip]{bilayerqubits.pdf}};

  \coordinate (cropSW) at
    ($(overview.south west)!0.6140575!(overview.north west)$);
  \coordinate (cropSE) at
    ($(overview.south east)!0.6140575!(overview.north east)$);
  \draw[gray,line width=0.5pt]
    (cropSW) -- (overview.north west) -- (overview.north east) -- (cropSE);
  \draw[gray,densely dashed,line width=0.5pt] (cropSW) -- (cropSE);
  \draw[gray!65,line width=0.4pt]
    (overview.north east) -- (detail.north west)
    (cropSE) -- (detail.south west);
  \draw[gray,line width=0.5pt]
    (detail.south west) -- (detail.north west) --
    (detail.north east) -- (detail.south east);
  \draw[gray,densely dashed,line width=0.5pt]
    (detail.south west) -- (detail.south east);
  \node[anchor=south,font=\small,inner sep=0,yshift=4pt]
    at (overview.north) {Full circuit};
  \node[anchor=south,font=\small,inner sep=0,yshift=4pt]
    at (detail.north) {First labeled gates};
\end{tikzpicture}
\caption{A three-layer brickwork circuit on 18 qubits, initialized in
$\ket{0^{18}}$ and measured in the computational basis. The inset enlarges
the first labeled gates. Each red star represents the channel~$\cN_p$: independently at every location,
the qubit is completely depolarized with probability~$p$.
 Gate colors indicate the
groups used in the bilayer reduction of Fig.~\ref{fig:coarsegrained}.}
\label{fig:bilayerqubits}
\end{figure}

We measure the cost of a classical simulation in a parallel real-arithmetic model of computation (see Section~\ref{sec:computational-model}). Registers hold exact real numbers. An elementary operation is a real arithmetic operation, a comparison, a Boolean operation on a constant number of bits, a selection between two values according to a bit, or the generation of a random bit with distribution $\mathsf{Ber}(r)$ for a register value $r\in[0,1]$. Independent operations may be applied in parallel. The size of an algorithm is the number of elementary operations it uses, and its runtime is the length of the longest chain of dependent operations.

Our main result is the following. It assumes that the quantum circuit~$\sU$ is specified by the matrix elements of its gates in the computational basis.
\begin{theorem}[Parallel classical simulation]\label{thm:main}
Let $n,d$ be even integers with $2\leq d\leq n$, and let
$p,\delta\in(0,1)$. Let $\sU$ be any depth-$d$ brickwork circuit on
$n$ qubits, and let $P^{(p)}$ be the output distribution of its noisy implementation with independent single-qubit depolarizing noise of strength~$p$ before every gate layer, applied to the input state~$\ket{0^n}$ and measured in the computational basis.
There is a
randomized parallel algorithm \textnormal{\Sample{}} in the
real-arithmetic model which, given the matrix elements of the gates of~$\sU$ and the noise
parameter~$p$, produces a sample from a distribution $\widehat P$ satisfying
\begin{align}
  d_{\mathrm{TV}}(\widehat P,P^{(p)})\leq\delta.
  \label{eq:main-total-variation}
\end{align}
Its worst-case  runtime and number of elementary real operations
are, respectively,
\begin{align}
\begin{matrix}
\mathsf{runtime}(\textnormal{\Sample{}})&=&2^{O(d)}\bigl(d\log\tfrac{1}{p}+\log\log\tfrac{n}{\delta}\bigr)&\\
\mathsf{size}(\textnormal{\Sample{}})&=&n\,2^{O(d)}&\ .
\end{matrix}
\label{eq:main-complexity}
\end{align}
\end{theorem}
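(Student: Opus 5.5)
The plan is to exploit the observation from the abstract: depolarizing noise cuts the one-dimensional circuit into independent pieces. We write each depolarizing channel as $\cN_p(\rho)=(1-p)\rho+p\,\tr(\rho)\,I/2$ and sample its trajectory classically: at every location we draw a Bernoulli bit saying whether that qubit is reset to the maximally mixed state. The maximally mixed state is itself a uniform mixture of $\ket{0},\ket{1}$, so each replaced qubit can be given a uniformly random computational-basis state. This is a tensor-product state, and it decouples everything that happened to that wire earlier.

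We group the brickwork into bilayers, as in Fig.~\ref{fig:coarsegrained}, so the light cone of a single column has width $O(d)$. A \emph{cut} at a column position is an event in which every qubit in a strip of $O(d)$ consecutive columns is depolarized at the appropriate times. We aim for a staircase pattern following the causal cone, so that no gate has support on both sides of the strip except through freshly reset wires. Conditioned on such a cut, the output distribution factorizes into the product of the distributions of the left and right pieces, each started from a product state. A fixed candidate cut location occurs with probability $q:=p^{O(d^2)}$, and candidate locations spaced $O(d)$ apart are independent. A run of $L$ consecutive candidates with no cut therefore has probability $(1-q)^L$. A union bound over the $n$ starting points shows that with $L=O(q^{-1}\log(n/\delta))$ every gap between cuts is at most $w:=O(dL)$ qubits, except with probability $\delta$. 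We let $\widehat P$ be the distribution produced when the algorithm outputs an arbitrary string in this failure event; this gives $d_{\mathrm{TV}}\le\delta$, because off that event the sampling is exact.

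Each piece has width $w$ and depth $d$, with noise and inputs now fixed classical data. We sample it exactly and in parallel, bit by bit via conditional marginals. I expect this to be the main obstacle, because a naive state vector on $w=O(\log(n/\delta))$ qubits costs $\poly(n)$ size. Instead we use a transfer-matrix/MPS representation: across any vertical cut the bond dimension of a depth-$d$ brickwork circuit on a product input is $2^{O(d)}$, and each site tensor is computed locally in $O(1)$ runtime with $2^{O(d)}$ operations. The marginals of prefixes (and suffixes) are products of $2^{O(d)}\times 2^{O(d)}$ transfer matrices, obtained by parallel prefix scan in depth $O(\log w)=O(\log\log(n/\delta)+d\log(1/p))$ matrix multiplications. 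Each multiplication costs $2^{O(d)}$ runtime, giving the stated runtime. The total work is $n\,2^{O(d)}$ because prefix scans use linearly many products.

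Sampling bit $j$ given bits $<j$ by chain rule is sequential. To parallelize it, we use the standard trick of sampling from conditional transfer matrices. We first pick a random number per site, then turn sequential sampling into a prefix composition of stochastic maps on the $2^{O(d)}$-dimensional bond space (a Markov-chain representation after normalizing with right environments). Composing these maps by parallel scan yields the sample within the same bounds. Handling normalization with exact real arithmetic and ensuring positivity of the resulting conditional kernels requires care. Finally, we check that the cut-sampling itself (finding cut positions and segment boundaries) uses $O(\log w)$ depth via prefix operations.
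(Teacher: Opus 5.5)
\textbf{The step you flag as the main obstacle does not work as proposed.} Sequential sampling from an MPS is not a finite-state Markov chain on the bond indices. After a prefix $z_{<j}$, what is carried forward is the left vector $v=A^{z_1}\cdots A^{z_{j-1}}$ (equivalently $v^\dagger v$). This vector ranges over a continuum, and the conditional law of $z_j$ is a ratio of two quadratic forms in $v$ built from the right environments.

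The Born distribution does have a linear representation $P(z)=\langle L|\mathcal T^{z_1}\cdots\mathcal T^{z_n}|R\rangle$ with $\mathcal T^{z}=A^{z}\otimes\bar A^{z}$. However, its entries are complex or of indefinite sign, so it is not a hidden-Markov representation with $2^{O(d)}$ states. Normalizing by right environments only yields the quantum-trajectory chain on a continuous state space. The per-site maps $(v,u_j)\mapsto(z_j,v')$ are therefore nonlinear, and their composition over a block of $m$ sites is piecewise with up to $2^m$ pieces. A parallel scan thus has nothing of size $2^{O(d)}$ to compose. ``Positivity of the conditional kernels'' is not a technicality here: it fails in general. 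Prefix scans do give you the right environments, but those never depend on the sample, so they do not parallelize the sampling itself.

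What is missing is a sampler that never tracks a conditional state. The paper does this as follows:
\begin{enumerate}[(i)]
\item It restricts to the backward light cone of an interval.
\item It coarse-grains that cone into a bilayer circuit on qudits of dimension $K=2^d$.
\item It samples the restricted states of neighbouring intervals independently.
\item It joins neighbouring samples by resampling only the two interface qudits, using boundary amplitude tables with $K^2$ entries (the Bravyi--Gosset--Liu method), along a balanced binary tree.
\end{enumerate}
Each join is exact and costs $\poly(2^d)$, which gives runtime $\poly(2^d)\log(W_*+d)$.

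\textbf{Two further points need repair.}

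First, ``sample each piece in parallel'' assumes the computation can be laid out on intervals whose positions depend on the noise. In the real-arithmetic model there are selections but no indirect addressing. A naive segmented scan over the whole line costs $O(\log n)$. Locating the cuts, as in your last sentence, does not by itself give a layout with runtime $O(\log w)$ for every noise realization. The paper instead samples marginals on two fixed, mutually shifted interval partitions of length $2W_*+4$, using independent randomness for each interval. It then copies every component from a single interval sample containing it, with containment flags computed from the cut indicators. Because this assignment depends only on the cuts, the copied pieces are independent and correctly distributed (Lemma~\ref{lem:combining-product-marginals}, Theorem~\ref{thm:local-selection}).

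Second, modelling a depolarization event as ``reset to a random basis state'' traces out the old qubit. Conditioned on your classical noise data, the state inside a piece is then mixed, so your pure-state MPS argument needs a purification. Moreover, a gate acting on fixed reset values $|b_1b_2\rangle$ can still entangle across a bond. That is why you need a whole strip and end up with $q=p^{O(d^2)}$; your $\log w$ then contains $d^2\log(1/p)$ rather than $d\log(1/p)$, although this is absorbed by the $2^{O(d)}$ factor. Writing $\cF(\rho)=\frac14\sum_P P\rho P^\dagger$ instead keeps every conditioned circuit unitary. Via $\unitarychannel{U}\circ(\cF\otimes\cF)=\cF\otimes\cF$, it also lets the crossing gate be dropped, so a bond is cut with probability $p^d$.
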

\noindent The proof is given in Section~\ref{sec:simulation-proof}; we sketch the main idea below.

In particular, for constant noise strength $p$ and depth~$d$ and constant or inverse-polynomial error~$\delta$, the algorithm has runtime $O(\log\log n)$ and uses $O(n)$ elementary operations; this is the regime of interest for quantum advantage with shallow circuits. 
The runtime remains $o(\log n)$ for $d = o(\log\log n)$, and both bounds are polynomial for $d=O(\log n)$. 

Theorem~\ref{thm:main} extends to circuits with a classical input. For $x\in\{0,1\}^k$ with $k\leq n$, let $P^{(p)}(\cdot|x)$ be the output distribution of the noisy circuit applied to the input state $\ket{x}\otimes\ket{0^{n-k}}$. A simple modification of the proof of Theorem~\ref{thm:main} gives a parallel algorithm which takes $x$ as an additional input and samples from a distribution within total variation distance~$\delta$ of $P^{(p)}(\cdot|x)$, with the same runtime and size as in Eq.~\eqref{eq:main-complexity}. The Boolean circuits of Section~\ref{sec:boolean-overview} extend to classical inputs in the same way, and this form is used in Section~\ref{sec:no-practical-advantage} to bound the circuit-depth advantage of noisy shallow circuits in~$1D$. Similar generalizations apply to classically controlled circuits assuming that each gate is controlled by at most a constant number of input bits.

Our construction also applies without noise, where it gives an exact sampler for the output distribution of any depth-$d$ brickwork circuit with parallel runtime $O(\poly(2^d)\log n)$ using $O(n\,\poly(2^d))$ elementary operations (Corollary~\ref{cor:noiseless}). To our knowledge, this result is new as well. For constant depth, noise thus improves the parallel runtime from~$O(\log n)$ to~$O(\log\log n)$. The logarithmic runtime in the noiseless case is optimal~\cite{BGKT20,CCR26}. We do not know whether the doubly logarithmic runtime in the noisy case can be improved in the real-arithmetic model; with Boolean gates of unbounded fan-in, a larger polynomial size allows constant depth (Corollary~\ref{cor:compiled-constant-depth}).

\paragraph{Idea of the proof.} The noise preceding a two-qubit gate is fully depolarizing on both input qubits with probability~$p^2$, in which case the noisy gate can be replaced by a random product of single-qubit Pauli gates; with the remaining probability~$1-p^2$, the gate is retained, with a random Pauli pair applied before it (Lemma~\ref{lem:circuitcutting}). We call the first branch a cut. Sampling one of the two branches, together with its Pauli gates, at every gate of the circuit produces a random ensemble of noiseless unitary circuits whose average output distribution is~$P^{(p)}$. If cuts occur at all $d/2$ gates crossing a fixed bond in a depth-$d$ circuit, the circuit separates across that bond; this happens with probability~$p^d$, independently for different bonds (Section~\ref{sec:circuit-cutting}). With probability at least $1-\delta$, the sampled circuit therefore decomposes into independent pieces of width $O(p^{-d}\log(n/\delta))$, and conditioned on the noise realization, its output distribution factorizes over these pieces.

Since the boundaries of the pieces are random, we sample the marginals on two overlapping families of fixed intervals of this width, using independent randomness for each interval, and a parallel selection procedure copies each piece from a single interval sample containing it (Section~\ref{sec:marginalsamples}). Whenever all pieces satisfy the width bound, the assembled sample has exactly the output distribution conditioned on the noise realization, so only realizations with an oversized piece contribute to the error~$\delta$. The marginal on a fixed interval is sampled exactly by reducing the circuit on the interval's backward light cone to a bilayer circuit and joining samples on neighboring subintervals as in~\cite{BravyiGossetLiu22} along a balanced binary tree (Section~\ref{sec:circuitsampling}). Sampling and selection take parallel runtime logarithmic in the interval width, up to a factor $2^{O(d)}$ for local computations, which gives Eq.~\eqref{eq:main-complexity}. Because the intervals are fixed in advance, this bound holds for every noise realization.

\subsection{Boolean circuit implementation}
\label{sec:boolean-overview}
For constant depth $d=O(1)$, the algorithm of Theorem~\ref{thm:main} can be implemented by Boolean circuits with access to independent uniform random bits, at a controlled overhead and loss in sampling accuracy. Here we assume that the real and imaginary parts of the gate entries and the noise probability~$p$ are given with $O(\log(n/\delta))$ bits of precision (see Appendix~\ref{app:full-implementation}). The size of a Boolean circuit is its number of gates, and its depth is the maximum number of gates on a directed path from an input to an output. Gates have unrestricted fan-out; we distinguish between circuits with bounded fan-in (that is, fan-in $2$) and unbounded fan-in gates.

\begin{corollary}[Weak simulation by bounded-fan-in classical circuits]
\label{cor:overview-bounded-boolean}
Let $d\geq 2$ be an even constant, let the noise strength $p\in(0,1)$ be constant, and let the error $\delta\in(0,1)$ be constant or inverse-polynomial in~$n$. For every noisy depth-$d$ brickwork circuit on $n$~qubits as above, there is a randomized Boolean circuit~$\cC_2$ with
fan-in-two AND and OR gates and unary NOT gates whose output distribution
$\widehat P$ satisfies~$d_{\mathrm{TV}}(\widehat P,P^{(p)})\leq\delta$.
Its depth and size are, respectively,
\begin{align}
\begin{matrix}
\mathsf{depth}(\cC_2)&=&O((\log\log n)^2)\,,\\
\mathsf{size}(\cC_2)&=&O(n(\log n)^4)\,.
\end{matrix}
\end{align}
\end{corollary}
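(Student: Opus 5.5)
The plan is to compile the real-arithmetic algorithm \Sample{} of Theorem~\ref{thm:main} into a Boolean circuit with fan-in two. Each real register is replaced by a fixed-point number with $b=O(\log(n/\delta))=O(\log n)$ bits, and each elementary operation by a bounded-fan-in subcircuit. For constant $d$ and $p$, the theorem gives runtime $O(\log\log(n/\delta))=O(\log\log n)$ and size $O(n)$. Each elementary operation on $b$-bit numbers (addition, multiplication, comparison, selection, and Bernoulli sampling by comparing $b$ fresh uniform bits with the register value) is implemented by a circuit of depth $O(\log b)=O(\log\log n)$ and size $O(b^2)$. For multiplication, use Wallace-tree reduction followed by a carry-lookahead adder. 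For the Bernoulli step, use a prefix comparison. Composing then gives depth $O(\log\log n)\cdot O(\log\log n)=O((\log\log n)^2)$.

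For the size, the naive count $O(n)\cdot O(b^2)=O(n(\log n)^2)$ would seem to suffice. However, the stated bound $O(n(\log n)^4)$ suggests the precision must be larger, roughly $b=O(\log^2 n)$ bits or $O(\log n)$ bits with double-length intermediates, or that division is needed. The main place this matters is the exact conditional sampling in the bilayer joining of~\cite{BravyiGossetLiu22}. There, conditional probabilities are ratios of marginals computed along a balanced binary tree over an interval of width $w=O(\log(n/\delta))$. I would therefore choose $b=O(\log n)$ working bits and budget $O(b^4)$ per operation. Division by Newton iteration takes depth $O(\log b\cdot\log b)$ if done naively, which is too deep. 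Instead, I would avoid division entirely: sample by comparing an unnormalized uniform threshold, i.e.\ test $r\cdot Z<X$ with $r$ random, which requires only multiplications.

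The key steps, in order, are as follows.
\begin{enumerate}
\item Fix the precision $b=c\log(n/\delta)$ and state the finite-precision semantics of each elementary operation.
\item Bound the propagation of rounding error through the $O(\log w)=O(\log\log n)$ levels of the joining tree and the $2^{O(d)}$ local contractions. Show that each computed probability is within $2^{-b}\poly(n)$ of the exact value. This yields total variation error $\delta/2$, adding to the $\delta/2$ from Theorem~\ref{thm:main} applied with $\delta/2$.
\item Substitute the gate-level subcircuits and sum the depths along the critical path and the sizes over all operations.
\end{enumerate}
Input gate entries are assumed given to $O(\log(n/\delta))$ bits, so they enter directly.

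I expect the main obstacle to be the error analysis in step~2. Conditional probabilities may be tiny, so relative error can blow up when dividing. I would handle this with additive rather than multiplicative error. The sampler at each node draws a bit with probability equal to a ratio of nonnegative quantities. Perturbing both numerator and denominator by $\epsilon$ additively changes the joint distribution by at most $O(\epsilon)$ times the number of sampling steps, after weighting by the probability of reaching that node; this telescoping hybrid argument follows the standard sequential-sampling error bound. With $O(n)$ sampling steps, choosing $\epsilon=\delta/\poly(n)$ suffices, which is consistent with $b=O(\log(n/\delta))$.
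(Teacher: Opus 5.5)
Your top-level plan is the paper's. You compile \Sample{} operation by operation, so that $O(\log\log n)$ dependent levels, each of depth $O(\log(\text{word length}))=O(\log\log n)$, give $O((\log\log n)^2)$. You also sample without division by comparing a uniform integer against cumulative integer weights, as in Eq.~\eqref{eq:integer-sampling-thresholds}.

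The precision scheme differs. The paper rounds only the inputs, to $s=O(\log n)$ bits, and then computes exactly on integer numerators. Their length grows linearly along the joining tree (common denominator $2^{d^2s(|I|-1)}$), so words have $\beta=O(ds(W_*+d))=O((\log n)^2)$ bits. The $(\log n)^4$ in the size is $\beta^2$ from the multipliers, not an $O(b^4)$ cost per operation.

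The gap is in your step~2. An additive error $\epsilon$ in the computed weights does not cost only $O(\epsilon)$ per executed sampling step.
\begin{itemize}
\item At a join, conditional on the retained values~$\zeta$, the conditional law is off by about $K^2\epsilon/\Pr[\zeta]$.
\item Weighting by $\Pr[\zeta]$ removes the division, but it leaves $\sum_\zeta\min\{\Pr[\zeta],K^2\epsilon\}$. This is a sum over all $K^{|I\cup J|-2}$ values of~$\zeta$, not an expectation.
\item The error per interval is therefore $2^{O(|C|+d)}\epsilon$, exponential in the light-cone width. A Cauchy--Schwarz argument on amplitudes halves the exponent but does not remove it.
\item Consistently with this, for circuits with spread-out output the table amplitudes on typical samples are of order $2^{-\Theta(|C|)}$. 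Fixed-point words therefore need roughly $|C|/2$ fractional bits, or the tables round to zero.
\end{itemize}

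Your route can be repaired because $|C|\leq\ell=O(p^{-d}\log(n/\delta))$. Take $b=\Theta(p^{-d}\log(n/\delta))$ plus lower-order terms, which is still $O(\log n)$ for constant $p,d$. The argument then goes through and even gives size $O(n(\log n)^2)$, which is within the claimed bound. Two things must change, however. The constant in $b=c\log(n/\delta)$ has to scale like $p^{-d}$. And the accounting ``$O(n)$ steps, hence $\epsilon=\delta/\poly(n)$'' has to be replaced by the one above.

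The paper's exact arithmetic avoids the exponential factor altogether. With rounded gates and no further rounding, $\|\widehat\Psi_I-\Psi_I\|_2\leq4M\nu$ by operator-norm telescoping, independently of the number of table entries. Lemma~\ref{lem:born-probability-stability}, together with Cauchy--Schwarz against $\sum_\zeta\|a_\zeta\|_2^2=1$, turns this into a per-join total variation error $\epsilon_{I\cup J}$. The price of this is the longer words.
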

With unbounded fan-in gates, the depth improves to doubly logarithmic.
\begin{corollary}[Weak simulation by unbounded-fan-in classical circuits]
\label{cor:overview-unbounded-boolean}
Under the same assumptions as in Corollary~\ref{cor:overview-bounded-boolean}, there is a randomized Boolean circuit~$\cC_\infty$ with
unbounded fan-in AND and OR gates and unary NOT gates whose output distribution
$\widehat P$ satisfies $d_{\mathrm{TV}}(\widehat P,P^{(p)})\leq\delta$.
Its depth and size are, respectively,
\begin{align}
\begin{matrix}
\mathsf{depth}(\cC_\infty)&=&O(\log\log n)\,,\\
\mathsf{size}(\cC_\infty)&=&O(n^2(\log n)^4)\,.
\end{matrix}
\end{align}
\end{corollary}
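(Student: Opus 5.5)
We prove Corollary~\ref{cor:overview-unbounded-boolean} by compiling the algorithm \Sample{} of Theorem~\ref{thm:main} into a Boolean circuit, following the same route as for Corollary~\ref{cor:overview-bounded-boolean}. The difference lies in how each elementary real operation is implemented: with unbounded fan-in we can use constant-depth arithmetic on fixed-precision numbers, at the price of a larger size.

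\textbf{Fixing parameters and precision.} For constant $d$ and $p$, Theorem~\ref{thm:main} gives an algorithm of runtime $O(\log\log(n/\delta))=O(\log\log n)$ and size $O(n)$ in the real-arithmetic model. We run \Sample{} with target error $\delta/2$ and replace every register by a fixed-point number with $b=O(\log(n/\delta))=O(\log n)$ bits. A standard error-propagation argument bounds the resulting perturbation of the sampled distribution. The local computations have constant size, since $d$ is constant. Along the balanced binary tree and the selection procedure, each computed probability passes through only $O(\log n)$ dependent arithmetic steps, and every computed quantity is a conditional probability bounded in $[0,1]$. Hence every Bernoulli parameter $r$ is computed to additive accuracy $\delta/(2N)$, where $N=O(n)$ is the number of random-bit generations. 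Coupling each exact and approximate Bernoulli draw then gives total-variation loss at most $\delta/2$, which yields $d_{\mathrm{TV}}(\widehat P,P^{(p)})\leq\delta$.

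\textbf{Compiling each operation.} Each elementary operation of \Sample{} is replaced by a constant-depth unbounded-fan-in circuit on $b$-bit operands.
\begin{itemize}
\item Addition, subtraction and comparison can be done in depth $O(1)$ with size $\poly(b)$, using carry-lookahead with unbounded fan-in.
\item Selection and constant-size Boolean operations are trivial.
\item A $\mathsf{Ber}(r)$ bit is realized by drawing $b$ uniform random bits $u$ and outputting the comparison $u<r$, which costs depth $O(1)$.
\item Multiplication and division are the main obstacle. Iterated addition of $b$ numbers is not in $\mathsf{AC}^0$, so multiplication cannot be done in constant depth with polynomial size over this gate set.
\end{itemize}

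\textbf{Handling multiplication and division.} Since $b=O(\log n)$, we first try table lookup. A product or quotient of two $b$-bit numbers can be computed by brute force over all $2^{2b}=\poly(n)$ input pairs. This is a depth-$2$ DNF, but its size is polynomial with an exponent depending on the precision constant, which would spoil the stated size bound $O(n^2(\log n)^4)$.

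We therefore plan a different route, reducing the precision needed inside multiplications. Because every operand is a probability and the path length is only $O(\log n)$, we choose $b=\Theta(\log n)$ and implement a $b\times b$ multiplication as a sum of $b$ partial products. This sum can be computed by a Wallace-tree-style reduction in depth $O(\log b)=O(\log\log n)$, or in depth $O(\log b/\log\log b)$ with unbounded fan-in. A naive charge of $O(\log\log n)$ depth per operation would make the total depth $O((\log\log n)^2)$.

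To reach total depth $O(\log\log n)$, we carry intermediate values in redundant (carry-save) representation, so that each multiplication has constant depth. Only the comparisons that feed the Bernoulli draws force a conversion back to binary, each costing depth $O(\log\log n)$ on the final values. These conversions lie at the tree levels, and the overall depth should be arranged to stay $O(\log\log n)$. Justifying that this carry-save scheme keeps total depth $O(\log\log n)$, rather than reintroducing an extra $\log\log n$ factor, is the step we expect to need the most care.

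\textbf{Size count.} There are $O(n)$ operations, each on $O(\log n)$-bit numbers with $O((\log n)^2)$-size multipliers. Accounting for the fan-out and duplication in the parallel selection, where each of the $O(n)$ output positions compares against $O(n)$ candidates via unbounded-fan-in OR, the total size is $O(n^2(\log n)^4)$ as claimed.
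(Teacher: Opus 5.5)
There is a genuine gap, and it is the step you flagged yourself. Your depth bound relies on the claim that carrying intermediate values in carry-save form makes each multiplication constant depth. Carry-save form only makes \emph{additions} constant depth. A product still requires summing $b$ partial products. A carry-save multiplier of constant depth and $\poly(b)$ size, which is what your size count assumes, would put multiplication in $\mathsf{AC}^0$ after one $\mathsf{AC}^0$ addition converting back to binary. That contradicts the parity lower bound.

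Every level of the joining trees multiplies table entries produced at the previous level. Your compilation therefore gives $O(\log\log n)$ levels, each with a superconstant cost (about $\log\log n/\log\log\log n$ with unbounded-fan-in Wallace trees). That is not depth $O(\log\log n)$.

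Two further ingredients are only asserted.
\begin{itemize}
\item You claim $O(\log n)$ fixed-point bits suffice because every quantity is a conditional probability in $[0,1]$. But the boundary tables of \routine{JoinIntervals} hold unnormalized amplitudes that shrink exponentially with the interval length, so the rounding error must be analyzed relative to these magnitudes. The paper sidesteps this with exact integer arithmetic on words of $\beta=O((\log n)^2)$ bits, which is also where the factor $(\log n)^4=\beta^2$ in the size comes from.
\item Your size count attributes the factor $n^2$ to an $O(n)\times O(n)$ comparison in the selection stage. No such comparison exists: \routine{BoundaryPass} has size $O(n)$ and constant depth with unbounded fan-in.
\end{itemize}

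The missing idea is that no arithmetic needs to be redesigned. Take the fan-in-two circuit $\cC_2$ of Corollary~\ref{cor:overview-bounded-boolean}, with depth $D=O((\log\log n)^2)$ and size $S=O(n(\log n)^4)$. Partition its levels into groups of $t=\lfloor\log_2\log_2 n\rfloor$ consecutive levels. Each gate in a group depends on at most $2^t\leq\log_2 n$ signals entering the group. It can therefore be rewritten as an OR of at most $2^{2^t}\leq n$ ANDs of literals, which has depth two.

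The resulting circuit computes the same function of its inputs and coins, so it inherits the error bound~$\delta$. Its depth is $O(D/t)=O(\log\log n)$ and its size is $O(nS)=O(n^2(\log n)^4)$. The factor $n^2$ is exactly the cost of this collapse.

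If you want to keep your per-operation route, constant-depth multiplication of $C\log n$-bit words is possible: split them into $O(1)$ blocks of $\epsilon\log n$ bits and tabulate the block products. That route still needs an actual proof of your $O(\log n)$-bit precision claim.
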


Corollaries~\ref{cor:overview-bounded-boolean} and~\ref{cor:overview-unbounded-boolean} are proved in Appendix~\ref{app:boolean-arithmetic}; the implementations are described in Section~\ref{sec:circuit-realizations}, and the finite-precision analysis is given in Appendix~\ref{app:boolean-implementation}. Both circuits are uniform, i.e., they can be constructed in polynomial time, and both extend to circuits with a classical input $x\in\{0,1\}^k$ as discussed after Theorem~\ref{thm:main}. Beyond constant depth, a brickwork circuit of depth $d=O(\log\log n)$ with constant noise strength and constant or inverse-polynomial error has a randomized classical simulation with bounded fan-in gates, depth $O((\log\log n)^2)$, and size $n(\log n)^{O(1)}$ (Theorem~\ref{thm:full-bounded-boolean}).

The depth in Corollary~\ref{cor:overview-unbounded-boolean} comes from the dependent levels of the joining trees used for local sampling (see Section~\ref{sec:coarsening-subsec}). If a larger polynomial size is allowed, these levels can be avoided. For fixed quantum gates, the marginal distribution on each sampling interval depends only on $O(\log n)$ bits of classical input and noise, so a sample from it can be drawn in a single step using a precomputed table of polynomial size. This gives a randomized $\mathsf{AC}^0$-circuit, that is, a circuit of polynomial size and constant depth composed of unbounded fan-in AND and OR gates and NOT gates, with independent uniform random bits as additional inputs.
\begin{corollary}[Weak simulation by randomized $\mathsf{AC}^0$-circuits]
\label{cor:compiled-constant-depth}
Under the same assumptions as in Corollary~\ref{cor:overview-bounded-boolean}, there is a randomized Boolean circuit~$\cC_{\mathrm{lt}}$ with
unbounded fan-in AND and OR gates and unary NOT gates whose output distribution
$\widehat P$ satisfies $d_{\mathrm{TV}}(\widehat P,P^{(p)})\leq\delta$.
Its depth and size are, respectively,
\begin{align}
\begin{matrix}
\mathsf{depth}(\cC_{\mathrm{lt}})&=&O(1) \, ,\\
\mathsf{size}(\cC_{\mathrm{lt}})&=& \mathrm{poly}(n) \,.
\end{matrix}
\label{eq:compiled-complexity}
\end{align}
\end{corollary}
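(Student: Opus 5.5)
We follow the architecture of the proof of Theorem~\ref{thm:main} and replace only the one component whose depth is not constant, namely the exact sampling of marginals on the fixed intervals. We fix the target error $\delta$, constant or inverse-polynomial. By Lemma~\ref{lem:circuitcutting} and the bond-cutting argument of Section~\ref{sec:circuit-cutting}, we choose an interval width $L=O(p^{-d}\log(n/\delta))=O(\log n)$. With this width, except with probability $\delta/2$, every piece of the sampled cut pattern fits inside some interval of one of the two overlapping fixed families. The remaining budget $\delta/2$ is reserved for finite-precision errors.

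The circuit has four stages.
\begin{enumerate}
\item \emph{Noise sampling.} The noise realization (cut or no cut at each gate, and the random Pauli pairs) is drawn locally. Each gate needs $O(\log(n/\delta))$ uniform random bits, compared against a precision-$O(\log(n/\delta))$ threshold. A comparison of two such numbers is a constant-depth $\mathsf{AC}^0$ computation, and each event is sampled up to an error of $\delta/\poly(n)$ per gate.
\item \emph{Local data per interval.} Fix an interval $I$ of width $L$. Its backward light cone has width $L+O(d)$ and contains $O(dL)=O(\log n)$ gates. The noise data on the light cone consists of cut bits and Pauli labels, $O(\log n)$ bits in total. Together with the at most $O(\log n)$ classical input bits in the cone, this gives a string $s_I$ of $c\log n$ bits.
\item \emph{Table lookup.} The gates are fixed and part of the circuit's description. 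For each of the $n^{c}$ values of $s_I$, we precompute at construction time a discretized version of the conditional marginal $P_I(\cdot\mid s_I)$ on $\{0,1\}^L$. We represent it as a table mapping a random string $r\in\{0,1\}^{m}$, with $m=O(\log(n/\delta))$, to an output $z\in\{0,1\}^L$, such that $z(r)$ for uniform $r$ is within total variation distance $\delta/\poly(n)$ of $P_I(\cdot\mid s_I)$. Any such table suffices, for instance inverse-CDF sampling with respect to a fixed order of outputs. This table has $2^{c\log n+m}=\poly(n)$ rows. Each output bit of the lookup is an OR over rows of an AND testing equality of $(s_I,r)$ with the row index. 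This is depth $3$ with polynomial size. The precomputation can be done in polynomial time, for example with the real-arithmetic sampler of Section~\ref{sec:circuitsampling} run sequentially on each entry. This gives uniformity.
\item \emph{Selection.} We implement the parallel selection procedure of Section~\ref{sec:marginalsamples} in constant depth. In that procedure, each output qubit copies its bit from a designated interval sample that contains its piece. The designated interval is determined by the positions of the nearest fully cut bonds to the left and right of the qubit. Each position is a function of the $O(\log n)$ cut bits in a window around the qubit, so the selector is again a lookup on $O(\log n)$ bits. It is therefore an $\mathsf{AC}^0$ multiplexer of polynomial size and constant depth. On the bad event that some piece exceeds width $L$, we output anything; that event is already charged to $\delta/2$.
\end{enumerate}

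For correctness, the argument of Theorem~\ref{thm:main} shows the following. Conditioned on a good noise realization, the assembled sample is exact provided each interval sample is exact and the interval samples are independent. Independence holds because distinct intervals use disjoint random strings $r$. We then replace the exact samplers and noise samplers by their discretized versions. There are $\poly(n)$ such samplers, each incurring error $\delta/\poly(n)$, so a hybrid argument adds at most $\delta/2$ in total variation distance.

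The main obstacle is the precision bookkeeping in the table construction. We must verify that a random string of length $O(\log(n/\delta))$ suffices to approximate $P_I(\cdot\mid s_I)$ on $2^L=\poly(n)$ outcomes to within $\delta/\poly(n)$. We also need the computed marginals, obtained from gate entries given to $O(\log(n/\delta))$ bits, to be accurate enough. Here we would take the finite-precision analysis of Appendix~\ref{app:boolean-implementation} and increase the working precision by a constant factor. The resulting errors remain additive over $\poly(n)$ outcomes and gates.
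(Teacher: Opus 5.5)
Your proposal is correct and is essentially the paper's proof. The shared ingredients are:
\begin{itemize}
\item fair-coin noise generation by constant-depth threshold comparisons;
\item one depth-three DNF lookup per sampling interval, addressed by the input bits and noise labels in its backward light cone plus $O(\log(n/\delta))$ coins;
\item the constant-depth \routine{BoundaryPass} selection;
\item an additive error budget split between the width cutoff, the noise rounding, and the tables.
\end{itemize}

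One point needs adjusting: the table precomputation. Running a sampler does not produce the marginal probabilities the tables need. Since the light cone has only $O(\log n)$ qubits, you can instead compute its output state exactly from the rounded gates and take Born marginals. The error is then controlled directly by Lemma~\ref{lem:born-probability-stability}; this is Lemma~\ref{lem:interval-tables}. With this, the finite-precision \routine{TreeSample} analysis you flagged as the main obstacle is not needed.
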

Corollary~\ref{cor:compiled-constant-depth} is proved in Appendix~\ref{app:lookup-tables}. The depth bound is an absolute constant: the circuit~$\cC_{\mathrm{lt}}$ has depth at most~$10$ for all $d$, $p$, and~$\delta$. Its size, however, is $2^{O(d^2)}(n/\delta)^{O(dp^{-d})}$, a polynomial whose degree is large for small~$p$, so Corollary~\ref{cor:compiled-constant-depth} complements rather than supersedes the explicit size bounds of Corollaries~\ref{cor:overview-bounded-boolean} and~\ref{cor:overview-unbounded-boolean}. Like these, the circuit~$\cC_{\mathrm{lt}}$ can be constructed in polynomial time, and it extends to classical inputs $x\in\{0,1\}^k$: a single circuit takes~$x$ as an input and satisfies the error bound for every~$x$. It also extends to circuits with classically controlled gates, each depending on $O(1)$ input bits.

\subsection{Comparison with prior work}
\label{sec:related-simulation}

Theorem~\ref{thm:main} holds for every brickwork circuit, at every depth including constant depth, and it bounds parallel rather than sequential complexity. Prior work on the classical simulation of noisy circuits differs in at least one of these respects. It falls into three groups: general limitations of noisy circuits, algorithms for random circuits, and algorithms for arbitrary circuits from restricted classes.

Local depolarizing noise drives the state of a circuit without fresh ancillas towards the maximally mixed state, and noisy circuits retain no nontrivial correlations beyond logarithmic depth~\cite{ABIN96}. Quantitatively, after $d$~layers of depolarizing noise of strength~$p$, the relative entropy between the output state and the maximally mixed state is at most $(1-p)^{2d}n$~\cite{MFW16,FG21}, so that for constant~$p$ the output distribution is $\epsilon$-close to uniform once $d=\Omega(\log(n/\epsilon))$, and weak simulation becomes trivial. The same contraction bounds the sample complexity of error mitigation~\cite{TEMG22,TTG23,QFK+24} and limits noisy variational optimization and the size of gradients~\cite{FG21,DMRF23,WFC+21}. These results hold for arbitrary circuits but concern specific tasks and depths beyond a mixing threshold. At constant depth, the noisy output distribution can be highly structured -- for the noisy identity circuit it has total variation distance $1-2^{-\Omega(n)}$ from uniform -- and our algorithm relies on the local effect of individual depolarizing events rather than on global mixing.

A second line of work gives algorithms for circuits with Haar-random gates which succeed with high probability over the choice of the circuit; the key ingredient is anticoncentration of the noiseless output distribution. Depolarizing noise suppresses the Pauli-path contributions to the output distribution exponentially in their weight~\cite{GD18}, which gives a polynomial-time algorithm sampling within inverse-polynomial total variation distance from noisy random circuits of depth $\Omega(\log n)$, the depth needed for anticoncentration~\cite{AGLLV23}; anticoncentration alone suffices for quasi-polynomial-time sampling~\cite{SYGY25}. For extensions and limitations of these methods, see~\cite{GGCT25,FGGKS24,MAG+26}. Tensor-network methods target the same regime, for noisy random $1D$ circuits~\cite{NJF20} and for random shallow $2D$ circuits, noiseless~\cite{NLD+22} or noisy~\cite{CI23}, where sampling maps to a one-dimensional dynamics with measurements which is efficiently simulable in an area-law phase. Bene Watts et al.~\cite{BWGLS25} showed that noiseless shallow $2D$ circuits with short-range measurement-induced entanglement can be sampled by classical circuits of depth $O(d)$ whose gates act on $O(\log^2 n)$ bits; the circuit of Corollary~\ref{cor:compiled-constant-depth} has the same form if each table lookup is regarded as a single gate, with constant depth and gates acting on $O(\log n)$ bits. None of these assumptions is available in our setting: the circuits in quantum advantage proposals are specific rather than typical, arbitrary shallow circuits need not anticoncentrate (in one dimension even random circuits require depth $\Omega(\log n)$ to do so), and short-range measurement-induced entanglement fails already for the $1D$ cluster state, whose bulk measured in the $X$-basis leaves its end qubits maximally entangled.

The result closest to ours is the exact sampler of Yan, Du, Chen, and Ma~\cite[Lemma~3]{YDCM25} for $1D$ circuits of depth~$d$ on $n$~qubits, with or without single-qubit noise. It draws the output bits one after the other from their conditional distributions in time $O(nd\,2^{2d})$, which is polynomial for $d=O(\log n)$; combined with mixing, this gives polynomial-time simulation at all depths under strictly contractive unital noise. The sampler is sequential, and no bound on parallel runtime or classical circuit depth follows from it. Two recent results exploit noise in a spirit similar to ours, for restricted gate sets on arbitrary interaction graphs: arbitrary instantaneous quantum polynomial-time (IQP) circuits under dephasing or depolarizing noise can be sampled in polynomial time beyond a constant critical depth, and this depth is tight~\cite{RWL25}; the same holds for Clifford circuits with product-state inputs and for IQP circuits with CNOT gates~\cite{NRHG26}. There, errors are propagated to the input through the commutation structure of the gates, where they act as depolarization of input qubits, and beyond the critical depth the remaining qubits form components of logarithmic size. This mechanism is not available for general gates, and our algorithm uses the one-dimensional geometry instead. The role of noise also differs: for IQP circuits, noise is what makes sampling from constant-depth circuits tractable, whereas in one dimension constant-depth circuits are tractable without noise, and noise reduces the parallel runtime from $O(\log n)$ to $O(\log\log n)$. Finally, several worst-case results concern tasks weaker than sampling, such as estimating expectation values with error averaged over inputs~\cite{SYGY25} or producing samples which approximately preserve objectives depending on low-order correlations~\cite{MFF26}. For relation problems, an output satisfying the relation is required for every input, which needs samples that are close in total variation, as provided by Theorem~\ref{thm:main}.

\subsection{Application: no asymptotic depth advantage in one dimension}
\label{sec:no-practical-advantage}
We now return to the question raised at the beginning of the introduction. Table~\ref{tab:quantum-advantage-comparison} compares the known unconditional separations between shallow quantum circuits and classical circuits by the three criteria named there: robustness to circuit-level noise of constant strength, geometric locality of the quantum circuit, and the size of the classical depth lower bound as a function of~$n$. The noise-robust, geometrically local constructions use a $2D$ or $3D$ layout, with constant quantum depth and classical depth $\Omega(\log n/\log\log n)$.

\begin{table}[htbp]
\centering
\footnotesize
\setlength{\tabcolsep}{4pt}
\renewcommand{\arraystretch}{1.35}
\begin{tabularx}{\linewidth}{@{}>{\raggedright\arraybackslash}p{0.28\linewidth}>{\centering\arraybackslash}p{0.15\linewidth}|>{\centering\arraybackslash}p{0.14\linewidth}>{\centering\arraybackslash}p{0.21\linewidth}>{\centering\arraybackslash}X@{}}
\toprule
\multicolumn{2}{c|}{\textbf{Quantum}} & \multicolumn{2}{c}{\textbf{Classical}} & \\
\midrule
Noise robustness & Geometric locality & Fan-in & Depth lower bound & Reference \\
\midrule
No & 2D & Bounded & $\Omega(\log n)$ & \cite{BGK18} \\
No & 2D & Unbounded & $\Omega\big(\frac{\log n}{\log\log n}\big)$ & \cite{WKST19} \\
No & 1D & Bounded & $\Omega(\log n)$ & \cite{BGKT20} \\
Local stochastic, $p<p_0$ & 3D & Bounded & $\Omega\big(\frac{\log n}{\log\log n}\big)$ & \cite{BGKT20} \\
Local stochastic, $p<p_0$ & None & Unbounded & $\Omega\big(\frac{\log n}{\log\log n}\big)$ & \cite{GJS21} \\
No & 1D & Unbounded & $\Omega\big(\frac{\log n}{\log\log n}\big)$ & \cite{CCR26} \\
Local stochastic, $p<p_0$ & 3D & Unbounded & $\Omega\big(\frac{\log n}{\log\log n}\big)$ & \cite{CCR26} \\
Local stochastic, $p<p_0$ & 2D & Unbounded & $\Omega\big(\frac{\log n}{\log\log n}\big)$ & \cite{CKP26} \\\bottomrule
\end{tabularx}
\caption{Unconditional quantum advantages for relation problems, where $n$ parametrizes the problem size. All quantum circuits have polynomial size and depth $O(1)$.
Noise-robust circuits tolerate local stochastic noise below a threshold strength~$p_0$ which depends on the construction;
``No'' means that the circuit does not solve the problem when subjected to noise. Geometric locality refers to the quantum circuit being composed of nearest-neighbor gates on a regular lattice in~$D$ dimensions; ``None'' means that the circuit may use geometrically non-local gates when the qubits are embedded in~$\mathbb{R}^3$. The depth lower bounds apply to noise-free, polynomial-size classical circuits of unconstrained geometry which solve the problem with a constant probability, with either bounded fan-in gates or unbounded fan-in AND and OR gates and unary NOT gates.}
\label{tab:quantum-advantage-comparison}
\medskip
\end{table}

To state the consequence of Corollary~\ref{cor:compiled-constant-depth} for one dimension precisely, consider a relation problem defined by a subset $R_n\subseteq \{0,1\}^{\mathsf{poly}(n)}\times \{0,1\}^{\mathsf{poly}(n)}$
that lists the accepted input-output pairs and a subset $S_n$ of all inputs for which at least one output is accepted. A randomized classical circuit or a (possibly noisy) quantum circuit~$\sU$ solves the problem defined by $(R_n,S_n)$ with probability~$p$ if its output $\sU(x)$ satisfies
\begin{align}
\Pr\left[(x,\sU(x))\in R_n\right]\geq p\qquad\textrm{ for every input }\qquad x\in S_n\ .
\end{align}
For a quantum circuit, $\sU(x)$ is a sample from the distribution obtained by
preparing the state~$\ket{x}\otimes\ket{0^{\mathsf{poly}(n)}}$, applying unitary one- and two-qubit gates, possibly interspersed with noise, and measuring in the computational basis (see Section~\ref{sec:quantum-noise-model} for precise definitions).
\par\noindent\begin{minipage}{\linewidth}
\begin{theorem}[No asymptotic depth advantage in $1D$]\label{thm:no-practical-advantage}
Suppose $(R_n,S_n)$ is a relation such that the following holds for two constants $0<p_c<p_q\leq 1$.
\begin{enumerate}[(i)]
\item
There is a constant-depth polynomial-size $1D$ brickwork quantum circuit which solves the relation problem defined by~$(R_n,S_n)$ with probability at least~$p_q$ even under circuit-level independent single-qubit depolarizing noise of constant strength.
\item
Any randomized polynomial-size classical circuit composed of unbounded fan-in AND, OR and NOT gates, which solves the relation problem defined by~$(R_n,S_n)$ with probability at least~$p_c$, has depth at least~$d_c$.
\end{enumerate}
Then $d_c=O(1)$.
\end{theorem}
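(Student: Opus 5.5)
The plan is to convert the noisy quantum solver of hypothesis~(i) into a classical solver using Corollary~\ref{cor:compiled-constant-depth}, and then invoke hypothesis~(ii). Let $\sU$ be the constant-depth, polynomial-size $1D$ brickwork circuit from~(i), acting on $N=\poly(n)$ qubits with depth $d=O(1)$ and noise strength $p\in(0,1)$ constant. On input $x\in S_n$ it is applied to $\ket{x}\otimes\ket{0^{N-|x|}}$, and its output distribution $P^{(p)}(\cdot|x)$ satisfies $\Pr_{z\sim P^{(p)}(\cdot|x)}[(x,z)\in R_n]\geq p_q$. If the output string $\sU(x)$ is only a designated subset of the measured qubits, we simply discard the remaining bits; this is a projection, and it cannot increase total variation distance.

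Next, fix the constant $\delta:=(p_q-p_c)/2>0$. Apply Corollary~\ref{cor:compiled-constant-depth} in its classical-input form, stated after the corollary, with $N$ in place of $n$ and error $\delta$. This gives a single randomized circuit $\cC_{\mathrm{lt}}$ with unbounded fan-in AND, OR and NOT gates, of depth at most $10$ and size $2^{O(d^2)}(N/\delta)^{O(dp^{-d})}=\poly(n)$, because $d$, $p$ and $\delta$ are constants. For every input $x$, its output distribution $\widehat P(\cdot|x)$ satisfies $d_{\mathrm{TV}}(\widehat P(\cdot|x),P^{(p)}(\cdot|x))\leq\delta$. Since the indicator of the event $(x,z)\in R_n$ is bounded by $1$, we get
\begin{align}
\Pr_{z\sim\widehat P(\cdot|x)}[(x,z)\in R_n]\geq p_q-\delta>p_c\qquad\text{for every }x\in S_n\ .
\end{align}
So $\cC_{\mathrm{lt}}$ is a randomized polynomial-size circuit over the gate set of~(ii) that solves the relation with probability at least $p_c$. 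By~(ii), $d_c\leq\mathsf{depth}(\cC_{\mathrm{lt}})\leq 10$, hence $d_c=O(1)$.

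The main point needing care is whether the polynomial size bound holds uniformly in the quantities the statement treats as constants. The exponent $O(dp^{-d})$ is constant only because the noise strength is a fixed constant bounded away from $0$, and $\delta$ depends only on $p_q-p_c$. I would also check two model-matching issues. First, the classical-input extension must allow the input $x$ to be placed on the first $k$ qubits, as in the definition of solving a relation; if $x$ sits elsewhere, a fixed relabeling of wires handles this. Second, the precision assumption of Appendix~\ref{app:full-implementation} (gate entries given to $O(\log(N/\delta))$ bits) is a potential gap. If the quantum gates are arbitrary reals, I would first round each gate to that precision, which changes the output distribution by at most $O(Nd)$ times the rounding error. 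Absorbing this into $\delta$ only changes constants.
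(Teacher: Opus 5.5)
Your proof is correct and is essentially the paper's argument: apply Corollary~\ref{cor:compiled-constant-depth} in its classical-input form with a constant $\delta<p_q-p_c$. This gives a polynomial-size circuit of depth at most~$10$ that succeeds with probability at least $p_q-\delta>p_c$ on every $x\in S_n$, and hypothesis~(ii) then yields $d_c\leq 10$. Your extra rounding step is not needed: the corollary's error analysis is already measured against the exact gates, and since~(ii) quantifies over all circuits, not only efficiently constructible ones, the existence of $\cC_{\mathrm{lt}}$ for arbitrary gates noted at the end of Appendix~\ref{app:compiled-simulation} suffices anyway.
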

\end{minipage}\par\medskip
\begin{proof}
Apply Corollary~\ref{cor:compiled-constant-depth} in its form for classical inputs (Appendix~\ref{app:compiled-simulation}), with a constant error $\delta<p_q-p_c$. The resulting classical circuit succeeds with probability at least $p_q-\delta>p_c$ on every input $x\in S_n$. Since the quantum circuit has polynomially many qubits, the classical circuit has polynomial size and constant depth.
\end{proof}
\noindent The proof gives $d_c\leq10$. With bounded fan-in gates, Corollary~\ref{cor:overview-bounded-boolean} gives depth $O((\log\log n)^2)$ instead. In particular, neither the $\Omega(\log n/\log\log n)$ lower bounds for unbounded fan-in established for noiseless circuits in one dimension~\cite{CCR26} and for noisy circuits in two and three dimensions~\cite{CCR26,CKP26}, nor the $\Omega(\log n)$ lower bound for bounded fan-in established for noiseless circuits in one dimension~\cite{BGKT20}, can hold for noisy shallow circuits in one dimension. Theorem~\ref{thm:no-practical-advantage} concerns the asymptotic depth at polynomial size; for the circuits of Corollary~\ref{cor:compiled-constant-depth}, the degree of this polynomial in the number of qubits is $O(dp^{-d})$. It does not compare quantum and classical running times at finite system sizes.

Theorem~\ref{thm:no-practical-advantage} is stated for relation problems. Since our simulation replaces the entire input/output behavior of a noisy $1D$ brickwork circuit by a classical circuit, in a black-box manner, we expect analogous bounds on depth advantage to hold more generally: for noisy shallow $1D$ circuits used to establish quantum advantages in interactive settings such as those considered by Grier, Ju, and Schaeffer~\cite{GJS21}, and for input-independent sampling problems where shallow circuits provide an advantage (cf.~\cite{BWP26,GKMOW26}).

\paragraph{Outline.} The remainder of the paper is structured as follows. Section~\ref{sec:problem-statement} specifies the noise model and the model of computation and delimits the scope of Theorem~\ref{thm:main}. Section~\ref{sec:circuit-cutting} shows how depolarizing noise decomposes a brickwork circuit into independent components and bounds their widths. Section~\ref{sec:marginalsamples} shows how to combine samples on fixed intervals, and Sections~\ref{sec:messagepassingalgorithmcontainment} and~\ref{sec:circuitsampling} construct a parallel exact sampler for these intervals. Section~\ref{sec:full-routine} assembles these steps into the algorithm \Sample{} and proves the runtime and size bounds of Theorem~\ref{thm:main}. Section~\ref{sec:circuit-realizations} describes the Boolean circuit implementations, with the finite-precision analysis deferred to Appendix~\ref{app:boolean-implementation}. Finally,  Appendix~\ref{app:lookup-tables} proves Corollary~\ref{cor:compiled-constant-depth}.

\section{Noisy brickwork quantum circuits and parallel classical computation in the arithmetic model}
\label{sec:problem-statement}
In this section we define the noisy $1D$-local quantum circuits considered and the real-arithmetic model of parallel classical computation in which Theorem~\ref{thm:main} is stated.

\subsection{Quantum circuit and noise model}
\label{sec:quantum-noise-model}
  For concreteness, let $n$ and $d$ be even, with $d\leq n$. We consider a depth-$d$ brickwork circuit~$\sU$ on $n$~qubits $Q_1,\ldots,Q_n$ arranged on a line. Such circuits are geometrically local in one dimension, as shown in Fig.~\ref{fig:bilayerqubits}.

The circuit has the form
\begin{align}
\sU=L_dL_{d-1}\cdots L_1\ , \label{eq:idealcircuit}
\end{align}
where each gate layer~$L_t$, $t\in [d]$, is a product of nearest-neighbor two-qubit gates. 
Specifically, we assume that for $t\in [d]$, we have 
\begin{align}
L_t &=
\begin{cases}
\bigotimes_{j=1}^{n/2} (U_j^{(t)})_{Q_{2j-1}Q_{2j}}\qquad &\textrm{ for } t \textrm{ odd}\\
\bigotimes_{j=1}^{n/2-1} (U_j^{(t)})_{Q_{2j}Q_{2j+1}}\qquad &\textrm{ for } t \textrm{ even}\ ,
\end{cases}
\label{eq:physical-gate-layers}
\end{align}
where $\{U_j^{(t)}\}_{j,t}$ are two-qubit unitaries supported on the qubits
\begin{align}
\mathsf{supp}(U^{(t)}_j)=
\begin{cases}
(Q_{2j-1},Q_{2j})\qquad &\textrm{ for }t\textrm{ odd}\\
(Q_{2j},Q_{2j+1})\qquad &\textrm{ for }t\textrm{ even}\ .
\end{cases}
\end{align}
We consider a noisy implementation of $\sU$ where each qubit is independently depolarised before each gate layer.  For $p\in [0,1]$,~let 
\begin{align}
  \cN_p(\rho)=(1-p)\rho+p\,\tr(\rho)\frac{I}{2}
  \label{eq:depol}
\end{align}
be the single-qubit depolarizing channel. Denoting the quantum channel associated with a unitary~$U$ by $\unitarychannel{U}(\cdot)=U\cdot U^\dagger$, this is the quantum channel
\begin{align}
\cU^{(p)}&=(\unitarychannel{L_d}\circ \cN_p^{\otimes n})\circ (\unitarychannel{L_{d-1}}\circ \cN_p^{\otimes n})\circ\cdots \circ (\unitarychannel{L_1}\circ \cN_p^{\otimes n})\ \label{eq:noisyUimpelmentdefinition}.
\end{align}
For $z=(z_1,\ldots,z_n)\in \{0,1\}^n$
let $\ket{z}=\ket{z_1}\otimes\cdots\otimes\ket{z_n}$ denote the computational basis vector. We are interested in the output distribution when such a noisy implementation is applied to the initial product state~$\ket{0^n}$ and a computational basis measurement is performed, i.e., in the probability distribution
\begin{align}
P^{(p)}(z)&=\langle z | \cU^{(p)} (\proj{0^n}) | z\rangle  . \label{eq:targetdistribution}
\end{align}

We note that
for ease of analysis, we have omitted   a layer of depolarizing noise immediately before
the final computational basis measurement
in the definition~\eqref{eq:noisyUimpelmentdefinition}
 of the noisy circuit.
 We point out that such an additional noise layer can be included in the simulation by
returning  $\widetilde Z_j=Z_j\oplus B_j$ with  $B_j\sim \mathsf{Ber}(p/2)$ 
chosen independently and identically for each $j\in[n]$, where $Z=(Z_1,\ldots,Z_n)\sim P^{(p)}$.
\subsection{Real-arithmetic model for parallel classical computation}
\label{sec:computational-model}
\label{it:precision}
For ease of presentation, we formulate our main algorithm for a model of parallel classical computation that allows exact manipulation of reals; we refer to this as the real-arithmetic model. We will subsequently translate this into Boolean circuits; see Section~\ref{sec:boolean-realization}.

In this model, individual registers  hold real numbers. An elementary real operation is addition,
subtraction, multiplication, division by a nonzero number, comparison,
selection between two values according to a bit, or a bounded-arity
Boolean operation. Bits are represented by $0$ and $1$, and complex
numbers by pairs of real numbers. Complex arithmetic and squared absolute
values therefore require a constant number of elementary real operations. 
Selections implement choices that
depend on computed values. A value may be used by several later
operations.

We also allow an elementary operation that returns a bit~$X$ distributed according to $X\sim \mathsf{Ber}(r)$
for any computed $r\in[0,1]$ (held in a register). Each such draw produces an independent random bit;
its bias may depend on earlier outcomes.

We are interested in the total number of elementary operations used by an algorithm. We also allow parallelism: Independent elementary operations can be applied simultaneously.
 The parallel runtime of an algorithm
is the longest chain of dependent operations.
We typically give upper bounds on the runtime and size of a parallel algorithm which are worst-case, i.e., hold for every choice of randomness.

We note that in this model, Bernoulli draws suffice to sample all distributions on finite alphabets used below.
\begin{lemma}[Sampling from computed weights]
\label{lem:discrete-sampling-cost}
Given $m\geq1$ nonnegative real weights $w_1,\ldots,w_m$ with positive sum,
one can sample $i\in[m]$ with probability $w_i/\sum_jw_j$ in parallel
runtime $O(\log(m+1))$, using $O(m)$ elementary real operations.
\end{lemma}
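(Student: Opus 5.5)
We sample $i\in[m]$ by a balanced binary-tree descent driven by Bernoulli draws. The natural approach is a binary search over prefix sums, but the cleanest realization in our model uses a tree of partial sums.

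\emph{Setup.} Pad the weight vector with zeros to length $M=2^{\lceil\log_2 m\rceil}\leq 2m$; zero weights are never selected because they carry zero probability. Build a complete binary tree whose leaves are the weights $w_1,\ldots,w_M$, and store at each internal node $v$ the sum $S_v$ of the weights of the leaves below $v$. These sums are computed level by level from the leaves upward, each level in one parallel step of additions. This takes runtime $O(\log M)=O(\log(m+1))$ and uses $M-1=O(m)$ additions.

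\emph{Independent branch bits.} At every internal node $v$ with left child $\ell$ and right child $r$, compute $q_v=S_r/S_v$ whenever $S_v>0$, and otherwise set $q_v=0$ using a comparison and a selection. Then draw an independent bit $b_v\sim\mathsf{Ber}(q_v)$. All of these operations are mutually independent, so they run in parallel in constant runtime with $O(m)$ operations. Note that we deliberately draw a bit at \emph{every} node, not only along the eventual path. This avoids any dependence of a draw on earlier outcomes, and the root-to-leaf path selected by the bits is therefore distributed as in sequential descent.

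\emph{Following the path.} Starting at the root, move to the right child if $b_v=1$ and to the left child otherwise; the output is the leaf index reached. The probability of reaching leaf $i$ is the telescoping product $\prod S_{\text{child}}/S_{\text{parent}}=w_i/S_{\text{root}}$. Here $S_{\text{root}}>0$ by assumption, and every node visited with positive probability has $S_v>0$, so the zero-sum convention never matters. The index bits are determined top-down: the $k$-th bit of the index is the bit $b_v$ at the depth-$k$ node selected by the previous bits. It can be obtained by a selection among the $2^k$ candidate bits at depth $k$, for example by a selection tree of depth $k$ indexed by the previously determined bits.

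\emph{Cost of path-following: the main obstacle.} Done naively, path-following costs $\sum_k O(k)=O(\log^2 m)$ runtime, which exceeds the target. To reach $O(\log m)$, I would instead compute, bottom-up, the leaf reached from each node: $\mathrm{leaf}(v)$ equals $\mathrm{leaf}(r)$ if $b_v=1$ and $\mathrm{leaf}(\ell)$ otherwise. This is one selection per node, and the leaf index is stored as a number in a register. The computation proceeds level by level, giving runtime $O(\log M)$ and $O(M)$ selections, and $\mathrm{leaf}(\text{root})$ is the sample. Combined with the previous steps, this yields runtime $O(\log(m+1))$ and $O(m)$ elementary real operations. For $m=1$ the procedure is trivial and runs in constant time.
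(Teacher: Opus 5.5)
Your proposal is correct and is essentially the paper's proof: you use the same subtree-sum tree, the same branch bits of bias $S_r/S_v$ drawn simultaneously at every internal node with bias zero at zero-mass nodes, and the same telescoping argument. The only difference is that you resolve the path bottom-up by selecting leaf indices, while the paper propagates a selection indicator top-down from the root; both take $O(\log(m+1))$ levels and $O(m)$ operations. One small point the paper makes explicit: the model only allows division by a nonzero number, so at a zero-mass node you should replace the denominator by $1$ before dividing, rather than selecting after a division by zero.
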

\begin{proof}
We give the elementary construction for completeness.
Place the weights at the leaves of a balanced binary tree and compute
subtree sums from the leaves to the root. At each internal node, draw a
bit whose bias is the right-child sum divided by the sum at that node.
All these draws can be made simultaneously once the sums are available.
Propagate an indicator from the root to the selected child at each level
and return the unique selected leaf. The conditional branch probabilities
telescope to $w_i/\sum_jw_j$. At a zero-mass node, use bias zero; replace
its denominator by one before division so that even unused branches are
defined. There are $O(m)$ operations and $O(\log(m+1))$ dependent levels.
\end{proof}

When addressing our weak sampling problem in this model, we assume the following. 
Matrix elements in the computational basis for the gates~$U^{(t)}_j$ of our circuit, as well as the depolarizing probability~$p$ are supplied exactly.
The cost of computing the weights is counted separately.

\section{Circuit cutting by noise}
\label{sec:circuit-cutting}
In our noise model (see Eq.~\eqref{eq:noisyUimpelmentdefinition}), depolarization can be associated with individual gates: we call a two-qubit unitary preceded by independent depolarizing noise on each of its qubits a noisy gate. Section~\ref{sec:noisycircuitdecompo} decomposes each noisy gate into a convex combination of product unitaries and a residual mixture of (possibly entangling) two-qubit unitaries; the first term amounts to a local cut of the circuit. Section~\ref{sec:convexdecompo} shows that with high probability, the cuts partition the line of $n$~qubits into short components on which the output distribution factorizes, and Section~\ref{sec:cutboundedcompontentssampling} translates this into the noise-sampling subprogram used by our algorithm.

The term circuit cutting, or circuit knitting, also refers to
running large circuits on smaller quantum devices~\cite{PHOW20,MF21,PS24,SPS25}.
There, gates across a partition are decomposed into quasiprobabilistic
mixtures of local operations, and expectation values are reconstructed
classically. In contrast, in our work, the cuts arise from fully depolarizing noise, incur no
sampling overhead, and leave subcircuits that we simulate classically.

\subsection{Circuit cut for a single noisy gate\label{sec:noisycircuitdecompo}}
Here we give the central decomposition of an individual noisy two-qubit gate. 
\begin{lemma}[Circuit-cutting decomposition]\label{lem:circuitcutting}
Let $p\in (0,1)$.  Let $\Pauli=\{I,X,Y,Z\}$ denote single-qubit Pauli operators. Let $U$ be a two-qubit unitary. Then the channel~$\cR_p = \unitarychannel{U} \circ \cN_p^{\otimes 2}$ can be written as
\begin{align}
\cR_p&=p^2\left(\sum_{(Q,R)\in\Pauli^2}\frac{1}{16}
\unitarychannel{Q}\otimes\unitarychannel{R}\right)
 +(1-p^2)\left(\sum_{(Q,R)\in\Pauli^2}
q_{(Q,R)}\unitarychannel{U(Q\otimes R)}\right)\ ,
\label{eq:rpdecompositionbasic}
\end{align}
where $\{q_{(Q,R)}\}_{(Q,R)\in\Pauli^2}$ is a probability distribution
over the $16$ pairs of single-qubit Pauli operators given by
\begin{align}
q_{(I,I)}&=\frac{1-p/2}{1+p}\ ,\notag\\
q_{(I,P)}=q_{(P,I)}&=\frac{p}{4(1+p)}
\quad\textrm{ for }P\in\{X,Y,Z\}\ ,\notag\\
q_{(Q,R)}&=0
\quad\textrm{ for }Q,R\in\{X,Y,Z\}\ .
\label{eq:residual-pauli-probabilities}
\end{align}
\end{lemma}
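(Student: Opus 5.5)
The plan is to expand everything in the Pauli-channel basis and read off coefficients. First, write the single-qubit depolarizing channel as a Pauli mixture. Since the fully depolarizing map satisfies $\cD(\rho):=\tr(\rho)\frac{I}{2}=\frac14\sum_{P\in\Pauli}\unitarychannel{P}(\rho)$, Eq.~\eqref{eq:depol} becomes $\cN_p=(1-p)\,\mathrm{id}+p\,\cD$.

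Second, tensor this and group the terms. This gives
\begin{align*}
\cN_p^{\otimes 2}=p^2\,\cD\otimes\cD+(1-p)^2\,\mathrm{id}\otimes\mathrm{id}+p(1-p)\bigl(\mathrm{id}\otimes\cD+\cD\otimes\mathrm{id}\bigr)\ .
\end{align*}
The first term is the cut. The key observation is that $\unitarychannel{U}\circ(\cD\otimes\cD)=\cD\otimes\cD$. This holds because $\cD\otimes\cD(\rho)=\tr(\rho)\frac{I}{4}$ and $U\frac{I}{4}U^\dagger=\frac{I}{4}$. Expanding $\cD\otimes\cD=\frac1{16}\sum_{(Q,R)\in\Pauli^2}\unitarychannel{Q}\otimes\unitarychannel{R}$ then yields the first summand of Eq.~\eqref{eq:rpdecompositionbasic}, with $U$ absorbed entirely.

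Third, handle the remaining three terms. Composing them with $\unitarychannel{U}$ gives a nonnegative combination of the channels $\unitarychannel{U}\circ(\unitarychannel{Q}\otimes\unitarychannel{R})=\unitarychannel{U(Q\otimes R)}$. Substituting $\cD=\frac14\sum_P\unitarychannel{P}$ gives the weights:
\begin{itemize}
\item the pair $(I,I)$ gets $(1-p)^2+2\cdot\frac{p(1-p)}{4}=(1-p)(1-\frac p2)$;
\item each pair $(I,P)$ or $(P,I)$ with $P\neq I$ gets $\frac{p(1-p)}{4}$;
\item pairs with both entries non-identity get $0$.
\end{itemize}
Factoring out $1-p^2=(1-p)(1+p)$ produces exactly the values in Eq.~\eqref{eq:residual-pauli-probabilities}.

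Finally, check that the $q_{(Q,R)}$ form a probability distribution. They are nonnegative for $p\in(0,1)$, and
\begin{align*}
\frac{1-p/2}{1+p}+6\cdot\frac{p}{4(1+p)}=\frac{1+p}{1+p}=1\ .
\end{align*}
Consistency with the total weight also follows from trace preservation. There is no real obstacle here. The only step needing care is the absorption of $U$ into the fully depolarized branch, which rests on unitality of $\unitarychannel{U}$ together with the fact that $\cD\otimes\cD$ outputs the maximally mixed state regardless of the input.
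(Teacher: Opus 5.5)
Your proposal is correct and follows the paper's proof essentially step for step. It uses the same split $\mathcal{N}_p=(1-p)\,\mathsf{id}+p\,\mathcal{F}$ and the same absorption of $U$ into the $p^2$ branch because the maximally mixed state is unitarily invariant. The bookkeeping of the three remaining terms via the Pauli twirl, normalized by $1-p^2=(1-p)(1+p)$, is also the same.
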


\begin{figure}[H]
\centering
\begingroup
\small
\tikzset{
  rpwire/.style={line width=0.45pt},
  rppauli/.style={draw,fill=white,line width=0.45pt,inner sep=1.5pt,
    minimum width=0.50cm,minimum height=0.42cm},
  rpgate/.style={draw,fill=blue!12,line width=0.45pt,inner sep=0pt,
    minimum width=0.55cm,minimum height=0.96cm},
  rpnoise/.style={text=red,fill=white,inner sep=0pt,
    font=\small},
  rplabel/.style={font=\small}
}
\newcommand{\rpNoisyCircuit}{%
  \begin{tikzpicture}[baseline=-0.5ex,every node/.style={rplabel}]
    \foreach \y in {-0.26,0.26}{
      \draw[rpwire] (0,\y) -- (1.65,\y);
      \node[rpnoise] at (0.40,\y) {$\bigstar$};
    }
    \node[rpgate] at (1.06,0) {$U$};
  \end{tikzpicture}%
}
\newcommand{\rpProductCircuit}{%
  \begin{tikzpicture}[baseline=-0.5ex,every node/.style={rplabel}]
    \foreach \y in {-0.26,0.26}{\draw[rpwire] (0,\y) -- (1.20,\y);}
    \node[rppauli] at (0.60,0.26) {$Q$};
    \node[rppauli] at (0.60,-0.26) {$R$};
  \end{tikzpicture}%
}
\newcommand{\rpResidualCircuit}{%
  \begin{tikzpicture}[baseline=-0.5ex,every node/.style={rplabel}]
    \foreach \y in {-0.26,0.26}{\draw[rpwire] (0,\y) -- (1.85,\y);}
    \node[rppauli] at (0.44,0.26) {$Q$};
    \node[rppauli] at (0.44,-0.26) {$R$};
    \node[rpgate] at (1.23,0) {$U$};
  \end{tikzpicture}%
}
\(\displaystyle
  \rpNoisyCircuit
  =p^2\sum_{(Q,R)\in\Pauli^2}\frac{1}{16}\,\rpProductCircuit
    +(1-p^2)\sum_{(Q,R)\in\Pauli^2}
    q_{(Q,R)}\,\rpResidualCircuit\,.
\)
\endgroup
\caption{Circuit identity for Eq.~\eqref{eq:rpdecompositionbasic}, with
$\Pauli=\{I,X,Y,Z\}$. Each red star denotes the depolarizing channel~$\cN_p$.
Circuits denote channels on arbitrary inputs, with time running from left
to right. The first term, of total weight~$p^2$, contains only independent
single-qubit Pauli gates and cuts the interaction. The second term, of
total weight~$1-p^2$, retains~$U$. The probabilities inside the sums are
$1/16$ and $q_{(Q,R)}$ from Eq.~\eqref{eq:residual-pauli-probabilities}.}
\label{fig:rpdecomposition-circuit}
\end{figure}

Fig.~\ref{fig:rpdecomposition-circuit} schematically depicts this two-branch circuit identity.
We note that only the second term in Eq.~\eqref{eq:rpdecompositionbasic} depends on the unitary~$U$ considered.
\begin{proof}
Writing the depolarizing channel as
\begin{align}
\cN_p&=(1-p)\mathsf{id}+p\cF\qquad\textrm{ where }\qquad
\cF(\rho)=\tr(\rho)/2\cdot I\ ,
\end{align}
we have
\begin{align}
\cN_p^{\otimes 2}&=p^2\cF\otimes\cF+(1-p^2)\cG
\end{align}
where
\begin{align}
\cG=(1-p^2)^{-1}\left(
(1-p)^2\mathsf{id}\otimes\mathsf{id}
+p(1-p)\mathsf{id}\otimes\cF+p(1-p)\cF\otimes\mathsf{id}
\right)\ .
\end{align}
Since $\cF\otimes\cF$ outputs the maximally mixed state, which is invariant
under any unitary, we have
$\unitarychannel{U}\circ(\cF\otimes\cF)=\cF\otimes\cF$ and thus
\begin{align}
\cR_p=\unitarychannel{U}\circ\cN_p^{\otimes 2}
&=p^2\,\cF\otimes\cF+(1-p^2)\,\unitarychannel{U}\circ\cG\ .
\label{eq:rectangle-two-branches}
\end{align}
It remains to exhibit the two branches in the claimed form. Using
\begin{align}
\cF(\rho)&=\frac{1}{4}\sum_{P\in\Pauli}P\rho P^\dagger\ ,
\label{eq:F-pauli-twirl}
\end{align}
the first branch is
\begin{align}
\cF\otimes\cF&=\frac{1}{16}\sum_{(Q,R)\in\Pauli^2}
\unitarychannel{Q}\otimes\unitarychannel{R}\ ,
\end{align}
i.e., the uniform mixture over the $16$ Pauli pairs $(Q,R)$ of product
unitary channels. For the second branch,
substituting Eq.~\eqref{eq:F-pauli-twirl} into the two mixed terms of~$\cG$ gives
\begin{align}
\unitarychannel{U}\circ(\mathsf{id}\otimes\cF)
&=\frac{1}{4}\sum_{R\in\Pauli}\unitarychannel{U(I\otimes R)}\ ,
\end{align}
a mixture of the unitary channels associated with $U(I\otimes R)$, and
similarly for $\unitarychannel{U}\circ(\cF\otimes\mathsf{id})$. Hence
\begin{align}
\unitarychannel{U}\circ\cG
&=\sum_{(Q,R)\in\Pauli^2}q_{(Q,R)}
\unitarychannel{U(Q\otimes R)}
\end{align}
with the probabilities defined in Eq.~\eqref{eq:residual-pauli-probabilities}.
Here the weight of $(I,I)$ collects the contribution $(1-p)^2$ of
$\mathsf{id}\otimes\mathsf{id}$ and the two identity terms of the mixed
contributions, $(1-p)^2+2p(1-p)/4$, normalized by $(1-p^2)$; one checks
$q_{(I,I)}+6\cdot\frac{p}{4(1+p)}=1$.
\end{proof}

\subsection{Convex decompositions of noisy brickwork circuits into product circuits\label{sec:convexdecompo}}
Applying Lemma~\ref{lem:circuitcutting} to every noisy gate of a brickwork circuit, and representing the remaining noise channels on the boundary qubits by random Paulis, expresses the noisy circuit as a mixture of unitary circuits. If the cut branch is selected at all $d/2$ gates crossing a given bond, no entangling operation acts across that bond, so that conditioned on the selected branches, the output distribution is a product of output distributions of unitary subcircuits (Lemma~\ref{lem:conditional-factorization}). The widths of these subcircuits are bounded with high probability (Lemma~\ref{lem:maxcircuitwidth}); this is what determines the runtime of our algorithm.

Let
\begin{align}
\gateLocations{\sU}=\{(t,j)\ |\ t\in [d]\textrm{ odd}, j\in [n/2]\}\cup\{(t,j)\ |\ t\in [d]\textrm{ even}, j\in [n/2-1]\}
\end{align}
be the set of gate locations of the circuit~$\sU$.
For a gate location~$(t,j)\in \gateLocations{\sU}$, let 
\begin{align} 
	\cR_p(t,j) = 
	\unitarychannel{U_j^{(t)}} \circ \cN_p^{\otimes 2} \label{eq:rectanglepchannel}
\end{align} 
be the channel of the noisy gate at this location, consisting of the unitary gate~$U_j^{(t)}$ and the two-qubit noise channel immediately preceding it. It acts on $Q_{2j-1},Q_{2j}$ for odd~$t$ and on $Q_{2j},Q_{2j+1}$ for even~$t$. We call $\cR_p(t,j)$ a rectangle. The noisy implementation~$\cU^{(p)}$ of~$\sU$ is the composition of the channels~$\cR_p(t,j)$, together with the noise channels acting on the boundary qubits~$Q_1$ and~$Q_n$ before each even layer, which are the only noise locations belonging to no rectangle.

Each channel~$\cN_p$ is a mixture of the four single-qubit Pauli channels with weights $(1-3p/4,p/4,p/4,p/4)$; in \routine{DrawNoise} (Algorithm~\ref{alg:draw-noise}) below we draw one additional independent Pauli variable~$S^{(t)}_j$ for each of these $d$ locations, include it in the noise fixing~$F$, and absorb the resulting single-qubit gate into the adjacent odd-layer gate of the substituted circuit. These draws act within the first and last cut-bounded component and affect neither the cut indicators nor the component widths. 
Let the variables $E^{(t)}_j$ be independent with distribution $\mathsf{Ber}(p^2)$, selecting the two branches of Lemma~\ref{lem:circuitcutting} at each rectangle.
Define the random variable
\begin{align}
F=\Bigl(\bigl\{(E^{(t)}_j,Q^{(t)}_j,R^{(t)}_j)\bigr\}_{(t,j)\in \gateLocations{\sU}},\ \bigl\{S^{(t)}_j\bigr\}_{t\textrm{ even},\,j\in\{1,n\}}\Bigr)\ .\label{eq:noisefixing}
\end{align}
Here $Q^{(t)}_j,R^{(t)}_j\in\Pauli$ are the two Pauli operators selected at
the rectangle~$(t,j)$. Conditioned on $E^{(t)}_j=1$, their pair is uniform
on~$\Pauli^2$; conditioned on $E^{(t)}_j=0$, it equals $(Q,R)$ with
probability~$q_{(Q,R)}$.
We call a realization $F=f$ a noise fixing and note that $\tsU=\tsU(F)$ is a random unitary circuit.
The circuit~$\tsU(F)$ has gates $\tilde{U}^{(t)}_j=Q^{(t)}_j\otimes R^{(t)}_j$ when $E^{(t)}_j=1$ and $\tilde{U}^{(t)}_j=U_j^{(t)}(Q^{(t)}_j\otimes R^{(t)}_j)$ otherwise, together with the boundary Pauli gates~$S^{(t)}_j$.

The choice of~\smash{$\{E^{(t)}_j\}_{(t,j)\in\gateLocations{\sU}}$} determines where~$\tsU$ generates entanglement. Define the cut indicator variables with half-integer indices by
\begin{align}
E_{1/2+(2j-1)}&=\prod_{t\textrm{ odd}} E^{(t)}_j\qquad\textrm{ for }\qquad j\in [n/2]\\
E_{1/2+2j}&=\prod_{t\textrm{ even}} E^{(t)}_j\qquad\textrm{ for }\qquad j\in [n/2-1]\ .
\end{align}
Then the following holds for any $k\in [n-1]$: If $E_{1/2+k}=1$, the circuit~$\tsU$ does not generate entanglement between the qubits~$Q_k$ and $Q_{k+1}$ since they never see a two-qubit gate. 

The random variables~$\{E_{1/2+k}\}_{k\in [n-1]}$ are independent and identically distributed: $E_{1/2+k}$ is the product of the $d/2$ rectangle variables of the gates crossing the boundary between $Q_k$ and~$Q_{k+1}$, these sets of rectangle variables are pairwise disjoint for distinct~$k$, and each rectangle variable is an independent $\mathsf{Ber}(p^2)$ variable. Hence
\begin{align}
  \Pr[E_{1/2+k}=1]=(p^2)^{d/2}=p^d=:q\ , \label{eq:q-pd}
\end{align}
i.e., $E_{1/2+k}\sim \mathsf{Ber}(q)$ for each $k\in [n-1]$.
Define 
\begin{align}
\{ C_1<C_2<\cdots<C_{\numCuts}\}=\left\{1/2+k\ |\ E_{1/2+k}=1\textrm{ for }k\in [n-1]\right\}\ .
\end{align}
We call each $C_s$ for $s\in \{1,\ldots,\numCuts\}$ a (location of a) cut. Note that the number~$\numCuts$ of cuts is a binomially distributed random variable~$\numCuts\sim\mathsf{Bin}(n-1,q)$. 
It will be convenient to also set
\begin{align}
  C_0=1/2\qquad\textrm{ and }  \qquad
  C_{\numCuts+1}=1/2+n\ .
\end{align}
Then the cuts define a partition of $[n]$ (the set of qubits) 
as 
\begin{align}
[n]=\bigcup_{r=0}^{\numCuts} J_r\qquad\textrm{ where }\qquad J_r=[C_r,C_{r+1}]\cap \mathbb{N}
\end{align}
into~$\numCuts+1$ contiguous pairwise disjoint intervals~$\{J_r\}_{r=0}^{\numCuts}$.
We will refer to each such $J_r$ as a cut-bounded component.
 
Observe that by definition, any two-qubit gate in the circuit~$\tsU$ acts on a pair of qubits belonging to a single interval only. In particular, the circuit~$\tsU$ factors as
\begin{align}
\tsU&=\bigotimes_{r=0}^{\numCuts} \tsU_r\label{eq:factorizationcircuit}
\end{align}
into $\numCuts+1$ parallel subcircuits~$\tsU_0,\ldots,\tsU_{\numCuts}$. For an $n$-bit string $z=(z_1,\ldots,z_n)\in \{0,1\}^n$ and a subset $J\subseteq [n]$, let us write $z_J=\{z_j\}_{j\in J}$ for the substring associated with bits belonging to~$J$.
 Eq.~\eqref{eq:factorizationcircuit} implies the following.
\begin{lemma}[Conditional factorization]
\label{lem:conditional-factorization}
Let 
\begin{align}
f&=\Bigl(\bigl\{(e^{(t)}_j,q^{(t)}_j,r^{(t)}_j)\bigr\}_{(t,j)\in \gateLocations{\sU}},\ \bigl\{s^{(t)}_j\bigr\}_{t\textrm{ even},\,j\in\{1,n\}}\Bigr)\  \label{eq:noisefixingspecific}
\end{align}
be any noise fixing, where $e^{(t)}_j\in\{0,1\}$ and
$q^{(t)}_j,r^{(t)}_j,s^{(t)}_j\in\Pauli$ are fixed.
 Conditioned on~$F=f$, the distribution~$P^{(p)}$ factors as 
 \begin{align}
 P^{(p)}(z|F=f)&=\prod_{r=0}^{\numCuts} P_{\tsU_r}(z_{J_r})\qquad\textrm{ where }\qquad z\in \{0,1\}^n\ ,
 \end{align}
 Here $P_{\tsU_r}$, defined by
 \begin{align}
 P_{\tsU_r}(y)&=\big|\langle y | \tsU_r | 0^{W_r}\rangle \big|^2\qquad\textrm{ for }\qquad y\in \{0,1\}^{W_r}
 \end{align}
 with $W_r=|J_r|$, is the output distribution of the subcircuit~$\tsU_r$.
\end{lemma}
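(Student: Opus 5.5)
The argument has two parts: first show that conditioning on $F=f$ turns the noisy channel into the single unitary circuit $\tsU(f)$, then use the product structure in Eq.~\eqref{eq:factorizationcircuit}. For the first part, I write $\cU^{(p)}$ as a composition of the rectangles $\cR_p(t,j)$ and the boundary noise channels on $Q_1,Q_n$ before each even layer. Within a single layer the rectangles act on disjoint qubit pairs, so they commute with each other and with the boundary channels of that layer. Each factor is then expanded as a convex combination. For $\cR_p(t,j)$ I use Lemma~\ref{lem:circuitcutting}: the outcome $E^{(t)}_j=1$ has weight $p^2$ and selects the uniform product term, while $E^{(t)}_j=0$ has weight $1-p^2$ and selects the term with weights $q_{(Q,R)}$. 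Each boundary $\cN_p$ is expanded as its Pauli mixture with weights $(1-3p/4,p/4,p/4,p/4)$.

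Multiplying out these expansions gives
\[
\cU^{(p)}=\sum_f \Pr[F=f]\,\unitarychannel{\tsU(f)}.
\]
The coefficient of each term is exactly $\Pr[F=f]$, because the components of $F$ are drawn independently with these same weights. Absorbing the boundary Paulis $S^{(t)}_j$ into the adjacent odd-layer gates does not change the product operator, since the layers containing them are applied in the same order. Applying this identity to $\proj{0^n}$ and measuring shows that $P^{(p)}$ is the mixture of the distributions $|\langle z|\tsU(f)|0^n\rangle|^2$ with weights $\Pr[F=f]$. This identifies $P^{(p)}(z|F=f)$ with $|\langle z|\tsU(f)|0^n\rangle|^2$; I read the conditional distribution as the distribution of the sampled unitary circuit, which is the operational meaning used in the paper.

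The main care point is checking that every gate of $\tsU(f)$ is supported inside a single component $J_r$. A two-qubit gate across the bond $k+\tfrac12$ exists in $\tsU(f)$ only if some $e^{(t)}_j=0$ among the gates crossing that bond. That is exactly the condition that $k+\tfrac12$ is not a cut, and then $Q_k$ and $Q_{k+1}$ lie in the same $J_r$. Gates with $e^{(t)}_j=1$ are products $Q\otimes R$ of single-qubit operators and therefore respect any partition. Single-qubit Pauli gates are trivially local. This justifies Eq.~\eqref{eq:factorizationcircuit}: grouping the gates by component and using that operators on disjoint supports commute gives $\tsU(f)=\bigotimes_r\tsU_r$.

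Finally, since $\ket{0^n}=\bigotimes_r\ket{0^{W_r}}$ and $\ket{z}=\bigotimes_r\ket{z_{J_r}}$ (the intervals are contiguous, so the qubit orderings match), the amplitude factors:
\[
\langle z|\tsU|0^n\rangle=\prod_r\langle z_{J_r}|\tsU_r|0^{W_r}\rangle .
\]
Taking the squared modulus yields the claimed product $\prod_{r=0}^{\numCuts} P_{\tsU_r}(z_{J_r})$.
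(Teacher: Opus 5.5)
Your proposal is correct and follows the paper's own proof. Both write $\cU^{(p)}$ as the $F$-average of the unitary channels of $\tsU(F)$, using Lemma~\ref{lem:circuitcutting} for the rectangles and the Pauli expansion for the boundary noise. Both then identify the conditional output distribution with $|\langle z|\tsU(f)|0^n\rangle|^2$, and factor the amplitude via $\tsU=\bigotimes_r\tsU_r$ and $|0^n\rangle=\bigotimes_r|0^{W_r}\rangle$. The one difference is that you re-derive Eq.~\eqref{eq:factorizationcircuit} from the cut condition, whereas the paper cites it as the observation stated just before the lemma.
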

\begin{proof}
By Lemma~\ref{lem:circuitcutting}, each rectangle channel $\cR_p(t,j)$ is the mixture, over the sampled triple $(E^{(t)}_j,Q^{(t)}_j,R^{(t)}_j)$ in \routine{DrawNoise} (Algorithm~\ref{alg:draw-noise}), of the unitary channels associated with the substituted gates~$\tilde{U}^{(t)}_j$; each boundary noise channel is likewise the mixture, over the draw of the corresponding variable~$S^{(t)}_j$ of the noise fixing, of the four single-qubit Pauli channels, absorbed into the adjacent odd-layer gates as described in Section~\ref{sec:circuit-cutting}. Composing over all rectangles and boundary locations, and noting that $F$ collects these independent rectangle and boundary draws, the noisy circuit is the average
\begin{align}
\cU^{(p)}(\rho)&=\mathbb{E}_F\left[\tsU(F)\,\rho\,\tsU(F)^\dagger\right]\ ,
\end{align}
so conditioned on $F=f$ the pre-measurement state is the pure state
$\tsU(f)|0^n\rangle$, with output distribution
\begin{align}
P^{(p)}(z\,|\,F=f)=|\langle z|\tsU(f)|0^n\rangle|^2.
\end{align}
By Eq.~\eqref{eq:factorizationcircuit}, $\tsU=\bigotimes_{r=0}^{\numCuts}\tsU_r$ with
$\tsU_r$ supported on the qubits of~$J_r$, and
$|0^n\rangle=\bigotimes_{r=0}^{\numCuts}|0^{W_r}\rangle$. Hence
\begin{align}
\langle z|\tsU(f)|0^n\rangle=\prod_{r=0}^{\numCuts}\langle z_{J_r}|\tsU_r|0^{W_r}\rangle\ ,
\end{align}
and taking absolute squares gives the claim.
\end{proof}


Let us argue that with probability at least~$1-\delta$, the
width~$W_r$ of each cut-bounded component is upper bounded.

\begin{lemma}[Maximum width]\label{lem:maxcircuitwidth}
Let $\delta\in (0,1)$ and set $q = p^d$. Define
\begin{align}
  W_*=\left\lceil \frac{\log(n/\delta)}{\log(1/(1-q))}\right\rceil +1\ .
  \label{eq:Wstar-choice}
\end{align}
Then
\begin{align}
  \Pr\left[W_r<W_*\textrm{ for all }r\in\{0,\ldots,\numCuts\}\right]\geq 1-\delta\ .
\end{align}
Furthermore, for constant $p\in(0,1)$,
\begin{align}
  W_* = \Theta\bigl(p^{-d}\log(n/\delta)\bigr).
  \label{eq:Wstar-scaling}
\end{align}
The probability guarantee also holds for any larger integer width.
\end{lemma}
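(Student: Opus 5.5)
The plan is a union bound over the possible starting positions of a long cut-free stretch, using the fact established above that the cut indicators $\{E_{1/2+k}\}_{k\in[n-1]}$ are i.i.d.\ $\mathsf{Ber}(q)$ with $q=p^d$.

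First, reduce the event $W_r\geq W_*$ to a run of missing cuts. If some component $J_r=[C_r,C_{r+1}]\cap\mathbb{N}$ has $W_r\geq W_*$, it contains $W_*$ consecutive qubits $k,k+1,\ldots,k+W_*-1$. None of the $W_*-1$ interior bonds $1/2+k,\ldots,1/2+k+W_*-2$ is then a cut, so all of $E_{1/2+k},\ldots,E_{1/2+k+W_*-2}$ vanish. Hence the bad event is contained in
\begin{align}
\bigcup_{k=1}^{n-W_*+1}\ \{E_{1/2+k}=\cdots=E_{1/2+k+W_*-2}=0\}\ .
\end{align}
If $W_*>n$, the bad event is empty and there is nothing to prove.

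Second, bound the union. By independence, each event in the union has probability $(1-q)^{W_*-1}$, and there are at most $n$ of them, so the failure probability is at most $n(1-q)^{W_*-1}$. By the choice in Eq.~\eqref{eq:Wstar-choice},
\begin{align}
W_*-1\geq \log(n/\delta)/\log(1/(1-q))\ ,
\end{align}
which gives $(1-q)^{W_*-1}\leq \delta/n$ and therefore a failure probability of at most $\delta$. For any larger integer width $W\geq W_*$, the event $\{W_r<W_*\ \forall r\}$ is contained in $\{W_r<W\ \forall r\}$, so the guarantee transfers by monotonicity.

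Third, establish the scaling in Eq.~\eqref{eq:Wstar-scaling}. Use $q\leq\log(1/(1-q))\leq q/(1-q)$. Since $q=p^d\leq p^2$, the factor $1/(1-q)$ is bounded by the constant $1/(1-p^2)$ for constant $p$. This gives $\log(1/(1-q))=\Theta(p^d)$, and hence
\begin{align}
W_*=\Theta\bigl(p^{-d}\log(n/\delta)\bigr)\ ,
\end{align}
where the additive constants are absorbed because $p^{-d}\log(n/\delta)$ is bounded below by a positive constant when $n\geq 2$ (say $\delta<1$, $n\geq d\geq2$).

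No step is a real obstacle. The only point needing care is the off-by-one bookkeeping: a component of width $W$ has $W-1$ internal bonds, which is exactly why the definition of $W_*$ carries the $+1$.
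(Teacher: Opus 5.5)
Your proposal is correct and follows essentially the same route as the paper: a union bound over at most $n$ starting positions of a run of $W_*-1$ cut-free internal bonds, each of probability $(1-q)^{W_*-1}\leq\delta/n$, then $q\leq\log(1/(1-q))\leq q/(1-q)$ for the scaling, and monotonicity for larger widths. The only differences are cosmetic: you use $q\leq p^2$ where the paper uses $q\leq p$, and you state explicitly that $p^{-d}\log(n/\delta)\geq\log 2$ absorbs the additive constants, which the paper leaves implicit.
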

\begin{proof}
Set $W_{\max}=\max_{0\leq r\leq \numCuts}W_r$, the maximum width of a cut-bounded component, and define the event
$\mathsf{Good}=\{W_{\max}< W_*\}$. 
A cut-bounded component of width~$W_r\geq W_*$ consists of at least $W_*$ consecutive sites, and the half-integer positions strictly between consecutive sites of a component carry no cut. Hence if $W_{\max}\geq W_*$, there are $W_*-1$ consecutive internal half-integer positions containing no cut. There are fewer than $n$ possible starting positions for such a run, and each fixed run of $W_*-1$ internal positions contains no cut with probability $(1-q)^{W_*-1}$. The union bound gives
\begin{align}
  \Pr[W_{\max}\geq W_*]
  \leq n(1-q)^{W_*-1}\ .
  \label{eq:gap-bound}
\end{align}
We have 
  $n(1-q)^{W_*-1}\leq \delta$ by  definition of~$W_*$, 
and hence $\Pr[\Good]\geq 1-\delta$. Finally, the standard bounds
\begin{align}
  q
  \leq \log\!\left(\frac{1}{1-q}\right)
  \leq \frac{q}{1-q}
\end{align}
give
\begin{align}
  (1-q)q^{-1}\log(n/\delta)+1
  \leq W_*
  \leq q^{-1}\log(n/\delta)+2.
  \label{eq:Wstar-two-sided}
\end{align}
Since $q=p^d\leq p$, the lower bound is at least
$(1-p)p^{-d}\log(n/\delta)+1$, while the upper bound is
$p^{-d}\log(n/\delta)+2$. For constant $p\in(0,1)$, this proves
Eq.~\eqref{eq:Wstar-scaling}. Increasing the width only decreases the
probability of an oversized component, proving the final assertion.
\end{proof}

\subsection{Sampling the noise fixing\label{sec:cutboundedcompontentssampling}}
Lemmas~\ref{lem:conditional-factorization} and~\ref{lem:maxcircuitwidth} reduce the weak simulation of a noisy brickwork circuit to two steps: sampling a noise fixing~$f$ from the distribution of~$F$, and sampling from the output distributions of the component subcircuits~$\tsU_r$, which are unitary circuits of width less than~$W_*$ with probability at least~$1-\delta$. The first step is carried out by the subprogram \routine{DrawNoise} (Algorithm~\ref{alg:draw-noise}), which applies Lemma~\ref{lem:circuitcutting} at every rectangle and returns both the noise fixing and the unitary circuit~$\tsU(f)$. The second step is exact: by Lemma~\ref{lem:conditional-factorization}, once the noise is fixed, the components can be sampled independently and in parallel, using the same noise fixing~$f$ in all calls to the local sampler \routine{LocalSample} (Algorithm~\ref{alg:local-marginal-sampler}) of Section~\ref{sec:circuitsampling}. Since the components have random positions and sizes, our algorithm instead samples the marginals of~$\tsU(f)$ on fixed intervals and assembles a sample from these; this is the subject of Section~\ref{sec:marginalsamples}.

Throughout the pseudocode, \textbf{Require} specifies the inputs and
preconditions, and \textbf{Ensure} specifies the returned values and their
guarantees. Randomized calls use fresh independent randomness unless a shared
noise fixing is explicitly supplied. Loops marked ``in parallel'' execute
their independent iterations simultaneously; successive tree levels remain
sequential.

\begin{algorithm}[!htb]
\caption{\routine{DrawNoise}: sample a noise fixing and construct its unitary circuit.}
\label{alg:draw-noise}
\small
\begin{algorithmic}[1]
\Require A depth-$d$ brickwork circuit $\sU$ on $n$ qubits and a noise probability $p\in(0,1)$; $\gateLocations{\sU}$ is its set of gate locations.
\Ensure A noise fixing $f$ distributed as $F$ in Eq.~\eqref{eq:noisefixing}, and the corresponding unitary circuit $\tsU(f)$.
\Function{DrawNoise}{$\sU,p$}
\ForAll{$(t,j)\in \gateLocations{\sU}$ \textbf{in parallel}}\label{algline:noise-fixing-start}
  \State Draw $E^{(t)}_j\sim \mathsf{Ber}(p^2)$ independently.
  \If{$E^{(t)}_j=1$}
    \State Draw $(Q^{(t)}_j,R^{(t)}_j)$ uniformly from $\Pauli^2$.
    \State Set $\tilde{U}^{(t)}_j\gets Q^{(t)}_j\otimes R^{(t)}_j$.
  \Else
    \State Draw $(Q^{(t)}_j,R^{(t)}_j)\in\Pauli^2$ with probabilities $q_{(Q,R)}$ from Eq.~\eqref{eq:residual-pauli-probabilities}.
    \State Set $\tilde{U}^{(t)}_j\gets U_j^{(t)}(Q^{(t)}_j\otimes R^{(t)}_j)$.
\EndIf
\EndFor
\ForAll{$t \in [d]$ even and $j\in\{1,n\}$ \textbf{in parallel}}
    \State Draw $S^{(t)}_j\in\Pauli$ independently with probabilities $(1-3p/4,\,p/4,\,p/4,\,p/4)$.
\EndFor \label{algline:noise-fixing-end}
\State Collect the drawn variables into $f$ as in Eq.~\eqref{eq:noisefixing}.
\ForAll{$t \in [d]$ \textbf{in parallel}}
       \If{$t$ odd}
	\State Define $\tilde{L}_t = \bigotimes_{j=1}^{n/2} \big(\tilde{U}_j^{(t)}\big)_{Q_{2j-1}Q_{2j}}$.
       \Else
       	\State Define $\tilde{L}_t = \big( S_1^{(t)} \big)_{Q_1} \otimes \left( \bigotimes_{j=1}^{n/2-1} \big( \tilde{U}_j^{(t)} \big)_{Q_{2j}Q_{2j+1}} \right) \otimes \big( S_n^{(t)} \big)_{Q_n}$. 
       \EndIf
\EndFor
\State Set $\tsU = \tilde{L}_d \tilde{L}_{d-1} \ldots \tilde{L}_1$.
\State \Return $(f,\tsU)$.
\EndFunction
\end{algorithmic}
\end{algorithm}

\section{Combining marginal samples\label{sec:marginalsamples}}
Fix a noise realization~$f$ as defined in Section~\ref{sec:convexdecompo}.
Our task is to generate an $n$-bit sample from the conditional distribution
\begin{align}
  Q_Z(z):=P^{(p)}(z\mid F=f).
\end{align}
Throughout this section, all distributions and independence statements
refer to this same fixed noise realization. By
Lemma~\ref{lem:conditional-factorization}, this distribution factorizes over
the cut-bounded components as
\begin{align}
  Q_Z=\prod_{s=0}^{\numCuts}Q_{Z_{J_s}}\ ,\label{eq:qzdistribm}
\end{align}
where each $J_s\subseteq [n]$ is a contiguous interval of qubits defined by the noise realization~$f$ (see Section~\ref{sec:convexdecompo}). 

We show that a sample from~$Q_Z$ can be assembled from independent samples of the marginals of~$Q_Z$ on two families of fixed intervals, chosen independently of the noise realization (Sections~\ref{sec:combining-independent-marginals} and~\ref{sec:two-forests}), and we give a parallel selection procedure which performs the assembly using the cut configuration determined by~$f$ (Sections~\ref{sec:subsetsamplingcombination} and~\ref{sec:boundarypass-algorithm}, with its message-passing implementation in Section~\ref{sec:messagepassingalgorithmcontainment}). The marginal samples themselves are produced by the algorithm \routine{LocalSample} of Section~\ref{sec:circuitsampling}.

\subsection{Sampling by combining independent marginal samples}
\label{sec:combining-independent-marginals}
The key observation is that a marginal sample on an interval containing
several cut-bounded components in their entirety already supplies independent, correctly
distributed samples for all those components, since discarding the other bits
does not change their joint distribution. Thus we may assign each
component to one interval containing it and copy its entire substring
from that interval's sample, provided that the assignment depends only on the noise
realization and the interval geometry, not on the sampled values.

\begin{figure}[H]
\centering
\begin{tikzpicture}[x=0.48cm,y=0.8cm,>=Latex]
  \tikzset{cut/.style={red,line width=1.1pt}, compA/.style={blue!60!black,line width=2.1pt}, compB/.style={green!45!black,line width=2.1pt}, winA/.style={draw=blue!50!black,line width=0.8pt,fill=blue!8,rounded corners=2pt}, winB/.style={draw=green!45!black,line width=0.8pt,fill=green!10,rounded corners=2pt}}
  \draw[line width=0.9pt] (0,0) -- (24,0);
  \foreach \x in {0,5,10,14,18,22,24}{
    \draw[cut] (\x,-0.32) -- (\x,0.32);
  }
  \draw[compA] (0,0) -- (5,0);
  \draw[compB] (5,0) -- (10,0);
  \draw[compA] (10,0) -- (14,0);
  \draw[compB] (14,0) -- (18,0);
  \draw[compA] (18,0) -- (22,0);
  \draw[compA] (22,0) -- (24,0);
  \node[below,font=\scriptsize,blue!60!black] at (2.5,-0.72) {$A_0$};
  \node[below,font=\scriptsize,green!45!black] at (7.5,-0.72) {$B_1$};
  \node[below,font=\scriptsize,blue!60!black] at (12,-0.72) {$A_1$};
  \node[below,font=\scriptsize,green!45!black] at (16,-0.72) {$B_2$};
  \node[below,font=\scriptsize,blue!60!black] at (20,-0.72) {$A_2$};
  \node[below,font=\scriptsize,blue!60!black] at (23,-0.72) {$A_2$};
  \node[above,font=\scriptsize,blue!60!black] at (2.5,0.18) {$J_0$};
  \node[above,font=\scriptsize,green!45!black] at (7.5,0.18) {$J_1$};
  \node[above,font=\scriptsize,blue!60!black] at (12,0.18) {$J_2$};
  \node[above,font=\scriptsize,green!45!black] at (16,0.18) {$J_3$};
  \node[above,font=\scriptsize,blue!60!black] at (20,0.18) {$J_4$};
  \node[above,font=\scriptsize,blue!60!black] at (23,0.18) {$J_5$};
  \foreach \x/\y/\lab in {0/8/$A_0$,8/16/$A_1$,16/24/$A_2$}{
    \path[winA] (\x,0.82) rectangle (\y,1.28);
    \node[font=\scriptsize] at ({(\x+\y)/2},1.05) {\lab};
  }
  \foreach \x/\y/\lab in {0/4/$B_0$,4/12/$B_1$,12/20/$B_2$,20/24/$B_3$}{
    \path[winB] (\x,-1.28) rectangle (\y,-0.82);
    \node[font=\scriptsize] at ({(\x+\y)/2},-1.05) {\lab};
  }
  \draw[decorate,decoration={brace,mirror,amplitude=4pt}] (5,-1.45) -- (10,-1.45) node[midway,below=4pt,font=\scriptsize] {$J_1$ is assigned to $B_1$};
\end{tikzpicture}
\caption{The two overlapping interval partitions. Red vertical lines mark cuts, including the physical boundaries. Each component is copied from one local sample: blue components from the unshifted partition and green components, which cross an unshifted boundary, from the shifted partition.}
\label{fig:overlap-intervals}
\end{figure}

The following lemma formalizes the observation independently of the
choice of sampling intervals.

\begin{lemma}[Combining marginal samples]
\label{lem:combining-product-marginals}
Suppose $Q_Z=\prod_s Q_{Z_{J_s}}$ for the component partition~$\{J_s\}_s$.
Let $\{I_a\}_{a\in\mathcal A}$ be a finite indexed family of subsets of~$[n]$,
and let $V^{(a)}\sim Q_{Z_{I_a}}$ be mutually independent samples.
For each component~$J_s$, choose an index~$\alpha(s)\in\mathcal A$ such that
$J_s\subseteq I_{\alpha(s)}$, with the assignment depending only on~$f$
and the sets~$I_a$. Then the string defined by
\begin{align}
  Z_{J_s}=V^{(\alpha(s))}_{J_s}
  \qquad\textrm{for }s\in\{0,\ldots,\numCuts\}
\end{align}
has distribution~$Q_Z$.
\end{lemma}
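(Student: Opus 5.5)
The proof is a direct computation of the joint law of the assembled string $Z$. Throughout, $f$ is fixed. The assignment $\alpha$ is therefore a fixed (deterministic) map from component indices $s$ to $\mathcal A$, because it depends only on $f$ and the sets $I_a$, never on the sampled values $V^{(a)}$. The plan is to group components by the interval they are assigned to. For each $a\in\mathcal A$, let $S_a=\{s:\alpha(s)=a\}$ and $K_a=\bigcup_{s\in S_a}J_s\subseteq I_a$. The sets $K_a$ are pairwise disjoint, since the $J_s$ partition $[n]$, and together they cover $[n]$. By construction $Z_{K_a}=V^{(a)}_{K_a}$ for every $a$.

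First I show that each block $Z_{K_a}$ has law $\prod_{s\in S_a}Q_{Z_{J_s}}$. Since $V^{(a)}\sim Q_{Z_{I_a}}$ and $K_a\subseteq I_a$, the restriction $V^{(a)}_{K_a}$ is distributed as the marginal of $Q_Z$ on $K_a$. The key computation is that, by the product form $Q_Z=\prod_s Q_{Z_{J_s}}$, the marginal of $Q_Z$ on any union of whole components is the product of the corresponding component factors. To see this, sum $\prod_s Q_{Z_{J_s}}(z_{J_s})$ over the coordinates outside $K_a$. Each factor with $s\notin S_a$ involves only summed-out variables and sums to one, because it is a probability distribution. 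Each factor with $s\in S_a$ involves only variables in $K_a$ and is untouched. This step uses that every component is either entirely inside $K_a$ or disjoint from it, which holds by the definition of $K_a$. The containment $J_s\subseteq I_{\alpha(s)}$ is what guarantees that the restriction to $K_a$ is well defined.

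Next I combine the blocks. The blocks $Z_{K_a}$, for distinct $a$, are functions of distinct samples $V^{(a)}$, and these samples are mutually independent. The functions themselves are fixed in advance because $\alpha$ is deterministic. Hence, for any $z$,
\begin{align}
\Pr[Z=z]=\prod_{a}\Pr\bigl[V^{(a)}_{K_a}=z_{K_a}\bigr]=\prod_a\prod_{s\in S_a}Q_{Z_{J_s}}(z_{J_s})=\prod_s Q_{Z_{J_s}}(z_{J_s})=Q_Z(z).
\end{align}
Indices $a$ with $S_a=\emptyset$ contribute trivially. Several components may share one interval sample; this causes no problem, because their joint law within that sample is already a product by the marginal computation above.

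The only genuinely delicate point is the independence claim: if $\alpha$ were allowed to depend on the values $V^{(a)}$, the selection could bias the result. I would therefore state explicitly that conditioning is on $f$ only, that all $V^{(a)}$ are drawn independently given $f$, and that the interval family and the map $\alpha$ are fixed given $f$. Beyond that, the argument reduces to the marginalization identity for product measures over a partition.
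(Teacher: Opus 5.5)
Your proof is correct and follows the paper's route: you group the components by their assigned source into $K_a$ (the paper's $D_a$), use the product form to identify the law of $V^{(a)}_{K_a}$ as $\prod_{s:\alpha(s)=a}Q_{Z_{J_s}}$, and concatenate across the independent sources. You spell out the marginalization identity, and why it matters that $\alpha$ is fixed given $f$, in more detail than the paper does, but the argument is the same.
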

\begin{proof}
For each source~$a$, let $D_a$ be the union of the components assigned to it.
The marginal of~$V^{(a)}$ on these components is
\begin{align}
  Q_{Z_{D_a}}=\prod_{s:\,\alpha(s)=a}Q_{Z_{J_s}},
\end{align}
because $D_a\subseteq I_a$ and~$Q_Z$ factorizes over the components.
The restrictions of distinct source samples are independent.
Since every component is assigned to exactly one source, concatenating
the selected substrings therefore gives the distribution
$\prod_s Q_{Z_{J_s}}=Q_Z$.
\end{proof}

It remains to choose short sampling intervals that cover every component
and to determine a valid source for every site in parallel. We first
establish the covering and assignment rules, then implement them by
\routine{BoundaryPass}.

\subsection{Choosing fixed sampling intervals}
\label{sec:two-forests}

Concretely, we use samples on two shifted interval partitions
$\{A_r\}_r$ and $\{B_t\}_t$, as illustrated in
Fig.~\ref{fig:overlap-intervals}. Concatenating the samples within each
partition gives two candidate strings~$X$ and~$Y$. Every sufficiently
short component is contained in a sampling interval of at least one
partition, and we select one source for that whole component.

Let $W_*\geq1$ be an integer. In Section~\ref{sec:full-routine} it will be
chosen at least as large as the value in Eq.~\eqref{eq:Wstar-choice}, so
that $W_{\max}<W_*$ with probability at least $1-\delta$ by
Lemma~\ref{lem:maxcircuitwidth}; the constructions of this section apply
to any integer $W_*\geq1$. Define
\begin{align}
  \ell=2W_*+4\ .
  \label{eq:ell-def}
\end{align}
We use two partitions 
\begin{align}
[n]&=\bigcup_{r=0}^{r_{\max}}A_r\qquad\textrm{ with }\qquad A_r\cap A_{r'}=\emptyset\textrm{ for }r\neq r'\\
[n]&=\bigcup_{s=0}^{s_{\max}}B_s\qquad\textrm{ with }\qquad B_s\cap B_{s'}=\emptyset\textrm{ for }s\neq s'
\end{align}
of $[n]$ into contiguous intervals.
 The unshifted intervals are
\begin{align}
  A_r=(r\ell,(r+1)\ell]\cap [n]
  \qquad\textrm{ for }\qquad
  r\in\left\{0,\ldots,r_{\max}\right\}\ 
  \label{eq:A-intervals}
\end{align}
where $r_{\max}=\lceil \frac{n}{\ell}\rceil -1$.
 The shifted intervals are
\begin{align}
  B_0&=(0,\ell/2]\cap [n]\\
  B_s&=\left(\left(s-\frac{1}{2}\right)\ell,\left(s+\frac{1}{2}\right)\ell\right]\cap [n]
  \qquad\textrm{ for }\qquad
  s\in\left\{1,\ldots,s_{\max}\right\}\ 
  \label{eq:B-intervals}
\end{align}
where $s_{\max}=\lceil\frac{n-\ell/2}{\ell}\rceil$. 

The construction is designed so that the two interval families jointly cover every short interval, in the following sense.
\begin{enumerate}[(i)]
\item\label{it:coveringproperty}
Covering. Every contiguous interval $J\subseteq[n]$ of length $|J|\leq\ell/2$ is contained in a single interval of one of the two families: either $J\subseteq A_r$ for some~$r$, or $J\subseteq B_s$ for some~$s$.
\item\label{it:lengthproperty}
Bounded length. Every interval of both families has length at most~$\ell$: $|A_r|\leq\ell$ and $|B_s|\leq\ell$ for all $r$ and~$s$.
\end{enumerate}
Property~\eqref{it:lengthproperty} is immediate.  For
property~\eqref{it:coveringproperty}, a short interval either crosses no
$A$-boundary and lies in one $A_r$, or crosses the unique boundary~$r\ell$ in
its range.  In the latter case its length bound places it inside the interval
$B_r=(r\ell-\ell/2,r\ell+\ell/2]$ centered at that boundary (with the evident
endpoint truncation).  Fig.~\ref{fig:overlap-intervals} illustrates the two
partitions and the resulting assignment of components.

Suppose  that we are given
a sample $X=(X_{A_0},\ldots,X_{A_{r_{\max}}})$ where the substring~$X_{A_r}$ is chosen independently according to
\begin{align}
X_{A_r}\sim Q_{Z_{A_r}}\qquad \textrm{for each }r\in \{0,\ldots,r_{\max}\}\ ,\label{eq:factorizationQdistributionA} 
\end{align}
and similarly, an independent sample 
$Y=(Y_{B_0},\ldots,Y_{B_{s_{\max}}})$ where the substring~$Y_{B_s}$ is chosen independently according to
\begin{align}
Y_{B_s}\sim Q_{Z_{B_s}}\qquad \textrm{for each }s\in \{0,\ldots,s_{\max}\}\  .\label{eq:factorizationQdistributionB} 
\end{align}
All these samples use the same fixed noise realization~$f$.
Since the interval samples are independent, the strings~$X$ and~$Y$ need
not be consistent on overlapping regions, and neither of them need have distribution~$Q_Z$, since a
partition may split a component between independently sampled intervals.
Lemma~\ref{lem:combining-product-marginals} applies once every whole
component has been assigned to one containing interval sample.

\subsection{Assigning components to interval samples}
\label{sec:subsetsamplingcombination}
We choose an unshifted sample whenever both bounding cuts of a component
are visible inside its interval, and otherwise use a shifted sample.
To express this rule, we associate with each cut-bounded component~$J_s$, $s\in \{0,\ldots,\numCuts\}$, its augmented component
\begin{align}
  \widehat{J}_s=\left[C_s-1/2,\, C_{s+1}+1/2\right]\cap [n]\ .
  \label{eq:augmented-component}
\end{align}
That is, $\widehat{J}_s$ is obtained from~$J_s$ by adding the site immediately left of the cut~$C_s$ (if $C_s>1/2$) and the site immediately right of the cut~$C_{s+1}$ (if $C_{s+1}<1/2+n$). If $W_{\max}<W_*$, then
\begin{align}
  |\widehat{J}_s|\leq W_s+2\leq W_*+1=\ell/2-1\ ,
  \label{eq:augmented-component-length}
\end{align}
where $W_s=|J_s|$.

For each component~$J_s$, choose the sample on~$A_r$ if
$\widehat J_s\subseteq A_r$ for some~$r$. Otherwise, choose the sample
on the shifted interval~$B_t$ containing~$\widehat J_s$.
Under the assumption~$W_{\max}<W_*$, such a shifted interval exists by
Eq.~\eqref{eq:augmented-component-length} and the covering property.

Let $s(j)$ and $r(j)$ denote the component and unshifted interval
containing site~$j$, respectively. Define the selection flags by
\begin{align}
  \chi_j=1
  \qquad\textrm{if and only if}\qquad
  \widehat{J}_{s(j)}\subseteq A_{r(j)}.
  \label{eq:augmented-containment-claim}
\end{align}
The output is assembled by
\begin{align}
  Z_j=
  \begin{cases}
    X_j\qquad &\textrm{if }\chi_j=1,\\
    Y_j\qquad &\textrm{if }\chi_j=0,
  \end{cases}
  \qquad\textrm{for }j\in\{1,\ldots,n\}.
  \label{eq:local-selection-rule}
\end{align}

\begin{lemma}[Selection of whole components]
\label{lem:component-source-assignment}
Assume~$W_{\max}<W_*$. The flags in
Eq.~\eqref{eq:augmented-containment-claim} are constant on every
component, and the rule in Eq.~\eqref{eq:local-selection-rule} copies
each component from a single interval sample whose interval contains it.
\end{lemma}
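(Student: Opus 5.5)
The plan is to verify the two claims separately: first that the flags are constant on components, then that each component's source interval contains it.

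\textbf{Constancy of the flags.} Fix a component $J_s$ and a site $j\in J_s$. The flag $\chi_j$ depends on $j$ only through $s(j)=s$ and $r(j)$. So it suffices to show that $\chi_j=1$ forces $\chi_{j'}=1$ for every other $j'\in J_s$.
\begin{itemize}
\item If $\chi_j=1$, then $J_s\subseteq\widehat J_s\subseteq A_{r(j)}$.
\item Since the $A_r$ are disjoint, $r(j')=r(j)$ for all $j'\in J_s$, hence $\chi_{j'}=1$.
\item Conversely, if $\chi_j=0$ and some $\chi_{j'}=1$, the same argument applied to $j'$ would give $\chi_j=1$, a contradiction.
\end{itemize}
Thus $\chi$ is constant on $J_s$, and its common value is $1$ exactly when $\widehat J_s$ lies in some single $A_r$. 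This matches the assignment rule stated before Eq.~\eqref{eq:augmented-containment-claim}.

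\textbf{Validity of the source.} When the flag is $1$, every bit of $J_s$ is copied from $X$. These bits all lie in the single block $X_{A_{r}}$ with $J_s\subseteq A_r$, so the component comes from one interval sample containing it.

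When the flag is $0$, every bit of $J_s$ is copied from $Y$.
\begin{itemize}
\item Under $W_{\max}<W_*$, Eq.~\eqref{eq:augmented-component-length} gives $|\widehat J_s|\leq \ell/2-1<\ell/2$.
\item The covering property~\eqref{it:coveringproperty} then places $\widehat J_s$ inside some $A_r$ or some $B_t$.
\item The first option is excluded because the flag is $0$, so $\widehat J_s\subseteq B_t$ for some $t$.
\item Hence $J_s\subseteq B_t$, and since the $B_t$ are disjoint, all bits of $J_s$ come from the single block $Y_{B_t}$.
\end{itemize}

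\textbf{Main obstacle.} The delicate step is justifying the covering property for $\widehat J_s$ in the case where it meets an $A$-boundary. I would argue as follows.
\begin{itemize}
\item Since $|\widehat J_s|<\ell$, $\widehat J_s$ can straddle at most one boundary $r\ell$, meaning it contains both $r\ell$ and $r\ell+1$.
\item Write $\widehat J_s=[a,b]$ with $a\leq r\ell<b$. The length bound gives $b-a+1\leq\ell/2-1$.
\item Therefore $a>r\ell-\ell/2$ and $b\leq r\ell+\ell/2$, so $\widehat J_s\subseteq B_r$. Near the ends of the line, $B_r$ is truncated to $[n]$, which is harmless because $\widehat J_s\subseteq[n]$.
\end{itemize}

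The assumption $W_{\max}<W_*$ is used only in this shifted case. It is also what makes the choice depend only on $f$ and the fixed intervals, as Lemma~\ref{lem:combining-product-marginals} requires.
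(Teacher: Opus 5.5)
Your proposal is correct and follows the paper's proof. Constancy of the flags comes from $J_s\subseteq\widehat J_s\subseteq A_{r(j)}$ together with disjointness of the $A_r$, and in the flag-zero case the length bound $|\widehat J_s|\leq\ell/2-1$ plus the covering property gives a single $B_t\supseteq\widehat J_s$. You additionally re-derive the covering property, which the paper justifies separately in Section~\ref{sec:two-forests}, and your closing remark needs a small correction. The flags are a function of the cut indicators alone whether or not $W_{\max}<W_*$ holds, so that hypothesis is needed only to guarantee that a containing shifted interval exists, not to make the assignment independent of the sampled values.
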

\begin{proof}
If $\widehat J_s\subseteq A_r$ for some~$r$, every site of~$J_s$ lies
in that same interval and has flag~$1$. The component is therefore
copied wholly from~$X_{A_r}$.
Otherwise, every site of~$J_s$ has flag~$0$.
The augmented component has length at most~$\ell/2$ by
Eq.~\eqref{eq:augmented-component-length}, so the covering property
places it in one shifted interval~$B_t$. The component is then copied
wholly from~$Y_{B_t}$.
\end{proof}

Together with Lemma~\ref{lem:combining-product-marginals}, this proves
that the selection rule gives the target distribution.

We determine the flags by testing, within each unshifted interval,
whether there is a cut on each side of a site. The required subprogram
\routine{IntervalContainment}$_N$ acts on an interval of~$N$ sites with
cut indicators $\{E_{1/2+k}\}_{k=0}^{N}$, where the two boundary indicators decide whether components reaching the
interval endpoints are accepted; in the calls below, $N=m_r=|A_r|$.
The interval-containment function
\begin{align}
\begin{matrix}
  \mathrm{IntervalContainment}_N\colon\ & \bits^{N+1}&\to&\bits^{N}\\
&   \left\{E_{1/2+k}\right\}_{k=0}^{N} & \mapsto &\left\{\chi_j\right\}_{j=1}^{N}
\end{matrix}
\label{eq:interval-containment-map}
\end{align}
maps a cut configuration to an indicator on the integer sites $j\in\{1,\ldots,N\}$, defined by
\begin{align}
  \chi_j=
  \left(\bigvee_{k=0}^{j-1}E_{1/2+k}\right)
  \land
  \left(\bigvee_{k=j}^{N}E_{1/2+k}\right)
  \label{eq:interval-containment-output}
\end{align}
for each $j\in\{1,\ldots,N\}$. Thus $\chi_j=1$ precisely when site~$j$
has a cut strictly to its left and a cut strictly to its right.
Section~\ref{sec:messagepassingalgorithmcontainment} gives a
message-passing implementation of~\routine{IntervalContainment}$_N$ with parallel runtime
$O(\log(N+1))$ and $O(N)$ elementary real operations (Theorem~\ref{cor:interval-containment-circuit}); its messages and the selection flags are bits, represented by the real
constants~$0$ and~$1$.

\subsection{The BoundaryPass algorithm}
\label{sec:boundarypass-algorithm}
The subprogram~\routine{BoundaryPass} calls~\routine{IntervalContainment}$_{m_r}$
separately on each unshifted interval~$A_r$, where $m_r=|A_r|$. Artificial boundary indicators
are set to~$0$, so a component is accepted only when both bounding cuts
are visible inside that interval. At a physical endpoint of the chain,
the corresponding boundary indicator is set to~$1$.

\begin{algorithm}[H]
\caption{\routine{BoundaryPass}: assemble interval samples by selecting one source per component.}
\label{alg:boundary-selection}
\small
\begin{algorithmic}[1]
\Require Internal cut indicators $E=\{E_{1/2+k}\}_{k=1}^{n-1}$, an even interval length $\ell\geq2$, and strings $X,Y\in\bits^n$ on the partitions in Eqs.~\eqref{eq:A-intervals}--\eqref{eq:B-intervals}.
\Ensure A string $Z\in\bits^n$ with $Z_j=X_j$ when $\chi_j=1$, and $Z_j=Y_j$ otherwise. Under the hypotheses of Theorem~\ref{thm:local-selection}, $Z\sim Q_Z$.
\Function{BoundaryPass}{$E,\ell,X,Y$}
\ForAll{$r=0,\ldots,\left\lceil n/\ell\right\rceil-1$ \textbf{in parallel}}
  \State Let $A_r=(r\ell,(r+1)\ell]\cap[n]=\{a_r^{\min},\ldots,a_r^{\max}\}$.
  \State $m_r\gets a_r^{\max}-a_r^{\min}+1$.
  \State $\tilde{E}_{1/2}\gets 0$ and $\tilde{E}_{1/2+m_r}\gets 0$.
  \ForAll{$k=1,\ldots,m_r-1$ \textbf{in parallel}}
    \State $\tilde{E}_{1/2+k}\gets E_{1/2+a_r^{\min}-1+k}$.
  \EndFor
  \If{$r=0$}
    \State $\tilde{E}_{1/2}\gets 1$.
  \EndIf
  \If{$r=\left\lceil n/\ell\right\rceil-1$}
    \State $\tilde{E}_{1/2+m_r}\gets 1$.
  \EndIf
  \State $(\chi_{a_r^{\min}},\ldots,\chi_{a_r^{\max}})\gets\Call{IntervalContainment\ensuremath{_{m_r}}}{\{\tilde{E}_{1/2+k}\}_{k=0}^{m_r}}$.
\EndFor
\ForAll{$j=1,\ldots,n$ \textbf{in parallel}}
  \If{$\chi_j=1$}
    \State $Z_j\gets X_j$.
  \Else
    \State $Z_j\gets Y_j$.
  \EndIf
\EndFor
\State \Return $\{Z_j\}_{j=1}^{n}$.
\EndFunction
\end{algorithmic}
\end{algorithm}

\begin{theorem}[Correctness of \routine{BoundaryPass}]
\label{thm:local-selection}
Let $W_*\geq1$ be an integer, let $\ell=2W_*+4$, and assume $W_{\max}<W_*$. Let $X$ and $Y$ be sampled independently as in Eqs.~\eqref{eq:factorizationQdistributionA} and~\eqref{eq:factorizationQdistributionB} from the marginals of~$Q_Z$. Then the output~$Z$ of \textnormal{\routine{BoundaryPass}} (Algorithm~\ref{alg:boundary-selection}) has distribution~$Q_Z$.
\end{theorem}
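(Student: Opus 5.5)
The plan is to reduce the statement to Lemma~\ref{lem:combining-product-marginals} and Lemma~\ref{lem:component-source-assignment}. The family $\{I_a\}$ is the union of the two partitions $\{A_r\}_r$ and $\{B_s\}_s$. The samples $V^{(a)}$ are the substrings $X_{A_r}$ and $Y_{B_s}$. These are mutually independent with the correct marginals by Eqs.~\eqref{eq:factorizationQdistributionA}--\eqref{eq:factorizationQdistributionB} and the independence of $X$ and $Y$. The assignment $\alpha$ sends $J_s$ to $A_r$ if $\widehat J_s\subseteq A_r$, and otherwise to the shifted interval $B_t\supseteq\widehat J_s$. This assignment is a function of the cut configuration (hence of~$f$) and of the fixed intervals only, never of the sampled bits. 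So once we know that \routine{BoundaryPass} realizes exactly the rule in Eq.~\eqref{eq:local-selection-rule}, with $\chi_j$ as in Eq.~\eqref{eq:augmented-containment-claim}, the theorem follows from the two lemmas. Lemma~\ref{lem:component-source-assignment} already gives that $\chi$ is constant on components and that each component is copied from a containing interval.

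The substantive step is to show that the flag computed by \routine{IntervalContainment}$_{m_r}$ on the modified indicators $\tilde E$ equals $[\widehat J_{s(j)}\subseteq A_{r(j)}]$. Fix $r$ and a site $j\in A_r$ at local position $j'=j-a_r^{\min}+1$. By Eq.~\eqref{eq:interval-containment-output}, the output is $1$ iff there is some $k\in\{0,\dots,j'-1\}$ with $\tilde E_{1/2+k}=1$ and some $k\in\{j',\dots,m_r\}$ with $\tilde E_{1/2+k}=1$.

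Consider the left side. An internal $k\ge1$ with $\tilde E_{1/2+k}=1$ is a genuine cut strictly inside $A_r$ and to the left of $j$. Hence the component's left cut $C_{s(j)}$ lies at position $\ge a_r^{\min}+1/2$, so the site $C_{s(j)}-1/2$ belongs to $A_r$. For $k=0$, $\tilde E_{1/2}=1$ only when $r=0$. Then either some internal cut exists, or $C_{s(j)}=1/2$, where $\widehat J_s$ adds no site on the left. Conversely, if the left end of $\widehat J_{s(j)}$ lies in $A_r$, then either $C_{s(j)}$ is an interior cut of $A_r$, or it is the physical boundary and $r=0$.

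The right side is handled symmetrically, including the case $r=r_{\max}$ with $\tilde E_{1/2+m_r}=1$. One subtlety needs checking. A cut lying exactly at the artificial boundary $a_r^{\min}-1/2$ is deliberately set to $0$. This is consistent with the definition, because then $\widehat J_s$ contains $a_r^{\min}-1\notin A_r$. Combining the two sides gives equivalence with $\widehat J_{s(j)}\subseteq A_r$.

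I expect this index bookkeeping to be the main obstacle: matching the half-integer cut positions, the local renumbering, and the truncation of $\widehat J_s$ at the physical ends. I would handle it by writing $\widehat J_s=[L,R]$ explicitly and checking the four cases (artificial versus physical left boundary, and likewise on the right). It also remains to confirm that the shifted interval used by Lemma~\ref{lem:component-source-assignment} exists. This follows from Eq.~\eqref{eq:augmented-component-length} and the covering property, both valid under $W_{\max}<W_*$ with $\ell=2W_*+4$. Finally, since the selection flags depend only on $E$ and the $Z_j$ are copied sitewise, the output coincides with the string of Lemma~\ref{lem:combining-product-marginals}, which therefore has distribution $Q_Z$.
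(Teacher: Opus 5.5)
Your proposal is correct and follows the paper's proof. Both first check that \routine{IntervalContainment} on the modified indicators computes $\chi_j=[\widehat J_{s(j)}\subseteq A_{r(j)}]$, using that any visible cut on one side of $j$ forces the nearest bounding cut of $J_{s(j)}$ to be visible, and then both invoke Lemma~\ref{lem:component-source-assignment} and Lemma~\ref{lem:combining-product-marginals}. Your explicit bookkeeping of local indices and of the artificial versus physical boundaries spells out what the paper compresses into a single sentence.
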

\begin{proof}[Proof of Theorem~\ref{thm:local-selection}]
For each~$A_r$, the cut string supplied to~\routine{IntervalContainment}$_{m_r}$
consists of the physical cuts strictly inside~$A_r$, together with a
boundary cut at each physical endpoint of the chain belonging to~$A_r$.
By Eq.~\eqref{eq:interval-containment-output}, the flag at site~$j$ is~$1$
exactly when a cut is visible on each side of~$j$.
If a more distant cut is visible on either side, the nearest bounding
cut of~$J_{s(j)}$ is visible as well. Thus the two bounding cuts are
visible exactly when
$\widehat J_{s(j)}\subseteq A_{r(j)}$, with the same statement holding
at the physical endpoints. The algorithm therefore computes the flags
in Eq.~\eqref{eq:augmented-containment-claim} and applies
Eq.~\eqref{eq:local-selection-rule}.

By Lemma~\ref{lem:component-source-assignment}, it assigns each whole
component to one containing interval sample. This assignment depends
only on the fixed cuts, and all the interval samples are independent.
Lemma~\ref{lem:combining-product-marginals} therefore gives the output
distribution~$Q_Z$.
\end{proof}

\begin{corollary}[Complexity of \routine{BoundaryPass}]
\label{cor:local-selection-circuit}
Given the cut indicators $\{E_{1/2+k}\}_{k=1}^{n-1}$ and candidate strings
$X,Y\in\bits^n$, \textnormal{\routine{BoundaryPass}} (Algorithm~\ref{alg:boundary-selection}), with
$\ell=2W_*+4$, runs in parallel runtime $O(\log(W_*+1))$ and uses
$O(n)$ elementary real operations. In particular, for $W_*$ as in
Eq.~\eqref{eq:Wstar-choice} we have
$W_*=O\bigl(p^{-d}\log(n/\delta)\bigr)$, and the parallel runtime is
$O\!\left(\log\!\left(2+p^{-d}\log(n/\delta)\right)\right)$.
\end{corollary}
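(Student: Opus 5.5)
The plan is to count operations and dependent levels line by line in Algorithm~\ref{alg:boundary-selection}. The cost is dominated by the calls to \routine{IntervalContainment}$_{m_r}$, for which we invoke Theorem~\ref{cor:interval-containment-circuit}: runtime $O(\log(N+1))$ and $O(N)$ operations on an interval of $N$ sites. Since $m_r=|A_r|\leq\ell=2W_*+4$ by the bounded-length property, each call has runtime $O(\log(2W_*+5))=O(\log(W_*+1))$ and uses $O(m_r)$ operations. All $r_{\max}+1=\lceil n/\ell\rceil$ calls are independent and run in parallel. The total operation count is therefore $\sum_r O(m_r)=O(\sum_r m_r)=O(n)$, because the $A_r$ partition $[n]$.

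The remaining steps are cheap. First, the endpoints $a_r^{\min},a_r^{\max}$ and the lengths $m_r$ are fixed by $n$ and $\ell$, so they can be treated as part of the algorithm's wiring. If we count them as computed, each costs $O(1)$ arithmetic operations per interval, so $O(n/\ell)$ in total with constant depth. Second, copying the inner cut indicators into $\tilde E$ and setting the boundary indicators to $0$ or $1$ takes $O(m_r)$ assignments per interval, all at depth $O(1)$; the two conditionals on $r$ are selections fixed by the index. Third, the final assembly $Z_j\gets\chi_j?X_j:Y_j$ is one selection per site, so $n$ operations at depth $1$ once the flags are available.

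Summing dependent levels gives runtime $O(1)+O(\log(W_*+1))+O(1)=O(\log(W_*+1))$, and summing operations gives $O(n)$.

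For the ``in particular'' statement, Lemma~\ref{lem:maxcircuitwidth} gives the upper bound $W_*\leq p^{-d}\log(n/\delta)+2$ (Eq.~\eqref{eq:Wstar-two-sided}, using $q=p^d$), so $W_*=O(p^{-d}\log(n/\delta))$. Hence $\log(W_*+1)\leq\log(3+p^{-d}\log(n/\delta))=O(\log(2+p^{-d}\log(n/\delta)))$. This upper bound holds for every $p\in(0,1)$, not only constant $p$. The only point needing care is that the argument of the logarithm stays bounded away from $1$ so that the $O(\cdot)$ absorbs the additive constants, which is why the bound is stated with $2+{}$ inside the logarithm. I expect no real obstacle: the substantive work, the logarithmic-depth prefix-OR computation, is delegated to Theorem~\ref{cor:interval-containment-circuit}.
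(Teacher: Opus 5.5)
Your proposal is correct and follows essentially the same route as the paper: constant-depth copying of cut indicators and boundary bits, parallel calls to \routine{IntervalContainment}$_{m_r}$ bounded via Theorem~\ref{cor:interval-containment-circuit} with $m_r\leq\ell=2W_*+4$ and $\sum_r m_r=n$, and a final per-site selection. You also spell out the use of Eq.~\eqref{eq:Wstar-two-sided}, which the paper only cites as Lemma~\ref{lem:maxcircuitwidth}; that step is fine, and the $+2$ is absorbed because $p^{-d}\log(n/\delta)>\log 2$.
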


\begin{proof}[Proof of Corollary~\ref{cor:local-selection-circuit}]
For each interval~$A_r$, copy the cut indicators and set the two boundary
bits. The boundary values depend only on~$r$. These operations take
constant parallel runtime and $O(n)$ elementary real operations in total.

Run \routine{IntervalContainment}$_{m_r}$ on all intervals~$A_r$ in parallel.
Each call has input length $m_r\leq\ell=2W_*+4$, so
Theorem~\ref{cor:interval-containment-circuit} gives parallel runtime
$O(\log(m_r+1))=O(\log(W_*+1))$ and $O(m_r)$ operations.
Since $\sum_r m_r=n$, the total number of operations is $O(n)$.
Lemma~\ref{lem:maxcircuitwidth} gives the bound in terms of~$n$.

Finally, select $X_j$ or $Y_j$ at every site in parallel. This adds
constant parallel runtime and $n$ operations.
\end{proof}

\section{Message passing for interval containment}
\label{sec:messagepassingalgorithmcontainment}
We give the implementation of the subprogram \routine{IntervalContainment}$_N$ used
in Section~\ref{sec:boundarypass-algorithm}, which computes the flags of
Eq.~\eqref{eq:interval-containment-output} from the $N+1$ cut indicators $\{E_{1/2+k}\}_{k=0}^{N}$ of an interval of~$N$ sites. It uses two passes on a balanced binary tree (Algorithm~\ref{alg:interval-containment}): an upward pass records whether each subtree
contains a cut, and a downward pass supplies the cuts outside that subtree.

\begin{algorithm}[!htbp]
\caption{\routine{IntervalContainment}$_N$: test interval containment by two passes on a binary tree.}
\label{alg:interval-containment}
\small
\begin{algorithmic}[1]
\Require A vector $E=\{E_{1/2+k}\}_{k=0}^{N}\in\bits^{N+1}$ of cut indicators, including both interval boundaries; $N\geq1$.
\Ensure $\chi\in\bits^N$, with $\chi_j=1$ exactly when there is a cut on each side of site $j$, as in Eq.~\eqref{eq:interval-containment-output}.
\Function{IntervalContainment\ensuremath{_N}}{$E$}
\State Set $\ell\gets \lceil \log_2(N+1)\rceil$.
\State Build a complete balanced binary tree whose ordered leaves are indexed by $0,1,\ldots,2^\ell-1$.
\ForAll{leaves $k\in\{0,\ldots,N\}$ \textbf{in parallel}}
  \State $h_k\gets E_{1/2+k}$.
\EndFor
\ForAll{dummy leaves $k\in\{N+1,\ldots,2^\ell-1\}$ \textbf{in parallel}}
  \State $h_k\gets 0$.
\EndFor
\ForAll{internal nodes $v$ in bottom-up order, \textbf{in parallel} within each level}
  \State Let $u$ and $w$ be the left and right children of~$v$.
  \State $h_v\gets h_u\lor h_w$.
\EndFor
\State $(L_{\mathrm{root}},R_{\mathrm{root}})\gets(0,0)$.
\ForAll{internal nodes $v$ in top-down order, \textbf{in parallel} within each level}
  \State Let $u$ and $w$ be the left and right children of~$v$.
  \State Send $(L_v,\ R_v\lor h_w)$ to~$u$.
  \State Send $(L_v\lor h_u,\ R_v)$ to~$w$.
\EndFor
\ForAll{$j=1,\ldots,N$ \textbf{in parallel}}
  \State $\chi_j\gets L_j\land\left(E_{1/2+j}\lor R_j\right)$.
\EndFor
\State \Return $\chi=\{\chi_j\}_{j=1}^{N}$.
\EndFunction
\end{algorithmic}
\end{algorithm}

\begin{theorem}[Correctness and complexity of \routine{IntervalContainment}$_N$]
\label{cor:interval-containment-circuit}
Given the cut indicators, \textnormal{\routine{IntervalContainment}}$_N$
(Algorithm~\ref{alg:interval-containment}) returns the containment flags~$\chi$
specified by Eq.~\eqref{eq:interval-containment-output}. It has parallel runtime $O(\log(N+1))$ and uses $O(N)$ elementary real operations.
\end{theorem}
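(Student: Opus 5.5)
We prove correctness by identifying, for every node $v$ of the tree, an invariant describing what the upward and downward messages encode, and then reading off $\chi_j$ at the leaves. Identify the leaves with positions $k\in\{0,\ldots,2^{\lceil\log_2(N+1)\rceil}-1\}$. Each node $v$ then corresponds to a contiguous block of leaves $S_v$, and we write $E_{1/2+k}:=0$ for the dummy leaves $k>N$.

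\emph{Upward invariant.} By induction from the leaves, $h_v=\bigvee_{k\in S_v}E_{1/2+k}$: this holds at leaves by initialization, and the rule $h_v=h_u\lor h_w$ preserves it because $S_v=S_u\sqcup S_w$.

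\emph{Downward invariant.} By induction from the root,
\[
L_v=\bigvee_{k<\min S_v}E_{1/2+k},\qquad R_v=\bigvee_{k>\max S_v}E_{1/2+k}.
\]
At the root both sides are empty disjunctions, so the value $0$ is correct. For a left child $u$ of $v$, the cuts left of $S_u$ are exactly those left of $S_v$, giving $L_u=L_v$. The cuts right of $S_u$ are those in $S_w$ together with those right of $S_v$, giving $R_u=R_v\lor h_w$. The case of a right child is symmetric.

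\emph{Reading off the flags.} At leaf $j\in\{1,\ldots,N\}$ we have $S_j=\{j\}$, so $L_j=\bigvee_{k=0}^{j-1}E_{1/2+k}$. Similarly,
\[
E_{1/2+j}\lor R_j=\bigvee_{k\geq j}E_{1/2+k}=\bigvee_{k=j}^{N}E_{1/2+k},
\]
where the last equality uses that the dummy leaves contribute $0$. Hence $\chi_j$ coincides with Eq.~\eqref{eq:interval-containment-output}.

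\emph{Complexity.} The tree has $2^{\lceil\log_2(N+1)\rceil}<2(N+1)$ leaves and hence $O(N)$ nodes. Each node costs $O(1)$ Boolean operations in the upward pass and $O(1)$ in the downward pass, the final step costs $O(1)$ per leaf, and all Boolean operations are elementary. Each pass takes time equal to the tree depth, $O(\log(N+1))$, since the nodes within a level are handled in parallel. The only step needing care is the downward invariant, specifically handling the site-$j$ leaf together with its own right indicator $E_{1/2+j}$ and the dummy leaves; the rest is routine.
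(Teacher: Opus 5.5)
Your proposal is correct and follows essentially the same route as the paper: the same upward invariant for $h_v$, the same downward invariant with $L_v$ and $R_v$ as the ORs of cut indicators strictly left and right of the node's leaf block, the same read-off at leaf $j$ in which the dummy leaves contribute $0$, and the same node-count and tree-height complexity bound. You spell out the left-child/right-child update step slightly more explicitly than the paper, but the argument is the same.
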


Fig.~\ref{fig:interval-containment-examples} shows two explicit message-passing examples for $N=7$.

\begin{proof}
Let $I(v)=[a_v,b_v]$ be the interval of leaf indices below a node~$v$.
The upward phase maintains the invariant
\begin{align}
  h_v=\bigvee_{\substack{k\in I(v)\\ k\leq N}}E_{1/2+k},
\end{align}
with dummy leaves contributing~$0$.  In the downward phase, $L_v$ is the OR
of the cut indicators with index smaller than~$a_v$, while $R_v$ is the OR of
those with index larger than~$b_v$.  The root values are therefore~$(0,0)$,
and the two update rules in \routine{IntervalContainment}$_N$ (Algorithm~\ref{alg:interval-containment}) preserve
this invariant: the sibling summary supplies exactly the indices that become
external when passing to a child.

At leaf~$j$, the invariant gives
\begin{align}
  L_j=\bigvee_{k=0}^{j-1}E_{1/2+k},
  \qquad
  E_{1/2+j}\lor R_j=\bigvee_{k=j}^{N}E_{1/2+k}.
\end{align}
The output rule is therefore exactly Eq.~\eqref{eq:interval-containment-output}.
The tree is fixed by~$N$; it has $O(N)$ nodes and height
$O(\log(N+1))$. Each node performs a constant number of Boolean
operations in each pass, and each level is processed in parallel. Initialization and output take constant parallel
runtime and $O(N)$ operations, which gives the claimed bounds.
\end{proof}

Appendix~\ref{app:containment-bits} shows that the combining stage
introduces no finite-precision error. Appendix~\ref{app:noise-bits} gives the finite-bit
noise generation that supplies the cut indicators.

\newcommand{\intervalcontainmentfigure}{%
\begin{figure}[H]
\centering
\begin{subfigure}{\textwidth}
\centering
\resizebox{0.78\textwidth}{!}{%
\begin{tikzpicture}[
  x=1.45cm,
  y=1.1cm,
  every node/.style={font=\scriptsize},
  treenode/.style={draw,rounded corners=2pt,align=center,inner sep=2pt,minimum width=1.2cm,fill=blue!5},
  cutleaf/.style={treenode},
  outbit/.style={font=\scriptsize,align=center}
]
  \foreach \x in {1.5,2.5,4.5,5.5,7.5}{
    \draw[black!65,densely dotted] (\x,-1.25) -- (\x,0.72);
  }
  \foreach \x in {0.5,3.5,6.5}{
    \draw[red,thick] (\x,-1.25) -- (\x,0.72);
  }

  \node[cutleaf] (l0) at (0,0) {$k=0$\\$h=1$\\$(0,1)$};
  \node[treenode] (l1) at (1,0) {$k=1$\\$h=0$\\$(1,1)$};
  \node[treenode] (l2) at (2,0) {$k=2$\\$h=0$\\$(1,1)$};
  \node[cutleaf] (l3) at (3,0) {$k=3$\\$h=1$\\$(1,1)$};
  \node[treenode] (l4) at (4,0) {$k=4$\\$h=0$\\$(1,1)$};
  \node[treenode] (l5) at (5,0) {$k=5$\\$h=0$\\$(1,1)$};
  \node[cutleaf] (l6) at (6,0) {$k=6$\\$h=1$\\$(1,0)$};
  \node[treenode] (l7) at (7,0) {$k=7$\\$h=0$\\$(1,0)$};

  \node[treenode] (n01) at (0.5,1.45) {$[0,1]$\\$h=1$\\$(0,1)$};
  \node[treenode] (n23) at (2.5,1.45) {$[2,3]$\\$h=1$\\$(1,1)$};
  \node[treenode] (n45) at (4.5,1.45) {$[4,5]$\\$h=0$\\$(1,1)$};
  \node[treenode] (n67) at (6.5,1.45) {$[6,7]$\\$h=1$\\$(1,0)$};

  \node[treenode] (n03) at (1.5,2.9) {$[0,3]$\\$h=1$\\$(0,1)$};
  \node[treenode] (n47) at (5.5,2.9) {$[4,7]$\\$h=1$\\$(1,0)$};

  \node[treenode] (root) at (3.5,4.35) {$[0,7]$\\$h=1$\\$(0,0)$};

  \draw[-Latex] (root) -- (n03);
  \draw[-Latex] (root) -- (n47);
  \draw[-Latex] (n03) -- (n01);
  \draw[-Latex] (n03) -- (n23);
  \draw[-Latex] (n47) -- (n45);
  \draw[-Latex] (n47) -- (n67);
  \draw[-Latex] (n01) -- (l0);
  \draw[-Latex] (n01) -- (l1);
  \draw[-Latex] (n23) -- (l2);
  \draw[-Latex] (n23) -- (l3);
  \draw[-Latex] (n45) -- (l4);
  \draw[-Latex] (n45) -- (l5);
  \draw[-Latex] (n67) -- (l6);
  \draw[-Latex] (n67) -- (l7);

  \node[font=\scriptsize,anchor=east] at (-0.45,4.35) {root};
  \node[font=\scriptsize,anchor=east] at (-0.45,0) {cut leaves};

  \foreach \x/\j/\val in {
    1/1/1,2/2/1,3/3/1,4/4/1,5/5/1,6/6/1,7/7/0
  }{
    \node[outbit] at (\x,-0.9) {$j=\j$\\$\chi_{\j}=\val$};
  }
  \node[font=\scriptsize,anchor=east] at (-0.45,-0.9) {outputs};
\end{tikzpicture}%
}
\caption{$E_{1/2+0}=E_{1/2+3}=E_{1/2+6}=1$, giving $\chi=(1,1,1,1,1,1,0)$.}
\label{fig:interval-containment-example-three-cuts}
\end{subfigure}

\vspace{0.6em}

\begin{subfigure}{\textwidth}
\centering
\resizebox{0.78\textwidth}{!}{%
\begin{tikzpicture}[
  x=1.45cm,
  y=1.1cm,
  every node/.style={font=\scriptsize},
  treenode/.style={draw,rounded corners=2pt,align=center,inner sep=2pt,minimum width=1.2cm,fill=blue!5},
  cutleaf/.style={treenode},
  outbit/.style={font=\scriptsize,align=center}
]
  \foreach \x in {1.5,2.5,3.5,5.5,6.5,7.5}{
    \draw[black!65,densely dotted] (\x,-1.25) -- (\x,0.72);
  }
  \foreach \x in {0.5,4.5}{
    \draw[red,thick] (\x,-1.25) -- (\x,0.72);
  }

  \node[cutleaf] (l0) at (0,0) {$k=0$\\$h=1$\\$(0,1)$};
  \node[treenode] (l1) at (1,0) {$k=1$\\$h=0$\\$(1,1)$};
  \node[treenode] (l2) at (2,0) {$k=2$\\$h=0$\\$(1,1)$};
  \node[treenode] (l3) at (3,0) {$k=3$\\$h=0$\\$(1,1)$};
  \node[cutleaf] (l4) at (4,0) {$k=4$\\$h=1$\\$(1,0)$};
  \node[treenode] (l5) at (5,0) {$k=5$\\$h=0$\\$(1,0)$};
  \node[treenode] (l6) at (6,0) {$k=6$\\$h=0$\\$(1,0)$};
  \node[treenode] (l7) at (7,0) {$k=7$\\$h=0$\\$(1,0)$};

  \node[treenode] (n01) at (0.5,1.45) {$[0,1]$\\$h=1$\\$(0,1)$};
  \node[treenode] (n23) at (2.5,1.45) {$[2,3]$\\$h=0$\\$(1,1)$};
  \node[treenode] (n45) at (4.5,1.45) {$[4,5]$\\$h=1$\\$(1,0)$};
  \node[treenode] (n67) at (6.5,1.45) {$[6,7]$\\$h=0$\\$(1,0)$};

  \node[treenode] (n03) at (1.5,2.9) {$[0,3]$\\$h=1$\\$(0,1)$};
  \node[treenode] (n47) at (5.5,2.9) {$[4,7]$\\$h=1$\\$(1,0)$};

  \node[treenode] (root) at (3.5,4.35) {$[0,7]$\\$h=1$\\$(0,0)$};

  \draw[-Latex] (root) -- (n03);
  \draw[-Latex] (root) -- (n47);
  \draw[-Latex] (n03) -- (n01);
  \draw[-Latex] (n03) -- (n23);
  \draw[-Latex] (n47) -- (n45);
  \draw[-Latex] (n47) -- (n67);
  \draw[-Latex] (n01) -- (l0);
  \draw[-Latex] (n01) -- (l1);
  \draw[-Latex] (n23) -- (l2);
  \draw[-Latex] (n23) -- (l3);
  \draw[-Latex] (n45) -- (l4);
  \draw[-Latex] (n45) -- (l5);
  \draw[-Latex] (n67) -- (l6);
  \draw[-Latex] (n67) -- (l7);

  \node[font=\scriptsize,anchor=east] at (-0.45,4.35) {root};
  \node[font=\scriptsize,anchor=east] at (-0.45,0) {cut leaves};

  \foreach \x/\j/\val in {
    1/1/1,2/2/1,3/3/1,4/4/1,5/5/0,6/6/0,7/7/0
  }{
    \node[outbit] at (\x,-0.9) {$j=\j$\\$\chi_{\j}=\val$};
  }
  \node[font=\scriptsize,anchor=east] at (-0.45,-0.9) {outputs};
\end{tikzpicture}%
}
\caption{$E_{1/2+0}=E_{1/2+4}=1$ and all other $E_{1/2+k}=0$, giving $\chi=(1,1,1,1,0,0,0)$.}
\label{fig:interval-containment-example-two-cuts}
\end{subfigure}
\caption{Message passing for two cut strings. The vertical lines mark the half-integer positions $1/2+k$: solid red lines indicate $E_{1/2+k}=1$, and dotted black lines indicate $E_{1/2+k}=0$. For an internal vertex~$v$, write $I(v)=[a_v,b_v]$ for the associated interval of cut indices. Each internal vertex displays $I(v)$, the upward message~$h$, and the downward pair~$(L,R)$.}
\label{fig:interval-containment-examples}
\end{figure}
}

\intervalcontainmentfigure

\section{Exact local sampling}
\label{sec:circuitsampling}

Fix a noise realization~$f$ as defined in Section~\ref{sec:convexdecompo}.
To supply the input strings for \routine{BoundaryPass}
(Algorithm~\ref{alg:boundary-selection}) in Section~\ref{sec:marginalsamples}, we must
produce independent exact samples from the marginals of the distribution
\begin{align}
  Q^f_Z(z):=P^{(p)}(z\mid F=f)
  =
  \abs{\langle z|\tsU(f)|0^n\rangle}^2\ ,\qquad z\in \{0,1\}^n\ .
  \label{eq:fixed-noise-output}
\end{align}
This is the distribution denoted by $Q_Z$ in
Section~\ref{sec:marginalsamples}, with its dependence on~$f$ made explicit.
For a contiguous interval $A\subset[n]$, the target marginal is
\begin{align}
  Q^f_{Z_A}(a)
  =
  \sum_{z\in\bits^n:\ z_A=a} Q^f_Z(z),
  \qquad a\in\bits^A .
  \label{eq:local-marginal-target}
\end{align}
Let $\Lambda(A)$ be the backward light cone of~$A$ in~$\tsU(f)$.  Since the
circuit has depth~$d$ and nearest-neighbor gates, $\Lambda(A)$ is again a
contiguous interval and
\begin{align}
  |\Lambda(A)|\leq |A|+2d .
\end{align}
All gates outside this light cone can be discarded since the marginal distribution $Q^f_{Z_A}$ does not depend on these gates.  
 After relabelling
$\Lambda(A)$ as $[L]$ (where $L=|\Lambda(A)|$), we need to sample the full output of the resulting depth-$d$
circuit on $L$~qubits and retain only the sites corresponding to~$A$.  For an interval~$A$ as considered in Section~\ref{sec:two-forests}, we have $L\leq 2W_*+4+2d$.

\begin{algorithm}[!htbp]
\caption{\routine{LocalSample}: sample a marginal by restricting to its backward light cone.}
\label{alg:local-marginal-sampler}
\small
\begin{algorithmic}[1]
\Require A brickwork circuit $\sU$ on $n$ qubits, a noise fixing $f$ for that circuit, and a nonempty contiguous interval $A\subseteq[n]$.
\Ensure A string $z_A\in\bits^A$ with distribution $Q^f_{Z_A}=P^{(p)}_{Z_A}(\cdot\mid F=f)$.
\Function{LocalSample}{$\sU,f,A$}
\State Form the fixed unitary circuit $\tsU(f)$ specified by the noise realization.
\State Let $\Lambda(A)$ be the backward light cone of $A$ in $\tsU(f)$.
\State Discard all gates outside $\Lambda(A)$ and relabel $\Lambda(A)$ as $[L]$.
\State Coarse grain this circuit into a bilayer circuit $\sU_{\mathrm{bi}}$ using the divisibility convention of Section~\ref{sec:bilayer-reduction}; retain the original site map.
\State $z\gets\Call{TreeSample}{\sU_{\mathrm{bi}}}$.
\State Decode the qudit labels into qubit bits.
\State \Return the substring $z_A$ corresponding to the original interval $A$.
\EndFunction
\end{algorithmic}
\end{algorithm}
This is summarized by \routine{LocalSample}
(Algorithm~\ref{alg:local-marginal-sampler}). It recasts the restricted
circuit as a depth-two circuit and calls \routine{TreeSample}
(Algorithm~\ref{alg:joining-tree}). The latter produces samples by recursively joining neighboring
intervals along a balanced binary tree, using the resampling method of
Ref.~\cite{BravyiGossetLiu22}. In this section, we construct this subroutine
and analyze its parallel runtime and number of elementary operations
(Proposition~\ref{prop:bilayer-sampler}).

\subsection{Reduction to bilayer circuits}
\label{sec:bilayer-reduction}
\label{rem:bilayer-form}
The bilayer reduction groups consecutive blocks of physical qubits into
qudits and regroups the gates into two layers of nearest-neighbor
qudit gates. A block of $d$ qubits has dimension~$K=2^d$.
We call the resulting qudit gates coarse gates or coarse
unitaries, to distinguish them from the physical two-qubit gates
$U_j^{(t)}$ in Eq.~\eqref{eq:physical-gate-layers}.
Fig.~\ref{fig:coarsegrained} illustrates the reduction for the three-layer
circuit in Fig.~\ref{fig:bilayerqubits}: each block contains three qubits,
so $K=8$, and the gate groups of each color form one coarse layer.
In the noisy setting, the grouping is applied to the unitary
circuit~$\tsU(f)$ obtained after fixing the noise realization, so the
coarse gates depend on~$f$.

\begin{figure}[!htbp]
\centering
\includegraphics[width=0.4\textwidth]{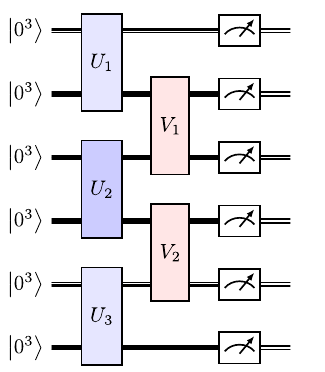}
\caption{Bilayer form of the circuit geometry in
Fig.~\ref{fig:bilayerqubits}. Each bundled wire represents three
consecutive physical qubits, initialized in $\ket{0^3}$, and hence one
qudit of dimension~$K=8$. After fixing the noise realization, the blue gate
groups form $U_1,U_2,U_3$ and the red groups form $V_1,V_2$; these coarse
unitaries depend on the noise fixing. Measuring a qudit in its
computational basis returns the three corresponding output bits.}
\label{fig:coarsegrained}
\end{figure}

\begin{samepage}
Let $K\geq 2$ and $M\in\bbN$. A bilayer circuit on $2M$ qudits has the form
\begin{align}
  \sU&=\mathsf{L}_2\mathsf{L}_1,
  \label{eq:basic-bilayer-circuit}
\end{align}
where the coarse layers are
\begin{align*}
  \mathsf{L}_1
  &=\bigotimes_{i=1}^{M}(U_i)_{Q_{2i-1}Q_{2i}},\\
  \mathsf{L}_2
  &=I_{Q_1}\otimes
    \left(\bigotimes_{i=1}^{M-1}(V_i)_{Q_{2i}Q_{2i+1}}\right)
    \otimes I_{Q_{2M}}.
\end{align*}
Here $Q_1,\ldots,Q_{2M}$ denote qudits of dimension~$K$, and
$U_i,V_i$ are two-qudit unitaries. The sans-serif notation
$\mathsf{L}_1,\mathsf{L}_2$ distinguishes these coarse layers from the
physical layers~$L_t$ of Eq.~\eqref{eq:physical-gate-layers}.
\end{samepage}
It defines a distribution on $[K]^{2M}$ by
\begin{align}
  P_{Z([2M])}(z)
  &:=
  \abs{\langle z|\sU|0^{2M}\rangle}^2,
  \qquad z\in[K]^{2M}.
  \label{eq:bilayer-output-distribution}
\end{align}
The goal is to sample this distribution exactly.

For the bilayer reduction, assume without loss of generality that the
number of physical qubits is divisible by~$2d$: otherwise consider a
larger circuit with fewer than $2d$ additional noninteracting qubits in
state~$|0\rangle$ and ignore their measurement outcomes. This convention
also applies separately to each local circuit in 
\textnormal{\routine{LocalSample}} (Algorithm~\ref{alg:local-marginal-sampler}).
For even~$d$ and width~$L=2Md$, grouping $d$ consecutive qubits into each
qudit gives Eq.~\eqref{eq:basic-bilayer-circuit} with $K=2^d$, by regrouping
the gates as illustrated in Fig.~\ref{fig:coarsegrained}.
Qudit measurement labels simply encode the corresponding qubit outcomes.
Each coarse gate contains at most $d^2$ physical gates, so assembling its
$K^2\times K^2$ matrix costs $\poly(2^d)$ elementary real operations, as
allowed in the bounds below.
We first sample the computational-basis outcomes of
$\mathsf{L}_1\ket{0^{2M}}$, independently for each two-qudit gate~$U_i$.
We then update this sample as the gates of~$\mathsf{L}_2$ are applied,
obtaining a sample from the full output distribution in
Eq.~\eqref{eq:bilayer-output-distribution}. Each update uses the following
resampling method.

\subsection{Sampling without computing marginals\label{sec:samplingwithoutmarginals}}
The method of Bravyi, Gosset, and
Liu~\cite{BravyiGossetLiu22} produces a sample from the measurement outcome distribution associated with a state~$U\ket{\Psi}$ for a local unitary~$U$, given a sample from the outcome distribution associated with~$\ket{\Psi}$.

To illustrate this approach, implemented by the routine \routine{Resample} (Algorithm~\ref{alg:resample}), consider a state~$\ket{\Psi}\in\cH_A\otimes\cH_B\cong \mathbb{C}^{D_A}\otimes\mathbb{C}^D$ on a bipartite Hilbert space with orthonormal basis~$\{\ket{j,k}\equiv \ket{j}\otimes\ket{k}\}_{j\in [D_A],k\in [D]}$.
Let $U$ be a unitary on~$\cH_A$.
Assume we have a sample $(y,z)\sim P$ from the distribution
\begin{align}
P(y,z)&=|\langle y,z|\Psi\rangle|^2\qquad\textrm{ for }\qquad (y,z)\in [D_A]\times [D]\ .
\end{align}
Our goal is to produce a sample $(y',z')\sim P'$ from  distribution
\begin{align}
P'(y,z)&=|\langle y,z|\Psi'\rangle|^2\qquad\textrm{ for }\qquad (y,z)\in [D_A]\times [D]\ 
\end{align} 
for the evolved state $\ket{\Psi'}=(U\otimes I_B)\ket{\Psi}$.
The subprogram \routine{Resample} (Algorithm~\ref{alg:resample}) keeps the
unaffected outcome $z$ and replaces only $y$, by sampling from the conditional distribution
\begin{align}
Q(u|z)&= \frac{P'(u,z)}{\sum_{v\in [D_A]}P'(v,z)}\qquad\textrm{ for }\qquad u\in [D_A]\ .\label{eq:uzexpression}
\end{align}
The unitary on $A$ leaves the marginal on $B$ unchanged, so the sampled
value of $z$ already has the required distribution.

\begin{algorithm}[!htbp]
\caption{\routine{Resample}: update the sampled subsystem after a local unitary.}
\label{alg:resample}
\small
\begin{algorithmic}[1]
\Require A sample $(y,z)\sim P$ on $[D_A]\times[D]$ and an evaluator for the evolved distribution $P'$; $P$ and $P'$ have the same marginal on $z$.
\Ensure A pair $(y',z)\sim P'$, retaining the supplied value of $z$.
\Function{Resample}{$P',(y,z)$}
\State Evaluate $P'(u,z)$ for all $u\in[D_A]$ and normalize to obtain $Q(u\mid z)$ from Eq.~\eqref{eq:uzexpression}.\label{it:steponesampling}
\State Draw $y'\sim Q(\cdot\mid z)$.\label{it:steptwosampling}
\State \Return $(y',z)$.\label{it:stepthreesampling}
\EndFunction
\end{algorithmic}
\end{algorithm}
In the real-arithmetic model, the number of elementary operations needed to realize \routine{Resample} is independent of~$D$, assuming
constant-cost access to the probabilities~$P'(u,z)$ for the fixed~$z$ and the $D_A$ outcomes $u\in[D_A]$.
In  our case, these probabilities are obtained from the vector of amplitudes $T(\cdot|z)\in\mathbb{C}^{D_A}$ of the current state,
\begin{align}
 T(y|z)=\langle y,z|\Psi\rangle\qquad\textrm{ for }\qquad y\in [D_A]\ ,
\end{align}
which we keep track of. Indeed, inserting a resolution $I_A=\sum_{s=1}^{D_A} \proj{s}$ of the identity on~$\cH_A$ gives
\begin{align}
 T'(y|z):=\langle y,z|(U\otimes I_B)|\Psi\rangle
 =\sum_{s=1}^{D_A} U_{y,s} T(s|z)\qquad\textrm{ for }\qquad y\in [D_A]\ ,
\end{align}
and thus 
\begin{align}
P'(y,z)=|T'(y|z)|^2\qquad\textrm{for }\qquad y\in [D_A]\  .\label{eq:pprimeyzdefinition}
\end{align}
Therefore, given a sample~$(y,z)\sim P$ and the vector
$T(\cdot|z)\in\mathbb{C}^{D_A}$, we compute $T'(\cdot|z)=UT(\cdot|z)$
and then call \routine{Resample}. Combining
Eqs.~\eqref{eq:pprimeyzdefinition} and~\eqref{eq:uzexpression} gives
\begin{align}
  Q(u\mid z)
  &=\frac{|T'(u|z)|^2}{\|T'(\cdot|z)\|_2^2}
   =\mathcal B\bigl(T'(\cdot|z)\bigr)_u .
  \label{eq:resample-born-distribution}
\end{align}
Here $\|\cdot\|_2$ is the Euclidean norm, and
$\mathcal B(a)_j:=|a_j|^2/\|a\|_2^2$ denotes the Born distribution
of a nonzero vector~$a$. The same normalized
amplitude formula gives the interface update in
Eq.~\eqref{eq:join-conditional-born} and is used in the finite-precision
analysis of Appendix~\ref{app:finite-precision-tree}.

For bilayer circuits, we compute  vectors analogous to $B\bigl(T'(\cdot|z)\bigr)\in \mathbb{R}^{D_A}$ from tables of boundary
amplitudes, defined below. Appendix~\ref{app:finite-precision-tree} gives
the finite-precision analysis, including word lengths and the resulting
total variation error.

\begin{samepage}
\subsection{Sampling from bilayer qudit circuits}
\label{sec:bilayer-subsec}
For each sampled interior string, we retain the amplitudes for all
boundary outcomes. These amplitudes allow \routine{JoinIntervals}
(Algorithm~\ref{alg:join-routine}) to
update the sample when two intervals are joined.
\end{samepage}

Throughout Sections~\ref{sec:bilayer-subsec}--\ref{sec:coarsening-subsec}, the symbols $I$ and~$J$ denote intervals of qudit indices in the bilayer chain; they are unrelated to the cut-bounded components~$J_r$ of Section~\ref{sec:circuit-cutting}. A valid interval is an even interval
\begin{align}
  I=\{2a+1,2a+2,\ldots,2b\},
  \qquad a<b .
\end{align}
Its boundary sites are $2a+1$ and $2b$, and its interior is
$I^\circ=\{2a+2,\ldots,2b-1\}$.  For intervals touching a physical endpoint,
we use the evident one-sided convention. The restricted interval state
$\ket{\Psi(I)}$ is obtained from $\ket{0^{|I|}}$ by applying all first-layer
gates~$U_i$ fully supported in~$I$, followed by all second-layer gates~$V_j$
fully supported in~$I^\circ$. The output distribution on~$I$ is
\begin{align}
  P_{Z^I}(z^I)=\abs{\langle z^I|\Psi(I)\rangle}^2 .
\end{align}
For a fixed interior string $z^I_{\circ}\in[K]^{I^\circ}$, define the boundary
amplitude table
\begin{align}
  T_I(z^I_-,z^I_+\given z^I_{\circ})
  =
  \langle z^I_-,z^I_{\circ},z^I_+|\Psi(I)\rangle,
  \qquad
  z^I_-,z^I_+\in[K].
  \label{eq:T-table}
\end{align}
The table has $K^2$ entries, or $K$ entries at a one-sided endpoint.
Fig.~\ref{fig:boundary-amplitude-tensors} depicts the tables for two
neighboring intervals and their union. Fig.~\ref{fig:bilayer-interval-states}
shows valid intervals and the restricted states that define these tables.

\begin{figure}[H]
\centering
\begin{subfigure}[t]{0.25\textwidth}
  \centering
  \includegraphics[width=\linewidth]{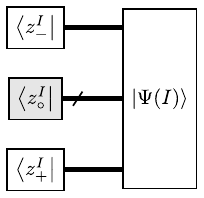}
  \caption{$T_I(z^I_-,z^I_+\mid z^I_{\circ})$.}
  \label{fig:TI-tensor}
\end{subfigure}
\hfill
\begin{subfigure}[t]{0.25\textwidth}
  \centering
  \includegraphics[width=\linewidth]{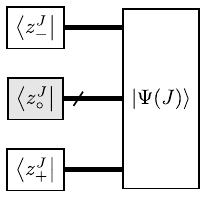}
  \caption{$T_J(z^J_-,z^J_+\mid z^J_{\circ})$.}
  \label{fig:TJ-tensor}
\end{subfigure}
\hfill
\begin{subfigure}[t]{0.30\textwidth}
  \centering
  \includegraphics[width=\linewidth]{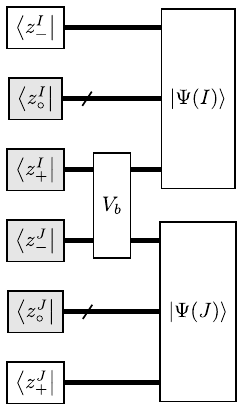}
  \caption{$T_{I\cup J}(z^{I\cup J}_-,z^{I\cup J}_+\mid z^{I\cup J}_{\circ})$.}
  \label{fig:TIJ-union-tensor}
\end{subfigure}
\caption{Coefficients
$T_I(z^I_-,z^I_+\mid z^I_{\circ})$, $T_J(z^J_-,z^J_+\mid z^J_{\circ})$,
and $T_{I\cup J}(z^{I\cup J}_-,z^{I\cup J}_+\mid z^{I\cup J}_{\circ})$.
Here we identify $(z^{I\cup J}_-,z^{I\cup J}_{\circ},z^{I\cup J}_+)=
(z^I_-,(z^I_{\circ},z^I_+,z^J_-,z^J_{\circ}),z^J_+)$.
}
\label{fig:boundary-amplitude-tensors}
\end{figure}

\begin{figure}[H]
\centering
\begin{subfigure}[b]{0.26\textwidth}
  \centering
  \includegraphics[width=\linewidth]{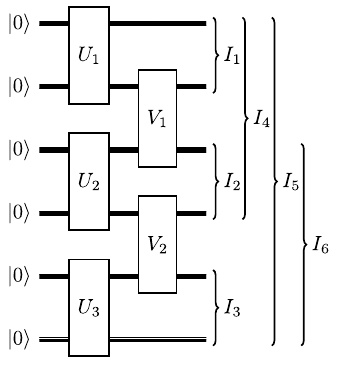}
  \caption{Valid intervals in the bilayer circuit.}
  \label{fig:bilayer-valid-intervals}
\end{subfigure}
\hfill
\begin{subfigure}[b]{0.68\textwidth}
  \centering
  \includegraphics[width=0.29\linewidth]{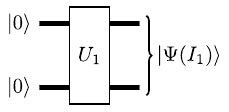}\hspace{0.7em}
  \includegraphics[width=0.29\linewidth]{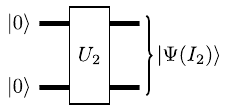}\hspace{0.7em}
  \includegraphics[width=0.29\linewidth]{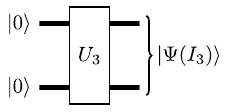}\\[0.45em]
  \includegraphics[width=0.29\linewidth]{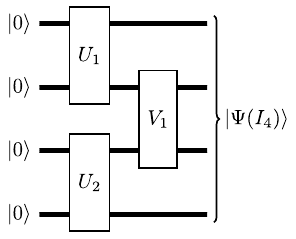}\hspace{0.7em}
  \includegraphics[width=0.29\linewidth]{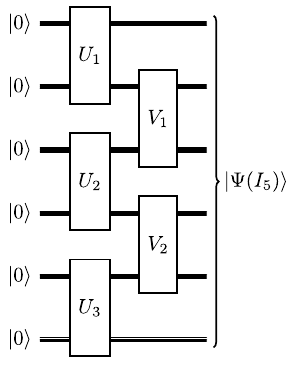}\hspace{0.7em}
  \includegraphics[width=0.29\linewidth]{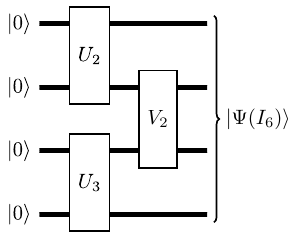}
  \caption{Associated restricted states $\ket{\Psi(I)}$.}
\end{subfigure}
\caption{Valid intervals for the bilayer circuit and the corresponding
restricted interval states.  For each depicted interval~$I_j$, the state
$\ket{\Psi(I_j)}$ is formed by keeping all $U$ gates supported on~$I_j$ and all
$V$ gates supported on~$I_j^\circ$.}
\label{fig:bilayer-interval-states}
\end{figure}

\subsection{Joining two neighboring intervals}
\label{sec:joining-intervals}

The routine \routine{JoinIntervals} combines samples from two neighboring
intervals by applying \routine{Resample} to the two interface qudits.
The boundary tables give the required probabilities.
Let
\begin{align}
  I=\{2a+1,\ldots,2b\},
  \qquad
  J=\{2b+1,\ldots,2c\}
\end{align}
be consecutive valid intervals. Their adjacent boundary sites $(2b,2b+1)$
form the interface pair, and the second-layer gate~$V_b$ acting on
these sites is the interface gate. Suppose we have independent samples
\begin{align}
  z^I=(z^I_-,z^I_{\circ},z^I_+)\sim P_{Z^I},
  \qquad
  z^J=(z^J_-,z^J_{\circ},z^J_+)\sim P_{Z^J}.
\end{align}
Here $z^I_-$ and $z^I_+$ are the values on the left and right boundary sites of
$I$, while $z^I_{\circ}$ is the interior string; the notation for~$J$ is
analogous.  The old interface values $z^I_+,z^J_-$ are discarded.  For a
candidate interface output
$(u,v)\in[K]^2$, the coefficients for the union, illustrated in
Fig.~\ref{fig:TIJ-union-identity}, are
\begin{align}
  T_{I\cup J}((z^I_-,z^J_+)\given z^I_{\circ},(u,v),z^J_{\circ})
  =
  \sum_{u^I_+,v^J_-\in[K]}
  (V_b)_{(u,v),(u^I_+,v^J_-)}
  T_I(z^I_-,u^I_+\given z^I_{\circ})
  T_J(v^J_-,z^J_+\given z^J_{\circ}).
  \label{eq:union-table}
\end{align}
Then sample $(u,v)\in[K]^2$ with probability
\begin{align}
  \Pr[(u,v)\mid z^I_-,z^I_{\circ},z^J_{\circ},z^J_+]
  =
  \frac{\abs{T_{I\cup J}((z^I_-,z^J_+)\given z^I_{\circ},(u,v),z^J_{\circ})}^2}
       {\sum_{(u',v')\in[K]^2}
       \abs{T_{I\cup J}((z^I_-,z^J_+)\given z^I_{\circ},(u',v'),z^J_{\circ})}^2}.
  \label{eq:join-conditional}
\end{align}
Equivalently, collect the retained values in
$\zeta=(z^I_-,z^I_{\circ},z^J_{\circ},z^J_+)$ and define the vector
$a_\zeta\in\mathbb C^{K^2}$ by
\begin{align*}
  a_\zeta(u,v)
  :=T_{I\cup J}((z^I_-,z^J_+)\given z^I_{\circ},(u,v),z^J_{\circ}).
\end{align*}
Then Eq.~\eqref{eq:join-conditional} is the instance of
Eq.~\eqref{eq:resample-born-distribution} given by
\begin{align}
  \Pr[(u,v)\mid\zeta]
  =\frac{|a_\zeta(u,v)|^2}{\|a_\zeta\|_2^2}
  =\mathcal B(a_\zeta)_{(u,v)}.
  \label{eq:join-conditional-born}
\end{align}
The squared norm $\|a_\zeta\|_2^2$ is the probability of the retained
values~$\zeta$. Appendix~\ref{app:finite-precision-tree} uses these same
vectors to compare exact and finite-precision joins.
The joined sample is
\begin{align}
  z^{I\cup J}=(z^I_-,z^I_{\circ},u,v,z^J_{\circ},z^J_+),
\end{align}
with the coefficients given by Eq.~\eqref{eq:union-table}.

\begin{algorithm}[!htbp]
\caption{\routine{JoinIntervals}: resample an interface and return the joined sample and table.}
\label{alg:join-routine}
\small
\begin{algorithmic}[1]
\Require Consecutive valid intervals $I,J$, independent samples
$z^I=(z^I_-,z^I_{\circ},z^I_+)\sim P_{Z^I}$ and
$z^J=(z^J_-,z^J_{\circ},z^J_+)\sim P_{Z^J}$, and coefficient tables
$T_I(\cdot\given z^I_{\circ})$ and $T_J(\cdot\given z^J_{\circ})$; the interface unitary $V_b$ on the last site of $I$ and first site of $J$.
\Ensure A sample $z^{I\cup J}\sim P_{Z^{I\cup J}}$ and the coefficient table
$T_{I\cup J}(\cdot\given z^{I\cup J}_{\circ})$.
\Function{JoinIntervals}{$I,J,z^I,z^J,T_I,T_J,V_b$}
\State Discard the old interface values $z^I_+$ and $z^J_-$.
\State Compute $T_{I\cup J}$ from $T_I,T_J$, and~$V_b$ using
Eq.~\eqref{eq:union-table} for all boundary and interface values.
\ForAll{$(u,v)\in[K]^2$ \textbf{in parallel}}
  \State $q_{u,v}\gets
  \abs{T_{I\cup J}((z^I_-,z^J_+)\given z^I_{\circ},(u,v),z^J_{\circ})}^2.$
\EndFor
\State Sample $(u,v)\in[K]^2$ with probability
$q_{u,v}/\sum_{(u',v')\in[K]^2}q_{u',v'}$.
\State Set $z^{I\cup J}\gets(z^I_-,z^I_{\circ},u,v,z^J_{\circ},z^J_+)$.
\State \Return $z^{I\cup J}$ and $T_{I\cup J}(\cdot\given z^I_{\circ},(u,v),z^J_{\circ})$.
\EndFunction
\end{algorithmic}
\end{algorithm}

\begin{figure}[!t]
\centering
\[
  T_{I\cup J}((z^I_-,z^J_+)\given z^I_{\circ},(u,v),z^J_{\circ})
  =\sum_{u^I_+,v^J_-\in[K]}
  \vcenter{\hbox{\includegraphics[width=0.38\textwidth]{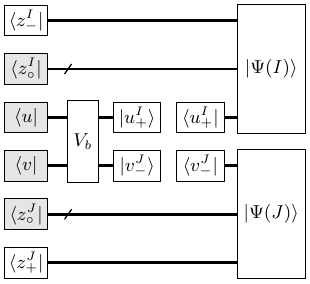}}}\,.
\]
\caption{Resolution of the identity on the two interface indices before
applying $V_b$, giving the coefficient in Eq.~\eqref{eq:union-table}.
The interface bras $\bra{u}$ and $\bra{v}$ specify the candidate outputs.}
\label{fig:TIJ-union-identity}
\end{figure}

\begin{lemma}[Correctness of \routine{JoinIntervals}]
\label{lem:join}
\textnormal{\routine{JoinIntervals}} (Algorithm~\ref{alg:join-routine}) outputs a sample from $P_{Z^{I\cup J}}$ and
the correct coefficients $T_{I\cup J}(\cdot\given z^{I\cup J}_{\circ})$.
Its parallel runtime and number of elementary real operations are
both $\poly(K)$.
\end{lemma}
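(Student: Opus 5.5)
The plan is to reduce correctness to the Resample principle of Section~\ref{sec:samplingwithoutmarginals}, applied to the product state $\ket{\Psi(I)}\otimes\ket{\Psi(J)}$ and the interface gate~$V_b$. First I will check that $\ket{\Psi(I\cup J)}=(V_b)_{2b,2b+1}\bigl(\ket{\Psi(I)}\otimes\ket{\Psi(J)}\bigr)$. By definition, $\ket{\Psi(I\cup J)}$ applies all first-layer gates $U_i$ supported in $I\cup J$, and these are exactly the $U_i$ supported in $I$ or in $J$, because valid intervals are even and first-layer gates act on pairs $(2i-1,2i)$. It then applies the second-layer gates supported in $(I\cup J)^\circ=I^\circ\cup\{2b,2b+1\}\cup J^\circ$. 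These are the gates supported in $I^\circ$, those supported in $J^\circ$, and the single gate $V_b$. Gates in one layer commute, so $V_b$ can be applied last.

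Next, the independent samples $z^I,z^J$ give a sample $(z^I,z^J)$ from the product distribution $P_{Z^I}\otimes P_{Z^J}$, which is the Born distribution of $\ket{\Psi(I)}\otimes\ket{\Psi(J)}$. Take $\cH_A$ to be the two interface qudits and $\cH_B$ the remaining sites of $I\cup J$. Since $V_b$ acts only on $\cH_A$, the marginal on the retained values $\zeta=(z^I_-,z^I_\circ,z^J_\circ,z^J_+)$ is unchanged. So discarding the interface values and drawing $(u,v)$ from the conditional distribution of $P_{Z^{I\cup J}}$ given $\zeta$ yields an exact sample. It remains to show that Eq.~\eqref{eq:join-conditional} is this conditional. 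Inserting a resolution of the identity on the two interface qudits gives
\[
\langle z^I_-,z^I_\circ,u,v,z^J_\circ,z^J_+|\Psi(I\cup J)\rangle=\sum_{u',v'}(V_b)_{(u,v),(u',v')}\,T_I(z^I_-,u'\given z^I_\circ)\,T_J(v',z^J_+\given z^J_\circ),
\]
which is exactly Eq.~\eqref{eq:union-table}. The squared moduli are therefore the joint probabilities, and normalizing over $(u,v)$ gives the conditional, as in Eq.~\eqref{eq:resample-born-distribution}. The same identity, evaluated for all boundary values $z^{I\cup J}_\mp$ with the sampled interior $(z^I_\circ,u,v,z^J_\circ)$, shows that the returned table is $T_{I\cup J}(\cdot\given z^{I\cup J}_\circ)$ by definition~\eqref{eq:T-table}. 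I will also note that $\|a_\zeta\|_2^2>0$ almost surely, because $\zeta$ was sampled with positive probability. For one-sided endpoint intervals I will drop the corresponding boundary index.

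For complexity, the table in Eq.~\eqref{eq:union-table} has $K^2$ boundary entries times $K^2$ interface entries, and each entry is a sum of $K^2$ products. That is $O(K^6)$ operations, computable in parallel runtime $O(\log K)$ via summation trees. Computing the weights $q_{u,v}$ takes $O(K^2)$ operations, and Lemma~\ref{lem:discrete-sampling-cost} samples from $K^2$ weights in runtime $O(\log K)$ with $O(K^2)$ operations. Selecting the returned slice of the table indexed by the sampled $(u,v)$ takes $O(K^4)$ selections, arranged as a selection tree of depth $O(\log K)$. All of these counts are $\poly(K)$.

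The main obstacle is the first step, the bookkeeping that $\Psi(I\cup J)$ differs from the product state by exactly $V_b$. This depends on the valid-interval and interior conventions, and the endpoint cases need separate care. The remaining steps are routine.
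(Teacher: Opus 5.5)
Your proposal is correct and follows essentially the same route as the paper: you establish $\ket{\Psi(I\cup J)}=V_b\bigl(\ket{\Psi(I)}\otimes\ket{\Psi(J)}\bigr)$ from the valid-interval conventions and obtain Eq.~\eqref{eq:union-table} by inserting a resolution of the identity on the interface pair. You then use that $V_b$ leaves the marginal of the retained values unchanged, so resampling $(u,v)$ from the normalized squared table entries gives $P_{Z^{I\cup J}}$, and you count $O(K^6)$ operations for the table plus Lemma~\ref{lem:discrete-sampling-cost} for the draw. Your gate bookkeeping is more explicit than the paper's one-line ``joining adds exactly $V_b$''; the only minor slip is that forming $q_{u,v}$ also requires selecting the entries indexed by the sampled boundary values $z^I_-,z^J_+$, which costs $O(K^4)$ rather than $O(K^2)$ operations and is still $\poly(K)$.
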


\begin{proof}
By the definition of a restricted interval state, joining~$I$ and~$J$ adds
exactly the interface gate~$V_b$.  Hence
\begin{align}
  |\Psi(I\cup J)\rangle
  &=
  V_b\bigl(|\Psi(I)\rangle\otimes|\Psi(J)\rangle\bigr)\ .
  \label{eq:union-state-identity}
\end{align}
Inserting the identity on the interface pair before~$V_b$ gives
Eq.~\eqref{eq:union-table}.

Fix the retained values $(z^I_-,z^I_{\circ},z^J_{\circ},z^J_+)$ and define
$w\in\bbC^{K^2}$ by
\begin{align}
  w_{(u^I_+,v^J_-)}
  =T_I(z^I_-,u^I_+\given z^I_{\circ})
  T_J(v^J_-,z^J_+\given z^J_{\circ}) .
\end{align}
The product sample assigns these retained values total weight~$\|w\|_2^2$.
After the interface gate their weight is~$\|V_bw\|_2^2=\|w\|_2^2$, so their
marginal distribution is unchanged.  Eq.~\eqref{eq:join-conditional} resamples the
two interface values from the transformed amplitudes~$V_bw$.  The resampling
step in \routine{Resample} (Section~\ref{sec:samplingwithoutmarginals})
therefore gives the target distribution on~$I\cup J$.

There are $K^4$ choices of outer boundary and interface values.
Each entry in Eq.~\eqref{eq:union-table} is a sum of $K^2$
products, giving $O(K^6)$ elementary real operations. Selecting the
probabilities and the returned table entries takes $\poly(K)$ operations.
By Lemma~\ref{lem:discrete-sampling-cost}, the sampling step takes
$O(K^2)$ operations and $O(\log(K+1))$ parallel runtime. These estimates
give the claimed bounds.

Store one outcome register per qudit. Each join updates only the two
interface registers and the boundary table, so interior strings need
not be copied.
\end{proof}

\subsection{Parallel joining}
\label{sec:coarsening-subsec}
The routine \routine{TreeSample} (Algorithm~\ref{alg:joining-tree})
starts with two-qudit intervals and applies \routine{JoinIntervals}
along a balanced binary tree, processing each level in parallel.
We call this balanced binary tree the joining tree.
Fig.~\ref{fig:joining-tree} illustrates it for $M=4$.
The intervals $\{2i-1,2i\}$ are the leaves of the tree, initialized with
$U_i|00\rangle$; each internal node represents one join of neighboring
child intervals.
Each call to \routine{JoinIntervals} returns both a sample and its boundary
coefficient table; the table supplies the amplitudes needed at the next
level. Disjoint pairs at a level use independent randomness.

\begin{figure}[H]
\centering
\begin{tikzpicture}[
  x=1cm,y=1cm,
  jtNode/.style={draw,rounded corners=2pt,align=center,font=\small,
    inner sep=5pt,minimum width=2.5cm,minimum height=0.95cm},
  jtLeaf/.style={jtNode,fill=blue!7},
  jtJoin/.style={jtNode,fill=red!6},
  jtEdge/.style={-Latex,line width=0.6pt,draw=black!65}
]
  \node[jtLeaf] (leaf1) at (0,0) {$\{1,2\}$\\$U_1\ket{00}$};
  \node[jtLeaf] (leaf2) at (3.5,0) {$\{3,4\}$\\$U_2\ket{00}$};
  \node[jtLeaf] (leaf3) at (7,0) {$\{5,6\}$\\$U_3\ket{00}$};
  \node[jtLeaf] (leaf4) at (10.5,0) {$\{7,8\}$\\$U_4\ket{00}$};
  \node[jtJoin] (join12) at (1.75,1.8)
    {$\{1,\ldots,4\}$\\$V_1$ on $(2,3)$};
  \node[jtJoin] (join34) at (8.75,1.8)
    {$\{5,\ldots,8\}$\\$V_3$ on $(6,7)$};
  \node[jtJoin] (root) at (5.25,3.6)
    {$[8]=\{1,\ldots,8\}$\\$V_2$ on $(4,5)$};
  \draw[jtEdge] (leaf1.north) -- (join12.south);
  \draw[jtEdge] (leaf2.north) -- (join12.south);
  \draw[jtEdge] (leaf3.north) -- (join34.south);
  \draw[jtEdge] (leaf4.north) -- (join34.south);
  \draw[jtEdge] (join12.north) -- (root.south);
  \draw[jtEdge] (join34.north) -- (root.south);
\end{tikzpicture}
\caption{Joining tree for $M=4$. Each node is labelled by its qudit
interval. Leaves are initialized with $U_i\ket{00}$; internal nodes show
the interface gate and the pair of qudits on which it acts. Arrows carry
samples and boundary tables. The $V_1$ and $V_3$ joins run in parallel,
followed by the $V_2$ join.}
\label{fig:joining-tree}
\end{figure}

\begin{algorithm}[!htbp]
\caption{\routine{TreeSample}: sample a bilayer chain by joining neighboring intervals.}
\label{alg:joining-tree}
\label{alg:joining-tree-sampler}
\small
\begin{algorithmic}[1]
\Require A bilayer circuit $\sU=(\{U_r\}_{r=1}^{M},\{V_b\}_{b=1}^{M-1})$ on $2M$ qudits of dimension $K$, as in Eq.~\eqref{eq:basic-bilayer-circuit}.
\Ensure A string $z\in[K]^{2M}$ with distribution $P_{Z([2M])}$.
\Function{TreeSample}{$\sU$}
\ForAll{$r=1,\ldots,M$ \textbf{in parallel}}
  \State Set $I_r^{(0)}\gets\{2r-1,2r\}$.
  \State Draw $z_r^{(0)}\in[K]^2$ with probability $\abs{\langle z_r^{(0)}|U_r|00\rangle}^2$.
  \State Store $T_{I_r^{(0)}}(a,b\given \emptyset)\gets\langle a,b|U_r|00\rangle$ for all $(a,b)\in[K]^2$.
\EndFor
\For{$h=0,\ldots,\lceil\log_2 M\rceil-1$}
  \ForAll{disjoint neighboring pairs $(I,J)$ at level $h$ \textbf{in parallel}}
    \State Let $V_b$ be the gate across the interface between $I$ and $J$.
    \State $(z^{I\cup J},T_{I\cup J})\gets\Call{JoinIntervals}{I,J,z^I,z^J,T_I,T_J,V_b}$.
    \State Store this sample and table as the active record for $I\cup J$ at level $h+1$.
  \EndFor
  \State Carry any unpaired endpoint interval unchanged to the next level.
\EndFor
\State \Return the final string $z$ on $[2M]$.
\EndFunction
\end{algorithmic}
\end{algorithm}

\begin{samepage}
\begin{proposition}[Correctness and complexity of \routine{TreeSample}]
\label{prop:bilayer-sampler}
For a bilayer chain of $2M$ qudits of local dimension~$K$,
\textnormal{\routine{TreeSample}} samples its output distribution exactly
in parallel runtime
\begin{align}
  O\!\left(\poly(K)\log(M+1)\right)
  \label{eq:bilayer-depth}
\end{align}
using $O(M\,\poly(K))$ elementary real operations.
\end{proposition}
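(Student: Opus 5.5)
The proof goes by induction over the levels of the joining tree, using Lemma~\ref{lem:join} as the inductive step. The invariant at level~$h$ of \routine{TreeSample} is the following: the active records are consecutive valid intervals partitioning $[2M]$; each record $(z^I,T_I)$ satisfies $z^I\sim P_{Z^I}$ and $T_I(\cdot\given z^I_\circ)$ equals the boundary table of Eq.~\eqref{eq:T-table}; and the samples of distinct records are mutually independent.

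At level~$0$ the intervals are $\{2r-1,2r\}$ and $\ket{\Psi(I_r^{(0)})}=U_r\ket{00}$, because no second-layer gate is supported in an empty interior. The drawn strings therefore have the correct distributions, and the stored tables are the amplitudes $\langle a,b|U_r|00\rangle$. Independence holds because the draws use fresh randomness.

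For the inductive step, each pair $(I,J)$ joined at level~$h$ consists of consecutive valid intervals with independent correct samples and correct tables. Lemma~\ref{lem:join} then gives $z^{I\cup J}\sim P_{Z^{I\cup J}}$ together with the correct table. Independence across the new records holds for two reasons. First, each new record is a function of the disjoint old records it absorbs plus fresh randomness. Second, unpaired endpoint intervals are carried over unchanged. After $\lceil\log_2 M\rceil$ levels a single record on $[2M]$ remains. For this interval, $\ket{\Psi([2M])}$ contains all $U_i$ and all $V_b$: at the physical endpoints we use the one-sided convention, so the interior is all of $[2M]$ apart from the missing outer neighbours, and every $V_b$ acts on $(2b,2b+1)$, which lies inside. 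Hence $\ket{\Psi([2M])}=\sU\ket{0^{2M}}$, and the output has distribution $P_{Z([2M])}$.

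The step needing most care is checking that every $V_b$ is added exactly once, i.e.\ that Eq.~\eqref{eq:union-state-identity} holds at every join, including those involving endpoint intervals. In the tree, the gates $V_b$ are in bijection with the internal nodes, since each internal node joins at the interface $(2b,2b+1)$ of its two children. The restricted-state definition (gates $V_j$ fully supported in the interior) then ensures that $V_b$ is absent from both children but lies in the interior of their union. For a child that touches a physical endpoint, the one-sided convention leaves its outer boundary site unused, so this argument is unaffected.

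For the complexity, each level runs its joins in parallel. Each join costs $\poly(K)$ runtime and operations by Lemma~\ref{lem:join}, and carrying an unpaired record costs $O(1)$. There are $O(\log(M+1))$ levels, which gives runtime $O(\poly(K)\log(M+1))$. The number of joins is at most $M-1$ in total, which gives $O(M\poly(K))$ operations.

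The leaf initialization costs $O(K^2)$ per leaf, with sampling in $O(\log K)$ runtime by Lemma~\ref{lem:discrete-sampling-cost}, so it adds $O(M\poly(K))$ operations and $\poly(K)$ runtime. Lemma~\ref{lem:join} notes that interior strings are never copied, because each join updates only the interface registers and the table. The operation count therefore does not pick up an extra factor from string lengths.
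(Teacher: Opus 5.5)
Your proof is correct and follows the paper's argument. Both use the same invariant over the levels of the joining tree: the active intervals partition $[2M]$ and carry mutually independent exact samples with correct boundary tables. The invariant is initialized by $U_r\ket{00}$ and preserved by Lemma~\ref{lem:join}, and the complexity follows from counting $\lceil\log_2 M\rceil$ levels of $\poly(K)$ runtime and $M-1$ joins of $\poly(K)$ operations. Your additional checks, that each $V_b$ is added exactly once and that the root state equals $\sU\ket{0^{2M}}$, spell out what the paper leaves implicit in Eq.~\eqref{eq:union-state-identity} and in its final step.
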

\end{samepage}

\begin{proof}
At every level, the active intervals partition~$[2M]$ and carry mutually
independent exact samples and their boundary tables.  This holds initially
because $|\Psi(I_r^{(0)})\rangle=U_r|00\rangle$.  Lemma~\ref{lem:join}
preserves the invariant when disjoint neighboring pairs are joined in
parallel; an unpaired endpoint interval is simply carried forward.  The number
of active intervals changes from~$m$ to~$\lceil m/2\rceil$, so after
$\lceil\log_2M\rceil$ levels the remaining sample has distribution~$P_{Z([2M])}$.
Initialization takes $\poly(K)$ parallel runtime and $\poly(K)$
operations per interval, including sampling by
Lemma~\ref{lem:discrete-sampling-cost}. Each subsequent level takes
$\poly(K)$ parallel runtime by Lemma~\ref{lem:join}. This proves
Eq.~\eqref{eq:bilayer-depth}, also for $M=1$. There are $M$ initial
intervals and $M-1$ joins, each requiring $\poly(K)$ operations.
Retaining outcome registers and copying the final $2M$ outcomes once
gives $O(M\,\poly(K))$ operations in total.
\end{proof}

\begin{corollary}[Noiseless shallow circuits]\label{cor:noiseless}
The output distribution of an arbitrary depth-$d$ brickwork circuit~$\sU$
on $n$ qubits, without noise, can be sampled exactly in parallel runtime $O(\poly(2^d)\log n)$ using $O(n\,\poly(2^d))$ elementary real
operations.
\end{corollary}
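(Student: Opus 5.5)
The plan is to apply the bilayer reduction of Section~\ref{sec:bilayer-reduction} directly to the noiseless circuit~$\sU$ and then invoke Proposition~\ref{prop:bilayer-sampler}. No noise fixing, cutting, or \routine{BoundaryPass} is needed. Equivalently, this is \routine{LocalSample} with $f$ the trivial noise fixing (all $E^{(t)}_j=0$, all Paulis equal to $I$) and $A=[n]$, so that the backward light cone is all of $[n]$.

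First, I pad the circuit with fewer than $2d$ idle qubits in state~$\ket{0}$, so that the width $L$ satisfies $L=2Md$ with $L<n+2d\leq 3n$. Measurement outcomes on the padding qubits are discarded at the end. Since these qubits never interact with the others, the marginal on the original $n$ qubits is exactly the target distribution $|\langle z|\sU|0^n\rangle|^2$.

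Next, I group $d$ consecutive qubits into one qudit of dimension $K=2^d$. This gives a bilayer circuit on $2M$ qudits with $M=O(n/d)$. By Section~\ref{sec:bilayer-reduction}, its output distribution on $[K]^{2M}$ encodes the qubit output distribution via the fixed relabeling. Each coarse gate contains at most $d^2$ physical gates, so assembling all $2M-1$ coarse $K^2\times K^2$ matrices costs $\poly(2^d)$ parallel runtime and $O(M\poly(2^d))$ operations in total. These operations run independently in parallel over the coarse gates.

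Then I run \routine{TreeSample}. By Proposition~\ref{prop:bilayer-sampler}, it samples exactly in runtime $O(\poly(K)\log(M+1))=O(\poly(2^d)\log n)$, using $O(M\poly(K))=O(n\poly(2^d))$ operations. Decoding the qudit labels into bits and discarding the padding adds constant runtime and $O(n)$ operations. Summing the three stages gives the claimed bounds.

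The only step needing care is the bilayer reduction for a general even depth~$d$ without noise. I expect it to hold exactly as stated in Section~\ref{sec:bilayer-reduction}, since that construction is applied to an arbitrary fixed unitary brickwork circuit~$\tsU(f)$, and $\sU$ is the special case $f=$ trivial. So there is no real obstacle beyond checking that the padding and regrouping preserve the output distribution exactly.
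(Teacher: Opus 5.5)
Your proposal is correct and matches the paper's proof: apply the bilayer reduction of Section~\ref{sec:bilayer-reduction} to the whole circuit (padding to a multiple of $2d$), giving $K=2^d$ and $M=O(n/d+1)$, build the coarse gates in $\poly(2^d)$ runtime, run \routine{TreeSample}, and decode. Proposition~\ref{prop:bilayer-sampler} then yields the runtime $O(\poly(2^d)\log n)$ and $O(n\,\poly(2^d))$ operations.
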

\begin{proof}
The reduction in Section~\ref{sec:bilayer-reduction}, with $L=n$, gives
$K=2^d$ and $M=O(n/d+1)$. Constructing the coarse gates takes
$\poly(2^d)$ parallel runtime and $O(M\,\poly(2^d))$ operations.
Apply \routine{TreeSample} and decode the qubit outcomes.
Proposition~\ref{prop:bilayer-sampler} gives parallel runtime
$O(\poly(K)\log(M+1))=O(\poly(2^d)\log n)$ and
$O(M\,\poly(K))=O(n\,\poly(2^d))$ operations, including decoding.
\end{proof}

\begin{corollary}[Complexity of \routine{LocalSample}]
\label{cor:local-marginal-sampling-circuit}
For a fixed noise realization~$f$ and a contiguous interval~$A\subset[n]$ with
backward light-cone width at most~$L$, \textnormal{\routine{LocalSample}} (Algorithm~\ref{alg:local-marginal-sampler})
has parallel runtime
\begin{align}
  O\!\left(\poly(2^d)\log(L+d)\right)
\end{align}
and uses $O\!\left(\poly(2^d)(L+d)\right)$ elementary real operations.
\end{corollary}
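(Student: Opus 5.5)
The plan is to follow the steps of \routine{LocalSample} (Algorithm~\ref{alg:local-marginal-sampler}) in order and bound the cost of each. The final bound comes from Proposition~\ref{prop:bilayer-sampler}; the other steps are checked to cost no more.

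\emph{Forming the local circuit.} For the fixed noise realization~$f$, the only gates needed are those of $\tsU(f)$ inside the backward light cone $\Lambda(A)$. This interval is contiguous and has width at most~$L$. Its endpoints depend only on~$A$ and~$d$, so selecting these gates needs no computation beyond fixed wiring. The light cone contains $O(dL)$ gates. Each substituted gate $\tilde U^{(t)}_j$ is either a product of Paulis or $U^{(t)}_j(Q\otimes R)$, a $4\times4$ matrix product, so it takes $O(1)$ operations and $O(1)$ runtime, all in parallel. Boundary Paulis $S^{(t)}_j$ are absorbed at constant cost. This step costs $O(dL)$ operations and $O(1)$ runtime.

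\emph{Padding and coarse graining.} Following the divisibility convention of Section~\ref{sec:bilayer-reduction}, we pad with fewer than $2d$ idle qubits, giving width $L'\le L+2d$ with $2d\mid L'$. Grouping $d$ qubits per qudit gives a bilayer circuit with $K=2^d$ and $M=L'/(2d)=O((L+d)/d)$. Each coarse gate contains at most $d^2$ physical gates. Its $K^2\times K^2$ matrix is a product of these, each embedded as a sparse matrix. Assembling it therefore takes $\poly(2^d)$ operations, and also $\poly(2^d)$ runtime even if computed sequentially. All $2M-1$ coarse gates are built in parallel, so this step costs $O(M\poly(2^d))$ operations and $\poly(2^d)$ runtime.

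\emph{Sampling and decoding.} Proposition~\ref{prop:bilayer-sampler} applied to \routine{TreeSample} gives runtime $O(\poly(K)\log(M+1))=O(\poly(2^d)\log(L+d))$, using $M+1\le L+d$ (since $d\ge 2$). It also gives $O(M\poly(K))=O(\poly(2^d)(L+d))$ operations. Decoding each qudit label into $d$ bits and extracting~$z_A$ is done in parallel, with $O(d)$ runtime and $O(L')$ operations in total. Adding the three steps, every runtime term is dominated by $\poly(2^d)\log(L+d)$ and every operation count by $\poly(2^d)(L+d)$, which gives the claim.

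Correctness is not part of the statement, but it follows from two facts. Discarding gates outside $\Lambda(A)$ leaves the marginal on~$A$ unchanged. Padding qubits are never touched by a gate, so they do not change the marginal either. The one point needing care is that coarse-gate construction might be counted with depth $d^2$ times the cost of a gate multiplication. This is still $\poly(2^d)$, so it is absorbed into the bound.
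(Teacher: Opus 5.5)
Your proposal is correct and follows the paper's proof essentially step for step. You construct the gates of $\tsU(f)$ on the light cone with $O(dL)$ operations, assemble the coarse gates with $O(M\,\poly(2^d))$ operations in $\poly(2^d)$ runtime, and then read off the cost of \routine{TreeSample} and decoding from Proposition~\ref{prop:bilayer-sampler}. Your extra bookkeeping only fills in details the paper leaves implicit: the padding, the estimate $M+1\leq L+d$, and taking the light cone in the geometry of~$\sU$ rather than of $\tsU(f)$, which is what makes its endpoints independent of~$f$, as the paper notes in Section~\ref{sec:arithmetic-realization}.
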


\begin{proof}
Construct the gates of $\tsU(f)$ on the backward light cone of~$A$.
This takes $O(Ld)$ elementary real operations and $O(d)$ parallel
runtime; gates outside the light cone are not processed.
Section~\ref{sec:bilayer-reduction} gives a bilayer circuit with
$2M=O(L/d+1)$ qudits of dimension $K=2^d$. Constructing its coarse
gate matrices takes $O(M\,\poly(2^d))$ operations and $\poly(2^d)$
parallel runtime. Apply \routine{TreeSample}, decode the qubit outcomes,
and retain those on~$A$. Proposition~\ref{prop:bilayer-sampler} gives
the claimed bounds.
\end{proof}


\section{The sampling algorithm}
\label{sec:full-routine}

In this section we assemble the subprograms of
Sections~\ref{sec:circuit-cutting}--\ref{sec:circuitsampling} into the
algorithm \Sample{} (Algorithm~\ref{alg:full-noisy-interval-sampler}) of Theorem~\ref{thm:main}. The proof of
Theorem~\ref{thm:main} is completed in Section~\ref{sec:simulation-proof}.

The inputs of \Sample{} are the brickwork circuit~$\sU$, the noise
probability~$p$, and an integer width bound~$W_*\geq1$. The algorithm
draws a noise fixing~$f$ with \routine{DrawNoise}
(Algorithm~\ref{alg:draw-noise}) and computes the cut indicators
$\{E_{1/2+k}\}_{k=1}^{n-1}$ of Section~\ref{sec:convexdecompo}. It then
sets $\ell=2W_*+4$ as in Eq.~\eqref{eq:ell-def} and calls
\routine{LocalSample} (Algorithm~\ref{alg:local-marginal-sampler}) once
for every interval of the two partitions $\{A_r\}_{r=0}^{r_{\max}}$
and $\{B_s\}_{s=0}^{s_{\max}}$ of Eqs.~\eqref{eq:A-intervals}
and~\eqref{eq:B-intervals}. All these calls share the noise fixing~$f$,
but each call uses its own independent randomness. Finally,
\routine{BoundaryPass} (Algorithm~\ref{alg:boundary-selection}) selects
one interval sample for every cut-bounded component.

\begin{algorithm}[!htbp]
\caption{\Sample{}: sample local marginals and assemble a noisy-circuit output.}
\label{alg:full-noisy-interval-sampler}
\small
\begin{algorithmic}[1]
\Require A depth-$d$ brickwork circuit $\sU$ on $n$ qubits, a noise probability $p\in(0,1)$, and an integer width bound $W_*\geq1$.
\Ensure A string $Z\in\bits^n$ whose conditional distribution given the noise fixing $F=f$ is $Q^f_Z$ whenever $W_{\max}<W_*$ (Proposition~\ref{prop:assembled-routine}). If $W_*$ is at least the value in Eq.~\eqref{eq:Wstar-choice}, the distribution of $Z$ is within total variation distance~$\delta$ of~$P^{(p)}$ (Corollary~\ref{cor:l1-approximation}).
\Function{\Sample}{$\sU,p,W_*$}
\State $(f,\tsU)\gets\Call{DrawNoise}{\sU,p}$.
\State Compute the cut indicators $\{E_{1/2+k}\}_{k=1}^{n-1}$ as products of the rectangle variables of~$f$, as in Section~\ref{sec:convexdecompo}.
\State Set $\ell\gets 2W_*+4$, $r_{\max}\gets \lceil n/\ell\rceil-1$, and $s_{\max}\gets \lceil(n-\ell/2)/\ell\rceil$.
\ForAll{$r=0,\ldots,r_{\max}$ \textbf{in parallel}}
  \State Set $A_r\gets(r\ell,(r+1)\ell]\cap[n]$.
  \State $X_{A_r}\gets\Call{LocalSample}{\sU,f,A_r}$.
\EndFor
\ForAll{$s=0,\ldots,s_{\max}$ \textbf{in parallel}}
  \If{$s=0$}
    \State Set $B_s\gets(0,\ell/2]\cap[n]$.
  \Else
    \State Set $B_s\gets((s-\frac{1}{2})\ell,(s+\frac{1}{2})\ell]\cap[n]$.
  \EndIf
  \State $Y_{B_s}\gets\Call{LocalSample}{\sU,f,B_s}$.
\EndFor
\State $Z\gets\Call{BoundaryPass}{\{E_{1/2+k}\}_{k=1}^{n-1},\ell,X,Y}$.
\State \Return $Z$.
\EndFunction
\end{algorithmic}
\end{algorithm}

Throughout this section, $W_{\max}=\max_{0\leq r\leq\numCuts}W_r$ denotes
the maximum width of a cut-bounded component of the noise fixing, and
\begin{align}
  \Good=\{W_{\max}<W_*\}
  \qquad\textrm{and}\qquad
  \Bad=\{W_{\max}\geq W_*\}
  \label{eq:good-bad-events}
\end{align}
are the two complementary events considered in the proof of
Lemma~\ref{lem:maxcircuitwidth}. We use $Q^f_Z$ and $Q^f_{Z_C}$ for the
output distribution of~$\tsU(f)$ and its marginal on~$C$, respectively,
as defined in Eqs.~\eqref{eq:fixed-noise-output}
and~\eqref{eq:local-marginal-target}.

\begin{proposition}[Correctness of \Sample{}]
\label{prop:assembled-routine}
Let $W_*\geq1$ be an integer, and let $f$ be a noise fixing with
$W_{\max}<W_*$. Conditioned on $F=f$, the output~$Z$ of
\textnormal{\Sample{}}$(\sU,p,W_*)$ has distribution~$Q^f_Z$ of
Eq.~\eqref{eq:fixed-noise-output}.
\end{proposition}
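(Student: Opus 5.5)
The proof is essentially an assembly of the correctness results already established for the subroutines. I would fix a noise fixing $f$ with $W_{\max}<W_*$ and condition throughout on $F=f$. Once $f$ is fixed, the circuit $\tsU(f)$ returned by \routine{DrawNoise} is deterministic, and so are the cut indicators $\{E_{1/2+k}\}_{k=1}^{n-1}$, the interval length $\ell=2W_*+4$, and the two partitions $\{A_r\}$ and $\{B_s\}$. Conditioned on $F=f$, the only remaining randomness is the internal randomness of the \routine{LocalSample} calls, which by construction is fresh and independent across calls.

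\textbf{Step 1: the interval samples.} By Corollary~\ref{cor:local-marginal-sampling-circuit} together with Proposition~\ref{prop:bilayer-sampler}, each call $\routine{LocalSample}(\sU,f,A_r)$ returns an exact sample from $Q^f_{Z_{A_r}}$. This uses two facts: the marginal on $A_r$ depends only on the gates in the backward light cone $\Lambda(A_r)$, and the bilayer reduction, including the padding by noninteracting $\ket{0}$ qubits and the decoding of qudit labels, preserves the output distribution. The same holds for each $B_s$.

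\textbf{Step 2: independence and factorization.} Since the calls use independent randomness, the strings $X=(X_{A_r})_r$ and $Y=(Y_{B_s})_s$ are distributed exactly as in Eqs.~\eqref{eq:factorizationQdistributionA}--\eqref{eq:factorizationQdistributionB}, and $X$ is independent of $Y$. Lemma~\ref{lem:conditional-factorization} gives the product structure $Q^f_Z=\prod_s Q^f_{Z_{J_s}}$ required in Section~\ref{sec:marginalsamples}.

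\textbf{Step 3: assembly.} I would invoke Theorem~\ref{thm:local-selection} with the hypothesis $W_{\max}<W_*$ and $\ell=2W_*+4$. Its proof relies on Theorem~\ref{cor:interval-containment-circuit} for the flags, and on Lemmas~\ref{lem:component-source-assignment} and~\ref{lem:combining-product-marginals} for the whole-component assignment. It yields $Z\sim Q^f_Z$.

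The main obstacle is a subtlety rather than a computation: the assignment of components to interval samples must depend only on $f$ and not on the sampled values. This holds because the flags $\chi_j$ are computed solely from the cut indicators, which are functions of $f$, so Lemma~\ref{lem:combining-product-marginals} applies conditionally on $F=f$. I would also check two endpoint conventions. First, the intervals used by \Sample{} coincide with those assumed by \routine{BoundaryPass}, including the truncated $B_0$ and the last intervals. Second, the boundary Pauli gates $S^{(t)}_j$ are absorbed within $\tsU(f)$ and so do not affect light cones beyond width $2d$.
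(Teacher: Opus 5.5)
Your proposal is correct and follows essentially the same route as the paper: exactness of each \routine{LocalSample} call via the light-cone reduction and Proposition~\ref{prop:bilayer-sampler}, mutual independence of the interval samples from the independent randomness of the calls (not from any disjointness of their overlapping light cones), and then Theorem~\ref{thm:local-selection} under $W_{\max}<W_*$. One small correction: Corollary~\ref{cor:local-marginal-sampling-circuit} is only a complexity statement, so the correctness of \routine{LocalSample} should be cited to the light-cone argument of Section~\ref{sec:circuitsampling} together with Proposition~\ref{prop:bilayer-sampler}.
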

\begin{proof}
Fix~$f$. By the light-cone reduction of Section~\ref{sec:circuitsampling}
and Proposition~\ref{prop:bilayer-sampler}, the sample returned by
\routine{LocalSample}$(\sU,f,C)$ for an interval~$C$ has
distribution~$Q^f_{Z_C}$. The calls to
\routine{LocalSample} use independent randomness, so the samples
$\{X_{A_r}\}_r$ and $\{Y_{B_s}\}_s$ are mutually independent. Note that
the backward light cones of different intervals overlap; the independence
of the samples is a consequence of the independent randomness of the
calls, not of any disjointness. Hence $X$ and~$Y$ are distributed as in
Eqs.~\eqref{eq:factorizationQdistributionA}
and~\eqref{eq:factorizationQdistributionB} with $Q_Z=Q^f_Z$. Since
$W_{\max}<W_*$, Theorem~\ref{thm:local-selection} shows that the output
of \routine{BoundaryPass} has distribution~$Q^f_Z$.
\end{proof}

The only source of error is the event~$\Bad$, on which the output of
\Sample{} may have an arbitrary distribution.

\begin{corollary}[Error of \Sample{}]
\label{cor:l1-approximation}
Let $W_*\geq1$ be an integer, and let $\widehat P$ be the output
distribution of \textnormal{\Sample{}}$(\sU,p,W_*)$. Then
\begin{align}
  d_{\mathrm{TV}}(\widehat P,P^{(p)})\leq\Pr[\Bad] .
  \label{eq:sample-tv-error}
\end{align}
In particular, for $\delta\in(0,1)$, if
\begin{align}
  W_*\geq\left\lceil\frac{\log(n/\delta)}{\log(1/(1-p^d))}\right\rceil+1,
  \label{eq:sample-width-condition}
\end{align}
then $d_{\mathrm{TV}}(\widehat P,P^{(p)})\leq\delta$.
\end{corollary}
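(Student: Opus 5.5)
The plan is a standard coupling/mixture argument over the noise fixing~$F$. Since \Sample{} first draws $f$ via \routine{DrawNoise}, which by construction has the distribution of~$F$ in Eq.~\eqref{eq:noisefixing}, the output distribution is a mixture
\begin{align}
  \widehat P(z)=\sum_f \Pr[F=f]\,\widehat P(z\mid F=f),
\end{align}
where $\widehat P(\cdot\mid F=f)$ is the conditional output law of \Sample{} given the noise fixing. The target distribution admits the same decomposition, $P^{(p)}(z)=\sum_f\Pr[F=f]\,Q^f_Z(z)$, by the identity $\cU^{(p)}(\rho)=\mathbb{E}_F[\tsU(F)\rho\tsU(F)^\dagger]$ established in the proof of Lemma~\ref{lem:conditional-factorization}. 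The same mixing weights $\Pr[F=f]$ appear in both expressions, and this is the point of having \Sample{} draw the noise exactly rather than approximately.

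Next I would use convexity of total variation distance, or equivalently the triangle inequality applied termwise to $\frac12\sum_z|\cdot|$, to obtain
\begin{align}
  d_{\mathrm{TV}}(\widehat P,P^{(p)})\leq\sum_f\Pr[F=f]\,d_{\mathrm{TV}}\bigl(\widehat P(\cdot\mid F=f),Q^f_Z\bigr).
\end{align}
The sum splits according to whether $f$ lies in $\Good$ or in $\Bad$; these events are determined by the cut indicators and hence by~$f$. For $f\in\Good$, Proposition~\ref{prop:assembled-routine} gives $\widehat P(\cdot\mid F=f)=Q^f_Z$, so those terms vanish. For $f\in\Bad$, I would bound each distance trivially by~$1$. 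The result is $d_{\mathrm{TV}}(\widehat P,P^{(p)})\leq\Pr[\Bad]$, which is Eq.~\eqref{eq:sample-tv-error}.

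For the ``in particular'' clause, I would invoke Lemma~\ref{lem:maxcircuitwidth} with $q=p^d$. Condition~\eqref{eq:sample-width-condition} says that $W_*$ is at least the value in Eq.~\eqref{eq:Wstar-choice}. The lemma's final assertion, that the guarantee persists for any larger integer width, then gives $\Pr[\Bad]=\Pr[W_{\max}\geq W_*]\leq\delta$.

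I do not expect a real obstacle in this argument. The only point needing care is that \Sample{}'s output on $\Bad$ is still a well-defined string, since \routine{LocalSample} and \routine{BoundaryPass} always terminate. The conditional law on $\Bad$ is therefore a genuine probability distribution, and the trivial bound of~$1$ applies.
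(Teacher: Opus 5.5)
Your proposal is correct and is essentially the paper's proof. Both write $\widehat P$ and $P^{(p)}$ as mixtures over the same noise fixing $F$ (the latter via the decomposition in the proof of Lemma~\ref{lem:conditional-factorization}), apply joint convexity of $d_{\mathrm{TV}}$, and use Proposition~\ref{prop:assembled-routine} on $\Good$ together with the trivial bound $1$ on $\Bad$; for the second claim, both invoke Lemma~\ref{lem:maxcircuitwidth} with $q=p^d$.
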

\begin{proof}
The noise fixing~$F$ drawn by \routine{DrawNoise} has the distribution of
Eq.~\eqref{eq:noisefixing}. Let $\widehat P(\cdot\mid F=f)$ be the
conditional distribution of the output of \Sample{} given $F=f$. Then
$\widehat P=\mathbb{E}_F[\widehat P(\cdot\mid F)]$, and
$P^{(p)}=\mathbb{E}_F[Q^F_Z]$ by the representation of the noisy circuit as
a mixture of the unitary circuits~$\tsU(f)$ in the proof of
Lemma~\ref{lem:conditional-factorization}. Since the total variation
distance is jointly convex,
\begin{align}
  d_{\mathrm{TV}}(\widehat P,P^{(p)})
  \leq\mathbb{E}_F\bigl[d_{\mathrm{TV}}(\widehat P(\cdot\mid F),Q^F_Z)\bigr]
  \leq\Pr[\Bad] ,
\end{align}
where we used Proposition~\ref{prop:assembled-routine} on the event~$\Good$
and the bound $d_{\mathrm{TV}}\leq1$ on its complement. If
$W_*\geq\left\lceil\log(n/\delta)/\log(1/(1-p^d))\right\rceil+1$,
then $\Pr[\Bad]\leq\delta$ by Lemma~\ref{lem:maxcircuitwidth} with $q=p^d$.
\end{proof}

Note that the error in Eq.~\eqref{eq:sample-tv-error} can be removed
altogether: the event~$\Bad$ is determined by the cut indicators, and on
this event one may return the output of
$\routine{LocalSample}(\sU,f,[n])$ instead. The resulting algorithm
samples exactly from~$P^{(p)}$, but its worst-case parallel runtime is
that of the noiseless sampler of Corollary~\ref{cor:noiseless}.

\begin{corollary}[Complexity of \Sample{}]
\label{cor:full-interval-sampler-circuit}
For every integer $W_*\geq1$, \textnormal{\Sample{}}$(\sU,p,W_*)$ has parallel
runtime
\begin{align}
  O\bigl(\poly(2^d)\log(W_*+d)\bigr)
  \label{eq:sample-runtime}
\end{align}
and uses $O(n\,\poly(2^d))$ elementary real operations.
\end{corollary}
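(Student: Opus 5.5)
The plan is to bound the cost of \Sample{} stage by stage, using the complexity results already proved for each subroutine. The stages are: \routine{DrawNoise}, computing the cut indicators, the parallel calls to \routine{LocalSample}, and \routine{BoundaryPass}. The runtime of \Sample{} is the sum of the stage runtimes, since the stages run sequentially. Its size is the sum of the stage sizes.

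\textbf{Stage 1: drawing the noise.} \routine{DrawNoise} makes independent draws at each of the $O(nd)$ gate locations and at the $d$ boundary locations. Each draw samples from a distribution on at most $16$ outcomes with weights computed from~$p$ in constant time. By Lemma~\ref{lem:discrete-sampling-cost}, each draw costs $O(1)$ runtime and $O(1)$ operations.

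Assembling the substituted gates $\tilde U^{(t)}_j$ costs $O(1)$ per gate. Each gate is either a selection between $Q\otimes R$ and $U(Q\otimes R)$, or a fixed $4\times4$ matrix product. We do not need to form the full circuit $\tsU$ explicitly, since \routine{LocalSample} only reads the gates in its light cone. This stage therefore takes $O(1)$ runtime and $O(nd)$ operations.

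\textbf{Stage 2: cut indicators.} Each $E_{1/2+k}$ is a product of $d/2$ bits. Computing all of them by balanced trees costs $O(\log d)$ runtime and $O(nd)$ operations.

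\textbf{Stage 3: local sampling.} There are $r_{\max}+1+s_{\max}+1=O(n/\ell+1)$ calls to \routine{LocalSample}. Each call acts on an interval of length at most $\ell=2W_*+4$, so its backward light cone has width $L\le 2W_*+4+2d$.

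By Corollary~\ref{cor:local-marginal-sampling-circuit}, each call takes runtime $O(\poly(2^d)\log(W_*+d))$ and uses $O(\poly(2^d)(W_*+d))$ operations. All calls run in parallel, so the runtime of this stage is the maximum over calls. The total size is
\begin{align}
O\!\left(\left(\tfrac{n}{W_*}+1\right)\poly(2^d)(W_*+d)\right)\ .
\end{align}

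\textbf{Stage 4: assembly.} By Corollary~\ref{cor:local-selection-circuit}, \routine{BoundaryPass} takes runtime $O(\log(W_*+1))$ and $O(n)$ operations.

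\textbf{Main obstacle: the size bound.} The only real difficulty is bounding the Stage~3 size by $O(n\,\poly(2^d))$ when $W_*$ is small compared with $d$. In that regime the light-cone overhead $2d$ dominates each interval. I handle this by the elementary inequality $(n/W_*+1)(W_*+d)\le n+W_*+nd/W_*+d$. Using $W_*\ge1$ and $d\le n$, each term is $O(nd)$.

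The case $W_*>n$ needs a separate remark, since otherwise the term $W_*$ is not bounded by $n$. In that case each interval is truncated to $[n]$, so every light cone has width at most $n$. We can therefore replace $W_*$ by $\min(W_*,n)$ throughout the size bound.

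The resulting factor of $d$ is absorbed into $\poly(2^d)$. Combining the four stages then gives runtime $O(\poly(2^d)\log(W_*+d))$ and size $O(n\,\poly(2^d))$, as claimed.
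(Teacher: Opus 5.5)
Your proposal is correct and follows the paper's proof essentially step by step. It uses the same four stages and the same subroutine bounds, with the runtime dominated by the parallel \routine{LocalSample} calls and the size by $O(nd\,\poly(2^d))$. The only difference is bookkeeping in Stage~3: the paper bounds the call for an interval $C$ by $O(\poly(2^d)(|C|+d))$ and uses $\sum_C|C|=2n$, which gives $O(\poly(2^d)(n+d(1+n/\ell)))$ directly and makes your separate treatment of the case $W_*>n$ unnecessary.
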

\begin{proof}
We consider the four stages of \Sample{} (Algorithm~\ref{alg:full-noisy-interval-sampler})
in turn. \routine{DrawNoise} draws $O(nd)$ independent random variables
with constant support and forms the substituted gates~$\tilde U^{(t)}_j$,
which are $4\times4$ matrices. By Lemma~\ref{lem:discrete-sampling-cost},
this takes constant parallel runtime and $O(nd)$ operations. Each cut
indicator is the product of the $d/2$ rectangle variables at one bond,
and all cut indicators are computed along balanced binary trees in
parallel runtime $O(\log d)$ with $O(nd)$ operations.

Every interval~$C$ of the two partitions has length $|C|\leq\ell$, so
its backward light cone has width at most $\ell+2d$. By
Corollary~\ref{cor:local-marginal-sampling-circuit}, the calls to
\routine{LocalSample}, which are executed in parallel, have parallel
runtime $O(\poly(2^d)\log(\ell+3d))=O(\poly(2^d)\log(W_*+d))$, and the
call for~$C$ uses $O(\poly(2^d)(|C|+d))$ operations. The two partitions
consist of $O(1+n/\ell)$ intervals whose lengths sum to~$2n$. Hence the
calls use
\begin{align}
  O\Bigl(\poly(2^d)\Bigl(n+d\Bigl(1+\frac{n}{\ell}\Bigr)\Bigr)\Bigr)
  =O\bigl(nd\,\poly(2^d)\bigr)
  =O\bigl(n\,\poly(2^d)\bigr)
\end{align}
operations in total, where we used $d\leq n$ and $d\leq2^d$. Finally,
\routine{BoundaryPass} has parallel runtime $O(\log(W_*+1))$ and uses
$O(n)$ operations by Corollary~\ref{cor:local-selection-circuit}. Summing
the four contributions and using $\log d\leq d\leq\poly(2^d)$ gives the
claim.
\end{proof}

\subsection{Proof of Theorem~\ref{thm:main}}
\label{sec:simulation-proof}

\begin{proof}[Proof of Theorem~\ref{thm:main}]
We run \Sample{} with the width bound
\begin{align}
  W_*=\bigl\lceil p^{-d}\log(n/\delta)\bigr\rceil+1 .
  \label{eq:main-effective-width}
\end{align}
Since $\log(1/(1-q))\geq q=p^d$, this is at least the value in
Eq.~\eqref{eq:Wstar-choice}, and Corollary~\ref{cor:l1-approximation}
gives Eq.~\eqref{eq:main-total-variation}. By
Corollary~\ref{cor:full-interval-sampler-circuit}, the algorithm uses
$O(n\,\poly(2^d))=n\,2^{O(d)}$ elementary operations, and its parallel
runtime is $O(\poly(2^d)\log(W_*+d))$. Here
$W_*+d\leq(2+d)\bigl(2+p^{-d}\log(n/\delta)\bigr)$, since $a+b\leq ab$ for
$a,b\geq2$. It follows that
\begin{align}
  \poly(2^d)\log(W_*+d)
  &\leq\poly(2^d)\Bigl(\log(2+d)+\log\bigl(2+p^{-d}\log(n/\delta)\bigr)\Bigr)
  \notag\\
  &=2^{O(d)}\log\bigl(2+p^{-d}\log(n/\delta)\bigr),
\end{align}
where we used $\log(2+d)=O(d)$ and
$\log(2+p^{-d}\log(n/\delta))\geq\log2$. Finally, $2+x\leq4\max\{1,x\}$
for $x\geq0$, and $\log\log(n/\delta)\geq\log\log2>-1/2$ since
$n/\delta>2$. With $x=p^{-d}\log(n/\delta)$ this gives
$\log\bigl(2+p^{-d}\log(n/\delta)\bigr)
\leq3\bigl(1+d\log\frac1p+\log\log\frac n\delta\bigr)$, which is the
runtime bound in Eq.~\eqref{eq:main-complexity}.
\end{proof}

\paragraph{Classical inputs.}
Let $x\in\{0,1\}^k$ with $k\leq n$, and consider the noisy circuit applied
to the input state $\ket{x}\otimes\ket{0^{n-k}}$, with output
distribution $P^{(p)}(\cdot\mid x)$ as in
Section~\ref{sec:simulation-overview}. Set $x_j=0$ for $k<j\leq n$ and
$X^x=\bigotimes_{j=1}^nX^{x_j}$, so that
$\ket{x}\otimes\ket{0^{n-k}}=X^x\ket{0^n}$. The depolarizing channel
commutes with Pauli conjugation, $\cN_p(X\rho X)=X\cN_p(\rho)X$, and
therefore
\begin{align}
  P^{(p)}(\cdot\mid x)=P^{(p)}_{\sU_x},
\end{align}
where $P^{(p)}_{\sU_x}$ denotes the distribution of
Eq.~\eqref{eq:targetdistribution} for the depth-$d$ brickwork
circuit~$\sU_x$ obtained from~$\sU$ by replacing the first layer~$L_1$
by~$L_1X^x$, i.e., by replacing each gate $U^{(1)}_j$ of the first layer
by $U^{(1)}_j(X^{x_{2j-1}}\otimes X^{x_{2j}})$.
The matrix elements of these gates are those of~$U^{(1)}_j$ with columns
permuted according to two bits of~$x$, and they are computed from the
matrix elements of~$\sU$ and~$x$ by $O(n)$ selections in constant
parallel runtime. Applying \Sample{} to~$\sU_x$ therefore gives the
extension of Theorem~\ref{thm:main} to classical inputs stated in
Section~\ref{sec:simulation-overview}, with the same runtime and size
bounds.

\section{Circuit implementations}
\label{sec:circuit-realizations}

In this section we show how to implement \Sample{}
(Algorithm~\ref{alg:full-noisy-interval-sampler}) by classical Boolean circuits. We first
observe that the algorithm itself is an arithmetic circuit
(Section~\ref{sec:arithmetic-realization}). We then give exact Boolean
circuits for the combining stage
(Section~\ref{sec:combining-boolean-realization}) and describe the
finite-precision implementation of the remaining stages
(Section~\ref{sec:boolean-realization}), whose analysis is given in
Appendix~\ref{app:boolean-implementation}. Throughout, the size of a
circuit is its number of gates and its depth is the maximum number of
gates on a directed path from an input to an output; fan-out is
unrestricted.

\subsection{Arithmetic circuits}
\label{sec:arithmetic-realization}

The sequence of elementary operations executed by \Sample{}, and the
registers on which they act, depend only on $n$, $d$, and~$W_*$. Indeed,
the interval partitions, the backward light cones, the bilayer
reductions, and the joining trees are determined by the brickwork circuit geometry;
here the light cone of an interval is taken in~$\sU$, which contains its
light cone in~$\tsU(f)$ and leaves the marginal on the interval
unchanged. Moreover,  all data-dependent choices,
namely the substituted gates~$\tilde U^{(t)}_j$, the branches taken in the
sampling trees of Lemma~\ref{lem:discrete-sampling-cost}, and the table
entries retained by \routine{JoinIntervals}, are made by selection
operations. Replacing every elementary operation by a gate (with real-valued inputs and outputs on wires) therefore
gives a randomized arithmetic circuit. Its gates perform real
arithmetic, comparisons, selections, and Boolean operations of bounded
arity, and a random gate with input $r\in[0,1]$ returns an independent random bit with distribution~$\mathsf{Ber}(r)$. The depth of the circuit is the parallel runtime
of the algorithm, and its size is the number of elementary operations.
The inputs of the circuit are the matrix elements of the gates of~$\sU$
and the noise probability~$p$; for a fixed noisy quantum circuit they are
constants. This gives the following reformulation.
\begin{corollary}[Arithmetic circuits]
\label{cor:arithmetic-circuit-realization}
Under the assumptions of Theorem~\ref{thm:main}, there is a randomized
arithmetic circuit of bounded fan-in with depth
$2^{O(d)}\log\bigl(2+p^{-d}\log(n/\delta)\bigr)$ and size $n\,2^{O(d)}$
whose output distribution is within total variation distance~$\delta$
of~$P^{(p)}$. For constant~$d$ and~$p$ and constant or inverse-polynomial~$\delta$,
its depth is $O(\log\log n)$ and its size is $O(n)$. Without noise, there
is a randomized arithmetic circuit of depth $O(\poly(2^d)\log n)$ and size
$O(n\,\poly(2^d))$ which samples exactly from the output distribution
of~$\sU$.
\end{corollary}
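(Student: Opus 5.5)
The plan is to obtain Corollary~\ref{cor:arithmetic-circuit-realization} by unrolling \Sample{} into a straight-line program and then reading off depth and size from Corollary~\ref{cor:full-interval-sampler-circuit} and the proof of Theorem~\ref{thm:main}. The first step is to check that the computation is \emph{oblivious}: the set of elementary operations and their data dependencies depend only on $(n,d,W_*)$, and not on the random bits or on the noise fixing~$f$.

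I would go through the subroutines in turn. \routine{DrawNoise} draws Bernoulli bits and then forms $\tilde U^{(t)}_j$ by selecting between $U^{(t)}_j(Q\otimes R)$ and $Q\otimes R$. All sixteen Pauli products can be computed unconditionally, and the two branches of the \textbf{if} become selection gates. The interval partitions $\{A_r\},\{B_s\}$ are fixed by $\ell=2W_*+4$.

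\routine{LocalSample} needs a fixed light cone, so I would take the backward light cone in $\sU$ rather than in $\tsU(f)$. Cut gates $Q\otimes R$ are product unitaries, which can only shrink the light cone. The marginal on the interval is therefore the same, and $|\Lambda(A)|\le |A|+2d$ still holds. With this choice the bilayer grouping and the joining tree of \routine{TreeSample} are fixed. The only data-dependent step in \routine{JoinIntervals} is indexing $T_{I\cup J}$ by the sampled $(u,v)$ and the retained boundary values. I would implement this indexing with selection trees of size $\poly(K)$ and depth $O(\log K)$, which stays within the $\poly(K)$ per-join budget of Lemma~\ref{lem:join}. The sampler of Lemma~\ref{lem:discrete-sampling-cost} is already a fixed tree of arithmetic operations, Bernoulli gates and selections. \routine{BoundaryPass} and \routine{IntervalContainment}$_N$ are fixed Boolean trees followed by selections.

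Next I would replace each operation by a gate of bounded fan-in and each Bernoulli draw by a random gate. Circuit depth is then the longest dependency chain, which is the parallel runtime, and size is the operation count. Applying Corollary~\ref{cor:full-interval-sampler-circuit} with $W_*$ as in Eq.~\eqref{eq:main-effective-width} gives size $n\,2^{O(d)}$. For the depth, the proof of Theorem~\ref{thm:main} gives $2^{O(d)}\log(2+p^{-d}\log(n/\delta))$. The error bound $\delta$ comes from Corollary~\ref{cor:l1-approximation}, because the circuit's output distribution is exactly that of \Sample{}.

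For constant $d$ and $p$ and $\delta\ge 1/\poly(n)$, $\log(n/\delta)=O(\log n)$, which gives depth $O(\log\log n)$ and size $O(n)$. In the noiseless case I would unroll the sampler of Corollary~\ref{cor:noiseless} in the same way, getting depth $O(\poly(2^d)\log n)$ and size $O(n\,\poly(2^d))$.

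The main obstacle is the obliviousness check for \routine{JoinIntervals} and \routine{TreeSample}. The data-dependent table lookups must be written as selection networks without exceeding the $\poly(K)$ depth per level, and the unpaired-interval carrying must be fixed in advance. I expect this to go through, because the tree shape depends only on $M$.
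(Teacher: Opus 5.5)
Your proposal is correct and follows essentially the same route as the paper. The paper likewise observes that the operations of \Sample{} depend only on $n$, $d$ and $W_*$, and that light cones can be taken in $\sU$ rather than $\tsU(f)$. It also notes that all data-dependent choices are selections, and then reads off depth, size and error from Theorem~\ref{thm:main} and Corollary~\ref{cor:noiseless}. Your selection-network implementation of the table indexing in \routine{JoinIntervals}, of size $\poly(K)$, elaborates what the paper states in one line.
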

\begin{proof}
Apply the construction above to \Sample{} with the width bound of
Eq.~\eqref{eq:main-effective-width}, and to \routine{TreeSample}
(Algorithm~\ref{alg:joining-tree}), together with the bilayer reduction
and decoding used in Corollary~\ref{cor:noiseless}. The circuits perform the same operations on
the same random bits as the algorithms and hence have the same output
distributions. The bounds are those of Theorem~\ref{thm:main} and
Corollary~\ref{cor:noiseless}.
\end{proof}

\subsection{Boolean circuits for the combining stage}
\label{sec:combining-boolean-realization}

The subprograms \routine{BoundaryPass} (Algorithm~\ref{alg:boundary-selection})
and \routine{IntervalContainment}$_N$
(Algorithm~\ref{alg:interval-containment}) operate on bits only. Their
arithmetic circuits are therefore Boolean circuits, and with unbounded
fan-in their depth becomes constant.

\begin{corollary}[Boolean circuits for containment and selection]
\label{cor:combining-boolean-realization}
\textnormal{\routine{IntervalContainment}}$_N$ on an interval of $N$ sites is
computed by a Boolean circuit with fan-in-two AND and OR gates and NOT
gates of depth $O(\log N)$ and size $O(N)$, and by a circuit with
unbounded fan-in AND and OR gates of depth two and size $O(N)$.
\textnormal{\routine{BoundaryPass}} with interval length $\ell=2W_*+4$ is
computed by a Boolean circuit with fan-in-two AND and OR gates and NOT
gates of depth $O(\log\ell)=O(\log(W_*+1))$ and size $O(n)$, and by a
circuit with unbounded fan-in AND and OR gates and NOT gates of constant
depth and size $O(n)$. All these circuits compute the respective maps
exactly.
\end{corollary}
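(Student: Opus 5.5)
The plan is to treat the two subroutines separately and read the Boolean circuits directly off Algorithm~\ref{alg:interval-containment}, Eq.~\eqref{eq:interval-containment-output} and Algorithm~\ref{alg:boundary-selection}. All inputs, messages and outputs are bits. The only arithmetic-model operations used are bounded-arity Boolean operations (OR at internal nodes, AND/OR at leaves) and selections between two bits. A selection $b\,?\,x:y$ is the Boolean formula $(b\land x)\lor(\lnot b\land y)$, of constant depth and size. Exactness is therefore immediate: every gate computes its operation on bits with no rounding.

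For \routine{IntervalContainment}$_N$ with fan-in two, I replace each node of the upward and downward passes of Algorithm~\ref{alg:interval-containment} by its OR gates, two per message pair, and each leaf output by one OR and one AND. The tree has $O(N)$ nodes and height $\lceil\log_2(N+1)\rceil$, so by the count in Theorem~\ref{cor:interval-containment-circuit} the circuit has size $O(N)$ and depth $O(\log N)$ for $N\geq2$; the case $N=1$ is a constant-size circuit. Correctness is exactly Theorem~\ref{cor:interval-containment-circuit}.

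With unbounded fan-in, I would not use the tree. Instead I compute Eq.~\eqref{eq:interval-containment-output} directly as $\chi_j=\mathrm{Pre}_{j-1}\land\mathrm{Suf}_j$, where $\mathrm{Pre}_{j-1}=\bigvee_{k<j}E_{1/2+k}$ and $\mathrm{Suf}_j=\bigvee_{k\geq j}E_{1/2+k}$. This gives depth two: one layer of ORs and one layer of ANDs. The main obstacle is the size, since a naive count of wires for all prefix ORs is $O(N^2)$, while the statement claims $O(N)$. Here size counts gates, not wires. The naive construction uses $N$ prefix gates, $N$ suffix gates and $N$ AND gates, so $O(N)$ gates, and the bound holds with gates as the size measure. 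If a wire count were intended, this would become $O(N^2)$. I would then fall back on a constant-depth prefix-OR construction with $O(N)$ wires, for instance splitting into blocks of size $\sqrt N$ and recursing a constant number of times. But since the paper defines size as the number of gates, the direct formula suffices.

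For \routine{BoundaryPass}, the boundary indicators $\tilde E_{1/2}$, $\tilde E_{1/2+m_r}$ depend only on $r$ and are hardwired constants, and the interior indicators are plain wire copies. I then place the \routine{IntervalContainment}$_{m_r}$ circuits side by side for all $r$, with $m_r\leq\ell$ and $\sum_r m_r=n$. This gives size $O(n)$ and depth $O(\log\ell)$ with fan-in two, or depth two with unbounded fan-in. The final layer implements $Z_j=(\chi_j\land X_j)\lor(\lnot\chi_j\land Y_j)$ at each site, adding constant depth and $O(n)$ gates. Since $\ell=2W_*+4$, we have $\log\ell=O(\log(W_*+1))$. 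Exactness of the output follows from Theorem~\ref{thm:local-selection}'s computation of the flags, because the circuits reproduce the algorithm bit for bit.
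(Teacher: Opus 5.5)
Your proposal is correct and matches the paper's proof. The fan-in-two bounds are read off from the message-passing tree of Theorem~\ref{cor:interval-containment-circuit}. The unbounded fan-in circuit evaluates the two disjunctions of Eq.~\eqref{eq:interval-containment-output} with one OR gate each followed by one AND gate, giving at most $3N$ gates in depth two. \routine{BoundaryPass} then runs these circuits on every $A_r$ with hardwired boundary constants, followed by the selection $Z_j=(\chi_j\land X_j)\lor(\neg\chi_j\land Y_j)$. Your remark that size counts gates rather than wires is the paper's convention, so the fallback prefix-OR construction is not needed.
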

\begin{proof}
The inputs, the messages $h_v$, $L_v$, and~$R_v$, the flags~$\chi_j$, and
the sample bits are bits, and every update in the two algorithms is a
Boolean operation of bounded arity; the selection of $X_j$ or~$Y_j$
according to~$\chi_j$ is $Z_j=(\chi_j\land X_j)\lor(\neg\chi_j\land Y_j)$.
The fan-in-two bounds are therefore those of
Theorem~\ref{cor:interval-containment-circuit}
and Corollary~\ref{cor:local-selection-circuit}. With unbounded fan-in, evaluate the
two disjunctions in Eq.~\eqref{eq:interval-containment-output} for every
site~$j$ by one OR gate each, followed by one AND gate. This gives depth
two and at most $3N$ gates. \routine{BoundaryPass} applies this circuit
to each of the $\lceil n/\ell\rceil$ intervals~$A_r$, with the boundary
indicators replaced by the constants specified in
\routine{BoundaryPass} (Algorithm~\ref{alg:boundary-selection}), and then performs the selection
at every site. Since $\sum_r|A_r|=n$, this takes constant depth and
$O(n)$ gates.
\end{proof}

\subsection{Finite precision and fair coins}
\label{sec:boolean-realization}

The remaining stages of \Sample{}, namely the generation of the noise
fixing, the construction of the coarse gates, and \routine{TreeSample},
operate on real numbers and biased random bits. Appendix~\ref{app:boolean-implementation} implements them with
independent fair coins and integer words of $\beta$~bits, using a single precision parameter~$s$: the noise
probability~$p$ and the real and imaginary parts of the gate entries are given as multiples of~$2^{-s}$, the noise fixing is drawn from fair coins by simulating the depolarizing channel at each noise location directly
(Appendix~\ref{app:noise-bits}), and each draw of \routine{TreeSample} is
realized with $s$~fair coins by integer comparisons against
cumulative weights (Appendix~\ref{app:finite-precision-tree}). All other computations, namely the products forming the coarse
gates and the contractions of amplitude tables in
\routine{JoinIntervals}, are carried out exactly in integer arithmetic.
Appendix~\ref{app:full-implementation} shows that
$s=O(d+\log(n/\delta))$ suffices for total variation error~$\delta$ and
that a word length $\beta=O(ds(W_*+d))$ accommodates every
intermediate integer, so that for constant~$d$ the
gates and the noise probability need to be specified with
$O(\log(n/\delta))$ bits only.

Appendix~\ref{app:boolean-arithmetic} converts the resulting algorithm into Boolean circuits. With fan-in two, the depth is $O(\log(W_*+d)\,(d+\log \beta))$, dominated by the $O(d+\log \beta)$ depth of the arithmetic at each of the $O(\log(W_*+d))$ levels of coarse-gate multiplications and joins, and the size is $n\,2^{O(d)}\beta^2$ (Theorem~\ref{thm:full-bounded-boolean}). For constant~$d$ and~$p$ and constant or inverse-polynomial~$\delta$, this gives the circuit~$\cC_2$ of Corollary~\ref{cor:overview-bounded-boolean}, and collapsing groups of $\lfloor\log_2\log_2n\rfloor$ consecutive levels into two levels of unbounded fan-in gates gives the circuit~$\cC_\infty$ of Corollary~\ref{cor:overview-unbounded-boolean}. Both circuits can be constructed in polynomial time, and both extend to classical inputs by permuting the entries of the first-layer gates according to the input bits as in Section~\ref{sec:simulation-proof}. Appendix~\ref{app:lookup-tables} replaces each call to \routine{LocalSample} by a single table lookup, which gives the constant-depth circuit~$\cC_{\mathrm{lt}}$ of Corollary~\ref{cor:compiled-constant-depth}.

\section{Conclusion and outlook}
\label{sec:conclusion}

We have shown that the output distribution of a depth-$d$ brickwork
circuit on $n$ qubits subject to depolarizing noise of constant strength
can be sampled to within total variation error~$\delta$ in parallel
runtime $2^{O(d)}\log\log(n/\delta)$ with $n\,2^{O(d)}$ elementary
operations, and that for constant depth and constant or
inverse-polynomial error the algorithm can be implemented by Boolean
circuits of depth $O(\log\log n)$ with unbounded fan-in, or of depth
$O((\log\log n)^2)$ with fan-in two. Tabulating the local marginals
gives randomized $\mathsf{AC}^0$-circuits, of constant depth and larger
polynomial size. Noise is what makes these depth bounds possible.
Without noise, our method gives parallel runtime $O(\log n)$, and by the
separations of~\cite{BGKT20,CCR26} for
noiseless one-dimensional circuits
(Table~\ref{tab:quantum-advantage-comparison}), no polynomial-size
Boolean circuit of depth $o(\log n/\log\log n)$ reproduces the
input/output behavior of every noiseless constant-depth circuit on a
line. We also
note that the dependence on~$n$ is small at any realistic system size,
since $\log_2\log_2n\leq6$ for $n\leq2^{64}$. At such sizes the depth of
the circuits of Corollaries~\ref{cor:overview-bounded-boolean}
and~\ref{cor:overview-unbounded-boolean} is governed by the
factor~$2^{O(d)}$, whose exponent we have not optimized.

Our argument extends to any single-qubit Pauli channel in which all four
Paulis occur with nonzero probability. Such a channel is a convex
combination of the completely depolarizing channel and another Pauli
channel, so that Lemma~\ref{lem:circuitcutting} holds with modified
weights and the noise fixing again yields a unitary circuit. More
generally, every channel whose Choi matrix has full rank contains the
completely depolarizing channel as a convex component, and the cutting
argument of Section~\ref{sec:circuit-cutting} applies. The residual
channel is then no longer unitary, however, so the cut-bounded components
are noisy circuits themselves, and the exact local sampler of
Section~\ref{sec:circuitsampling} would have to be replaced.

Our model does not include mid-circuit measurements, classical
feedforward, qubit resets, or fresh ancillas. Already local classical
feedforward introduces classical communication between qubits which the
noise does not interrupt, so that the circuit no longer decomposes into
components of bounded width, and adaptive circuits require a different
analysis.

Finally, we do not know whether the doubly logarithmic parallel runtime
of Theorem~\ref{thm:main} is optimal for algorithms with a linear number
of operations, whereas the runtime $O(\log n)$ of the noiseless sampler
is optimal~\cite{BGKT20,CCR26}.
Output bits whose backward light cones
are disjoint are independent, so the output distribution has
correlations of range $O(d)$ only. Nevertheless, producing a globally
consistent sample appears to require coordination across the cut-bounded
components, whose widths are logarithmic in~$n$, and this is the source
of the dependence on~$n$ in the runtime of Theorem~\ref{thm:main}. The
circuits of Corollary~\ref{cor:compiled-constant-depth} avoid this
dependence in the depth with unbounded fan-in, at the price of a polynomial size of degree
$O(dp^{-d})$.
\paragraph*{Acknowledgments:}
The conceptual and technical ideas underlying this work were developed jointly by the authors during a visit of RK at CQT. RK thanks CQT for their hospitality. He also  gratefully acknowledges support from the Simons Center for Geometry and Physics, Stony Brook University,
where some 
of the subsequent work on this paper was performed, 
as well as the Isaac Newton Institute for Mathematical Sciences, Cambridge, for support and hospitality during the programme
``Mathematics of many-body entanglement''. MT is supported by the NRF Investigatorship award (NRF-NRFI10-2024-0006).
RK gratefully acknowledges support by the European Research Council under Grant No. 101001976 (project EQUIPTNT) and the Munich Quantum Valley, which is supported by the Bavarian state government through the Hightech Agenda Bayern Plus.

\paragraph*{Declaration of use of artificial intelligence:}
Claude Fable 5 (Anthropic), GPT-5.6 Sol and GPT-6 Astra (OpenAI), working under the direction of the authors, helped to fix notational inconsistencies and typos throughout and helped with final revisions. The translation of the main algorithm from the arithmetic model to Boolean circuits, together with the corresponding analysis in Appendices~\ref{app:boolean-implementation} and~\ref{app:lookup-tables}, was carried out primarily by AI through several iterations of prompting by the authors. The authors have checked the arguments and take responsibility for any remaining inaccuracies.

\begingroup
\small
\bibliographystyle{alpha}
\bibliography{q}
\endgroup

\clearpage

\appendix

\section{Finite-precision analysis and Boolean circuits}
\label{app:boolean-implementation}
In this appendix we give the detailed constructions and analysis of the
finite-precision implementation of \Sample{}
(Algorithm~\ref{alg:full-noisy-interval-sampler}) outlined in
Section~\ref{sec:boolean-realization}. This implementation uses finite binary
words and independent fair coins in place of the exact real numbers and
biased random bits of the real-arithmetic model of
Section~\ref{sec:computational-model}. We bound its approximation error,
word lengths, and Boolean circuit complexity, and prove
Corollaries~\ref{cor:overview-bounded-boolean}
and~\ref{cor:overview-unbounded-boolean}.

We use the following representation and sampling rules, controlled by an
integer precision parameter~$s\geq1$ (to be chosen):
\begin{enumerate}[(i)]
\item The noise probability~$p$ and the real and imaginary parts of the
matrix elements of the physical two-qubit gates are given as integer multiples
of~$2^{-s}$; we refer to these approximate matrix elements as those of rounded physical gates.
\item Every sampling step of \routine{TreeSample} uses $s$~independent uniform coins.
\item Approximations to entries of coarse gates (i.e., rounded coarse gates) are computed by  taking (exact) products of at most~$d^2$
rounded physical gates. Each entry can therefore be written as
$2^{-c}(x+\mathrm{i}y)$ with $x,y\in\mathbb Z$ and $c=d^2s$.
Thus $2^c$ is a common denominator for the real and imaginary parts,
and $c$ is its exponent; this denominator need not be reduced.
\item We choose an integer word length
$\beta=O\bigl(ds(W_*+d)\bigr)$ bits, large enough to store every
intermediate integer computed in the algorithm exactly. In particular, all intermediate sums and
products fit in these words, so no overflow handling or truncation is
needed. Proposition~\ref{thm:full-bit-precision}
proves that this many bits suffice for the choice of~$s$ given in
Eq.~\eqref{eq:precision-choice} below.
\end{enumerate}
All arithmetic after rounding the inputs, in particular the products of
gates and the contractions of amplitude tables, is carried out exactly
on integer numerators. The parameters $c$ and~$\beta$ specify the
representation and introduce no additional rounding. The errors from
rounding the noise probability, rounding the gate entries, and using
finitely many coins add up, and total variation error~$\delta$ is achieved
by choosing
\begin{align}
  s=\left\lceil c_0\bigl(d+\log_2(n/\delta)\bigr)\right\rceil
   =O\bigl(d+\log(n/\delta)\bigr),
  \label{eq:precision-choice}
\end{align}
where $c_0$ is a sufficiently large absolute constant, as in
Proposition~\ref{thm:full-bit-precision}.
Constants are not optimized. Throughout, $n\geq d\geq2$ are even, as in
Theorem~\ref{thm:main}.

Appendix~\ref{app:noise-bits} generates the noise fixing from fair
coins, and Appendix~\ref{app:finite-precision-tree} analyzes
\routine{TreeSample} with rounded gates; this is the main part of the
analysis. Appendix~\ref{app:full-implementation} fixes $s$ and~$\beta$ and
bounds the total error, and Appendix~\ref{app:boolean-arithmetic}
converts the resulting algorithm into Boolean circuits.

We write $d_{\mathrm{TV}}(R,S)$ for the total variation distance of two distributions~$(R,S)$ 
and $\mathcal B(a)$ for the Born
distribution of a nonzero vector $a\in\mathbb C^q$,
$\mathcal B(a)_j=|a_j|^2/\|a\|_2^2$. We use two standard facts: the total variation distance~$d_{\mathrm{TV}}$ does not increase when the same stochastic map is
applied to both distributions, and
$d_{\mathrm{TV}}(\bigotimes_iR_i,\bigotimes_iS_i)\leq\sum_id_{\mathrm{TV}}(R_i,S_i)$
for product distributions. Here $\bigotimes_iR_i$ denotes the distribution
of independent variables with respective distributions~$R_i$; explicitly,
$(\bigotimes_iR_i)(z_1,\ldots,z_m)=\prod_{i=1}^mR_i(z_i)$.

\begin{lemma}
\label{lem:born-probability-stability}
For nonzero $a,b\in\mathbb C^q$, we have
\begin{align}
  d_{\mathrm{TV}}(\mathcal B(a),\mathcal B(b))
  \leq\frac{\|a-b\|_2}{\|a\|_2}.
  \label{eq:normalized-amplitude-comparison}
\end{align}
\end{lemma}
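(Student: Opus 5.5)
We prove the bound by comparing the unnormalized vectors $a$ and $b$ first, and only afterwards accounting for normalization. Write $\hat a=a/\|a\|_2$ and $\hat b=b/\|b\|_2$, so that $\mathcal B(a)_j=|\hat a_j|^2$ and $\mathcal B(b)_j=|\hat b_j|^2$. The total variation distance is
\begin{align}
  d_{\mathrm{TV}}(\mathcal B(a),\mathcal B(b))=\frac12\sum_j\bigl||\hat a_j|^2-|\hat b_j|^2\bigr| .
\end{align}
We factor $\bigl||\hat a_j|^2-|\hat b_j|^2\bigr|=\bigl||\hat a_j|-|\hat b_j|\bigr|\,(|\hat a_j|+|\hat b_j|)$ and apply Cauchy--Schwarz to the sum over~$j$:
\begin{align}
  \sum_j\bigl||\hat a_j|^2-|\hat b_j|^2\bigr|\leq\bigl\||\hat a|-|\hat b|\bigr\|_2\,\bigl\||\hat a|+|\hat b|\bigr\|_2 .
\end{align}
The first factor is at most $\|\hat a-\hat b\|_2$ by the reverse triangle inequality applied entrywise. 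The second factor is at most $\|\hat a\|_2+\|\hat b\|_2=2$. Hence $d_{\mathrm{TV}}(\mathcal B(a),\mathcal B(b))\leq\|\hat a-\hat b\|_2$.

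The remaining step, and the only delicate one, is to bound the distance between the normalized vectors by $\|a-b\|_2/\|a\|_2$ without losing a factor~$2$. The naive estimate
\begin{align}
  \|\hat a-\hat b\|_2\leq\frac{\|a-b\|_2}{\|a\|_2}+\Bigl|\frac{1}{\|a\|_2}-\frac{1}{\|b\|_2}\Bigr|\,\|b\|_2
\end{align}
only gives $2\|a-b\|_2/\|a\|_2$. To avoid this loss, note that $\mathcal B$ is invariant under positive rescaling, so we may compare $\hat a$ with $b/\|a\|_2$ instead of with $\hat b$. We therefore plan to redo the Cauchy--Schwarz argument with the vectors $\hat a$ and $\tilde b=b/\|a\|_2$, using the unnormalized difference $\bigl||\hat a_j|^2-|\tilde b_j|^2\bigr|$, and then correct for normalization through the inequality
\begin{align}
  d_{\mathrm{TV}}(P,Q)\leq\frac12\bigl\|P-\tilde Q\bigr\|_1+\frac12\bigl|1-\|\tilde Q\|_1\bigr| ,
\end{align}
valid for a positive multiple $\tilde Q$ of~$Q$. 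This should give a bound of the form $\tfrac12\epsilon(2+\epsilon)+\tfrac12\epsilon(2+\epsilon)$ with $\epsilon=\|a-b\|_2/\|a\|_2$, which is again too large.

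The cleanest route is therefore the geometric one, and it works directly. For $\hat b=b/\|b\|_2$, the unit vector $\hat b$ is the closest point of the unit sphere to~$\tilde b$. Combined with a cone argument, this gives $\|\hat a-\hat b\|_2\leq 2\sin(\theta/2)$, where $\theta$ is the angle between $a$ and~$b$ (defined via the real part of the inner product). On the other hand, $\|a-b\|_2\geq\|a\|_2\sin\theta$ when $\theta\leq\pi/2$, since the distance from $a$ to the ray through~$b$ is $\|a\|_2\sin\theta$.

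It remains to relate these two quantities. The inequality $2\sin(\theta/2)\leq\sin\theta$ fails, so we use the sharper Born-level bound instead. For unit vectors, $d_{\mathrm{TV}}(\mathcal B(\hat a),\mathcal B(\hat b))\leq\sqrt{1-|\langle\hat a,\hat b\rangle|^2}$, which holds because total variation distance is at most the trace distance of the corresponding pure states, and measurement does not increase distance. Moreover $\sqrt{1-|\langle\hat a,\hat b\rangle|^2}=\sin\phi$, where $\phi$ is the Fubini--Study angle, and $\|a\|_2\sin\phi$ is exactly the distance from $a$ to the complex line $\mathbb C b$, which is at most $\|a-b\|_2$.

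This gives the claim with constant~$1$. The main obstacle is avoiding the factor~$2$, and it is resolved by passing through the trace distance of pure states rather than through the entrywise estimate.
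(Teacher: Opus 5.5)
Your final argument is correct: $d_{\mathrm{TV}}(\mathcal B(a),\mathcal B(b))$ is the total variation distance of computational-basis measurements on the pure states $\hat a,\hat b$. It is therefore at most their trace distance $\sqrt{1-|\langle\hat a,\hat b\rangle|^2}$. Moreover, $\|a\|_2\sqrt{1-|\langle\hat a,\hat b\rangle|^2}$ is the distance from $a$ to the complex line $\mathbb C b$, which is at most $\|a-b\|_2$.

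The paper's proof has the same two-step skeleton: bound the total variation distance by the sine of an angle, then read that sine as a distance to a line and test against a convenient point on it. It differs from yours in three respects.
\begin{itemize}
\item It works classically with the modulus vectors $x=|a|/\|a\|_2$ and $y=|b|/\|b\|_2$.
\item It keeps your Cauchy--Schwarz factorization exact, $\tfrac12\|x-y\|_2\|x+y\|_2=\sqrt{1-\langle x,y\rangle^2}$, instead of bounding $\|x+y\|_2\leq 2$ as in your first step.
\item It bounds $\sqrt{1-\langle x,y\rangle^2}=\min_{s\in\mathbb R}\|x-sy\|_2$ by taking $s=\|b\|_2/\|a\|_2$, and finishes with $\||a|-|b|\|_2\leq\|a-b\|_2$. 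This choice of $s$ is exactly the rescaling by $1/\|a\|_2$ you tried, but applied at the level of angles rather than probability vectors.
\end{itemize}

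What each route buys: the paper's is elementary and self-contained. Its intermediate bound is also slightly sharper, since $\langle x,y\rangle\geq|\langle\hat a,\hat b\rangle|$, so entrywise phase differences cost nothing. Yours is shorter once you take the pure-state trace-distance formula and monotonicity under measurement as given. There the complex line absorbs the global phase without needing the entrywise modulus step.

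In a written proof you should drop the abandoned attempts: the naive estimate, the normalization bound with $\tilde Q$, and the $2\sin(\theta/2)$ comparison. In particular, remove the sentence claiming the angle route ``works directly'' just before you observe that it fails.
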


\begin{proof}
Let
\[
x=\frac{|a|}{\|a\|_2},
\qquad
y=\frac{|b|}{\|b\|_2}.
\]
Then $\|x\|_2=\|y\|_2=1$, and
\begin{align}
d_{\mathrm{TV}}(\mathcal B(a),\mathcal B(b))
&=\frac12\sum_j |x_j^2-y_j^2|\\
&\leq \frac12\|x-y\|_2\|x+y\|_2\\
&=\sqrt{1-\langle x,y\rangle^2}.
\end{align}
Since $y$ is a unit vector,
\[
\sqrt{1-\langle x,y\rangle^2}
=\min_{s\in\mathbb R}\|x-sy\|_2,
\]
because the closest point on the line spanned by $y$ is the orthogonal
projection $\langle x,y\rangle y$. In particular, choosing
$s=\|b\|_2/\|a\|_2$ gives
\[
\sqrt{1-\langle x,y\rangle^2}
\leq
\left\|x-\frac{\|b\|_2}{\|a\|_2}y\right\|_2
=
\frac{\||a|-|b|\|_2}{\|a\|_2}
\leq
\frac{\|a-b\|_2}{\|a\|_2},
\]
where the last inequality follows from
$\bigl||a_j|-|b_j|\bigr|\leq |a_j-b_j|$.
\end{proof}

\subsection{Noise fixing with fair coins}
\label{app:noise-bits}

\routine{DrawNoise} (Algorithm~\ref{alg:draw-noise}) draws the noise
fixing~$F$ of Eq.~\eqref{eq:noisefixing} from Bernoulli variables with
parameter~$p^2$ and Pauli variables with $p$-dependent weights. Let
$\mu_p$ denote the distribution of~$F$; see the description after Eq.~\eqref{eq:noisefixing} for its definition. The fair-coin construction uses
the representation of~$\cN_p$ as a mixture of Pauli channels to generate
classical noise variables at each of the $nd$ noise locations
$(t,a)$, $t\in[d]$, $a\in[n]$ of the brickwork circuit.
By Eq.~\eqref{eq:F-pauli-twirl}, this mixture chooses a uniformly random
Pauli with probability~$p$ and the identity otherwise. The lemma below
samples these choices using fair coins with the rounded probability~$p_s$. These samples can then be used to produce a sample~$F\sim \mu_{p_s}$, approximating an ideal sample from the distribution~$\mu_p$. 

\begin{lemma}
\label{lem:finite-bit-noise-fixing}
Let $s\geq1$  and let $\kappa\in \{0,\ldots,2^s\}$ be such that  $p_s=\kappa2^{-s}$ satisfies $|p_s-p|\leq2^{-s}$. Using $nd(s+2)$
independent fair coins one can draw a noise fixing~$F\sim \mu_{p_s}$, together with its cut indicators
$\{E_{1/2+k}\}_{k=1}^{n-1}$, by a Boolean circuit of size $O(nds)$ and
depth $O(\log s+\log d)$ with fan-in two, or of depth~$O(1)$ with
unbounded fan-in gates. Moreover, the distribution~$\mu_{p_s}$ satisfies
\begin{align}
  d_{\mathrm{TV}}(\mu_{p_s},\mu_p)\leq nd\,2^{-s}.
  \label{eq:noise-fixing-bit-error}
\end{align}
\end{lemma}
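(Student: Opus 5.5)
At each of the $nd$ noise locations $(t,a)$ I will simulate $\cN_{p_s}$ directly, using the twirl representation of Eq.~\eqref{eq:F-pauli-twirl}. With probability $p_s$ the location applies a uniformly random Pauli; otherwise it applies the identity. For each location I draw $s$ fair coins, read them as an integer $u\in\{0,\ldots,2^s-1\}$, and set the depolarization flag $D_{t,a}=[u<\kappa]$. This flag is exactly $\mathsf{Ber}(p_s)$. Two further fair coins give a uniform Pauli $P_{t,a}\in\Pauli$, and the location's Pauli is $P_{t,a}$ if $D_{t,a}=1$ and $I$ otherwise. This uses $nd(s+2)$ coins in total.

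The noise fixing is then defined deterministically from these variables, following the proof of Lemma~\ref{lem:circuitcutting}:
\begin{itemize}
\item At a rectangle $(t,j)$ acting on qubits $a,a'$, set $E^{(t)}_j=D_{t,a}\land D_{t,a'}$.
\item Set $(Q^{(t)}_j,R^{(t)}_j)$ to the pair of location Paulis.
\item At boundary locations, set $S^{(t)}_j$ to the location Pauli.
\end{itemize}

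Next I check that this produces $\mu_{p_s}$. Since the flags are independent $\mathsf{Ber}(p_s)$, we get $E^{(t)}_j\sim\mathsf{Ber}(p_s^2)$, and distinct rectangles use disjoint locations, so they are independent. Conditioned on $E=1$, both Paulis are uniform and independent, giving the uniform distribution on $\Pauli^2$. Conditioned on $E=0$, the pair $(D_{t,a},D_{t,a'})$ is $(0,0)$, $(0,1)$ or $(1,0)$ with probabilities proportional to $(1-p_s)^2$, $p_s(1-p_s)$ and $p_s(1-p_s)$. The resulting Pauli pair is $I\otimes I$ with weight $(1-p_s)^2+2p_s(1-p_s)/4$, and each $(I,P)$ or $(P,I)$ with $P\neq I$ has weight $p_s(1-p_s)/4$. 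After normalizing by $1-p_s^2$, this matches $q_{(Q,R)}$ of Eq.~\eqref{eq:residual-pauli-probabilities} with $p$ replaced by $p_s$. This is exactly the bookkeeping already done in the proof of Lemma~\ref{lem:circuitcutting}. Boundary locations have distribution $(1-3p_s/4,p_s/4,p_s/4,p_s/4)$. Hence $F\sim\mu_{p_s}$. The cut indicator $E_{1/2+k}$ is the AND of the $d/2$ rectangle bits at bond $k$.

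For the circuit bounds, the comparison $u<\kappa$ against a constant $s$-bit number is a Boolean function of $s$ bits. It has size $O(s)$ and depth $O(\log s)$ with fan-in two: a tree computes the lexicographic comparison through prefix equal/less bits. With unbounded fan-in it has depth $O(1)$, written as an OR over positions $i$ of [bits above $i$ agree with $\kappa$, and bit $i$ is below $\kappa$'s bit]. The Pauli selections and the ANDs for $E^{(t)}_j$ take constant depth. The ANDs of $d/2$ bits for the cut indicators take depth $O(\log d)$ with fan-in two, or one unbounded gate. The total size is $O(nds)$.

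For the error bound, I couple the two processes. Use the same coins for both, but for $\mu_p$ use ideal uniform $U\in[0,1)$ refining $u$, with flag $[U<p]$. The flags $[U<p]$ and $[U<p_s]$ disagree with probability $|p-p_s|\leq2^{-s}$. Both fixings are the same deterministic function of (flags, Paulis), so the union bound over the $nd$ locations gives $d_{\mathrm{TV}}(\mu_{p_s},\mu_p)\leq nd\,2^{-s}$. Equivalently, the fixing is a stochastic map applied to product distributions, and TV subadditivity of products together with data processing gives the same bound. The only delicate step is the conditional-distribution check in the second paragraph. It must produce exactly the $q$ weights, including the fact that depolarized-then-identity Paulis land in the $(I,I)$ mass, and this follows the verification already given in Lemma~\ref{lem:circuitcutting}.
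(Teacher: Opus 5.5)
Your proposal is correct and follows the paper's proof essentially step for step. It uses the same per-location construction ($s$ coins compared against $\kappa$ plus two coins for a uniform Pauli), the same definitions of $E^{(t)}_j$, $(Q^{(t)}_j,R^{(t)}_j)$ and $S^{(t)}_j$, the same verification of the $q_{(Q,R)}$ weights, and the same prefix-comparison and AND-tree circuit bounds. Your explicit threshold coupling for the total variation bound is a concrete form of the paper's argument, which bounds the distance by that of the $nd$ Bernoulli variables using data processing and subadditivity over product distributions.
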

\begin{proof}
At each location $(t,a)$, draw a uniform $s$-bit integer~$T^{(t)}_a$ (using $s$ coins) and
a uniform Pauli $P^{(t)}_a\in\Pauli$ (using $2$~coins). Let
$B^{(t)}_a=\mathbf 1\{T^{(t)}_a<\kappa\}$, where $\mathbf 1\{\cdot\}$
is the indicator function: it equals $1$ when its condition holds and
$0$ otherwise. Set
\begin{align}
  \Pi^{(t)}_a=
  \begin{cases}
    P^{(t)}_a, &\text{if } B^{(t)}_a=1,\\
    I, &\text{if } B^{(t)}_a=0.
  \end{cases}
\end{align}
Then $B^{(t)}_a\sim\mathsf{Ber}(p_s)$.
For a rectangle $\cR_p(t,j)$ defined in Eq.~\eqref{eq:rectanglepchannel},
let $a=2j-1$ for odd~$t$ and $a=2j$ for even~$t$, so that the rectangle
acts on qubits $a,a+1$. Set
\begin{align*}
  E^{(t)}_j=B^{(t)}_aB^{(t)}_{a+1}
  \qquad\textrm{and}\qquad
  (Q^{(t)}_j,R^{(t)}_j)=(\Pi^{(t)}_a,\Pi^{(t)}_{a+1}) .
\end{align*}
At each boundary noise location~$(t,j)$ with even~$t\in[d]$ and
$j\in\{1,n\}$, set $S^{(t)}_j=\Pi^{(t)}_j$.

The resulting noise fixing is the collection
\begin{align*}
  F=\Bigl(
    \bigl\{(E^{(t)}_j,Q^{(t)}_j,R^{(t)}_j)\bigr\}_{(t,j)\in\gateLocations{\sU}},\
    \bigl\{S^{(t)}_j\bigr\}_{t\in[d]\text{ even},\,j\in\{1,n\}}
  \Bigr)
\end{align*}
of Eq.~\eqref{eq:noisefixing}, and it has distribution~$\mu_{p_s}$.
Indeed, distinct
rectangles and boundary locations use distinct coins and are therefore
independent, as in \routine{DrawNoise}. At a rectangle, $E^{(t)}_j\sim\mathsf{Ber}(p_s^2)$, and conditioning
on~$E^{(t)}_j$ reproduces the two branches $\cF\otimes\cF$ and~$\cG$ of
Eq.~\eqref{eq:rectangle-two-branches} in the proof of
Lemma~\ref{lem:circuitcutting}: given $E^{(t)}_j=1$, the pair
$(Q^{(t)}_j,R^{(t)}_j)$ is uniform on~$\Pauli^2$, and given
$E^{(t)}_j=0$ it has the weights $q_{(Q,R)}$ of
Eq.~\eqref{eq:residual-pauli-probabilities} at parameter~$p_s$. At a
boundary location, $S^{(t)}_j$ has the weights
$(1-3p_s/4,p_s/4,p_s/4,p_s/4)$. This is the distribution of the variables drawn
by \routine{DrawNoise} at parameter~$p_s$.

For Eq.~\eqref{eq:noise-fixing-bit-error}, note that the same
construction with $\mathsf{Ber}(p)$ in place of $\mathsf{Ber}(p_s)$
produces~$\mu_p$. Both fixings are the same deterministic function of
the $nd$ Bernoulli variables and the $nd$ uniform Paulis. Hence
$d_{\mathrm{TV}}(\mu_{p_s},\mu_p)
\leq d_{\mathrm{TV}}(\mathsf{Ber}(p_s)^{\otimes nd},\mathsf{Ber}(p)^{\otimes nd})
\leq nd\,|p_s-p|$.

Each comparison $T^{(t)}_a<\kappa$ is a prefix computation of size
$O(s)$ and depth $O(\log s)$ with fan-in two, or  of depth~$O(1)$ with
unbounded fan-in gates (test in parallel, for every bit position, whether it
is the most significant position where $T^{(t)}_a$ and~$\kappa$
differ). The cut indicator $E_{1/2+k}$ is the AND of the $d$ variables
$B^{(t)}_a$ with $a\in\{k,k+1\}$ at the $d/2$ layers in which a gate
crosses the bond~$k$; it has depth $O(\log d)$ or $O(1)$, respectively.
\end{proof}

\subsection{Local sampling at finite precision}
\label{app:finite-precision-tree}

We analyze \routine{TreeSample} (Algorithm~\ref{alg:joining-tree}) when
its gates are given only approximately because of rounding and its sampling steps use
finitely many coins. The argument adapts the robustness analysis of
Bravyi, Gosset, and Liu~\cite[Lemma~1]{BravyiGossetLiu22} to the joining
tree introduced in Section~\ref{sec:coarsening-subsec}.

Consider a bilayer circuit on $2M$ qudits of dimension~$K$ with gates
$U_1,\ldots,U_M$ and $V_1,\ldots,V_{M-1}$ as in
Eq.~\eqref{eq:basic-bilayer-circuit}. Each node of the joining tree is
associated with a valid interval of qudit indices. There are $M$ leaves:
leaf~$i$, for $i\in\{1,\ldots,M\}$, is associated with the
two-qudit interval~$\{2i-1,2i\}$. Each internal node is associated with
the union~$I\cup J$ of the neighboring intervals of its two children
and with the interface gate~$V_b$ acting between them, as in
Section~\ref{sec:joining-intervals}. There are exactly $M-1$ internal
nodes, since each join reduces the number of active intervals by one;
carrying an unpaired interval to the next level creates no new node.
The root is associated with the full interval~$[2M]=\{1,\ldots,2M\}$.
Fig.~\ref{fig:joining-tree} illustrates these interval labels and interface
gates for $M=4$.

Suppose that every coarse gate
$G\in\{U_1,\ldots,U_M,V_1,\ldots,V_{M-1}\}$, expressed as a matrix in
the computational basis, is replaced by a rounded matrix~$\widetilde G$ with
\begin{align}
  \|\widetilde G-G\|_{2\to2}\leq\nu,
  \label{eq:join-local-roundoff}
\end{align}
where $\|\cdot\|_{2\to2}$ is the operator norm induced by the Euclidean
norm, and that all arithmetic is exact. Define the rounded interval
vectors, which need not be normalized, recursively by
\begin{align}
  \widehat\Psi_{\{2i-1,2i\}}=\widetilde U_i\ket{00}
  \qquad\textrm{and}\qquad
  \widehat\Psi_{I\cup J}=\widetilde V_b\bigl(\widehat\Psi_I\otimes\widehat\Psi_J\bigr).
  \label{eq:join-amplitude-invariant}
\end{align}
The first equality applies to every leaf~$i\in\{1,\ldots,M\}$, and
the second to every pair of neighboring child intervals~$I,J$ at an
internal node of the joining tree, with~$V_b$ their interface gate.
The restricted interval states $\Psi_I:=\ket{\Psi(I)}$ of
Section~\ref{sec:bilayer-subsec} satisfy the same recursion with the
exact gates, by Eq.~\eqref{eq:union-state-identity}; $\Psi_I$ is a unit
vector, and $P_{Z^I}=\mathcal B(\Psi_I)$ is the output distribution
on~$I$, which at the root is the target distribution~$P_{Z([2M])}$. By
Eq.~\eqref{eq:union-table}, when \routine{TreeSample} uses the rounded
gates and exact arithmetic, its table at node~$I$ contains the amplitudes
of~$\widehat\Psi_I$ for all boundary outcomes with the sampled interior
string~$z^I_\circ$ held fixed. Denoting this table by~$\widehat T_I$, its
entries are
\begin{align*}
  \widehat T_I(z^I_-,z^I_+\given z^I_\circ)
  =\langle z^I_-,z^I_\circ,z^I_+|\widehat\Psi_I\rangle,
  \qquad z^I_-,z^I_+\in[K],
\end{align*}
with the endpoint conventions of Section~\ref{sec:bilayer-subsec}.
The full vector~$\widehat\Psi_I$ is used only for the analysis; the
algorithm retains only the table for the sampled interior string. Rounded gates
need not be unitary, so \routine{TreeSample} need not sample the Born
distribution of the rounded vector; we therefore track the amplitude
error $\|\widehat\Psi_I-\Psi_I\|_2$ and the sampling error separately.
In the following lemma, $c$ specifies the denominators of the gate
entries, while $s$ is the number of fair coins used per draw.

\begin{lemma}[Finite-precision \routine{TreeSample}]
\label{cor:tree-bit-precision}
Let $c,s\geq1$ be integers. Suppose that the gates $\widetilde U_i$ and
$\widetilde V_b$ satisfy Eq.~\eqref{eq:join-local-roundoff} with
$2M\nu\leq1$, and that their entries are of the form
$2^{-c}(x+\mathrm iy)$ with $x,y\in\mathbb Z$. Then \routine{TreeSample} on $2M$ qudits of dimension~$K$
can be implemented with exact integer arithmetic on words of
$O(cM+s+\log K)$ bits, without divisions, using $s$~fresh fair coins at
every node, such that its output distribution $\widehat P_{Z([2M])}$
satisfies
\begin{align}
  d_{\mathrm{TV}}\bigl(\widehat P_{Z([2M])},P_{Z([2M])}\bigr)
  \leq8M^2\nu+MK^2\,2^{-s} .
  \label{eq:finite-precision-tree-error}
\end{align}
\end{lemma}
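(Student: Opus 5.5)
The proof splits the error into an amplitude part and a sampling part. We introduce an intermediate process: \routine{TreeSample} run with the rounded gates, but with every draw taken \emph{exactly} from the Born distribution of the rounded amplitudes, that is, from $\mathcal B(\widehat a_\zeta)$ with $\widehat a_\zeta$ formed from the tables $\widehat T_I$. The triangle inequality then gives two terms. The first compares the exact process with this intermediate process and should give $8M^2\nu$. The second compares the intermediate process with the fair-coin process and should give $MK^2 2^{-s}$.

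For the amplitude part we first bound $\|\widehat\Psi_I-\Psi_I\|_2$ by induction over the joining tree. At a leaf, $\|\widetilde U_i\ket{00}-U_i\ket{00}\|_2\le\nu$. At an internal node, $\|\widetilde V_b\|\le1+\nu$ and
\[
\widehat\Psi_I\otimes\widehat\Psi_J-\Psi_I\otimes\Psi_J=(\widehat\Psi_I-\Psi_I)\otimes\widehat\Psi_J+\Psi_I\otimes(\widehat\Psi_J-\Psi_J).
\]
Let $g_I$ be the number of gates in the subtree of $I$ (at most $|I|$). An induction then gives $\|\widehat\Psi_I-\Psi_I\|_2\le e\,g_I\nu\le 2g_I\nu$ and $\|\widehat\Psi_I\|_2\le(1+\nu)^{g_I}\le e^{1/2}$; the hypothesis $2M\nu\le1$ is what keeps these factors bounded. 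Next we compare the two processes node by node, coupling their draws in top-down order of the tree. The intermediate process is exactly a ``sequential Born sampling'' of the unnormalized vector $\widehat\Psi_{[2M]}$, but its draws at sibling subtrees are made from locally normalized conditionals. The cleanest bound follows the hybrid argument of \cite[Lemma~1]{BravyiGossetLiu22}. We replace the exact gates by rounded ones one node at a time. At a node $I\cup J$, given the retained values $\zeta$, the exact and rounded conditionals are $\mathcal B(a_\zeta)$ and $\mathcal B(\widehat a_\zeta)$, and Lemma~\ref{lem:born-probability-stability} bounds their distance by $\|a_\zeta-\widehat a_\zeta\|_2/\|a_\zeta\|_2$. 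Averaging over $\zeta$ with weights $\|a_\zeta\|_2^2$, which is the exact probability of $\zeta$ by Eq.~\eqref{eq:join-conditional-born}, and using Cauchy--Schwarz gives at most $\bigl(\sum_\zeta\|a_\zeta-\widehat a_\zeta\|_2^2\bigr)^{1/2}$. This is a norm difference of interval vectors at that node, so it is $O(|I\cup J|\nu)$. Summing over the $M-1$ internal nodes and $M$ leaves with $|I|\le 2M$ gives $O(M^2\nu)$, and tracking constants should yield $8M^2\nu$. This averaging step is the main obstacle. The intermediate process mixes rounded and exact data (the leaves' samples are drawn from rounded amplitudes but are weighted by exact probabilities), so the hybrid must be ordered so that at each step only one node's conditional changes while all other conditionals are exact. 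I would first try ordering the hybrids bottom-up in the tree.

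For the sampling part, each draw samples from the weights $q_{u,v}=|\widehat a_\zeta(u,v)|^2$; these are integers times $2^{-2c'}$ with exact arithmetic and $K^2$ outcomes (or $K^2$ at a leaf). Implement the draw by computing the cumulative integer sums $S_k$ and the total $S$, drawing a uniform $s$-bit integer $R$, and selecting the first $k$ with $R\cdot S<2^sS_k$, which requires only multiplication and comparison and no division. Each outcome probability is then off by at most $2^{-s}$, so the draw is within $K^2 2^{-s}$ in total variation. Applying the coupling over the $M$ leaves and $M-1$ joins gives at most $2MK^2 2^{-s}$. To reach the constant $MK^2 2^{-s}$ stated in the lemma, I would count $K^2 2^{-s}/2$ per draw, since the total variation distance is half of the $\ell_1$ error.

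Word length: table entries at node $I$ are products of at most $O(M)$ gate entries with common denominator $2^{cO(M)}$. Their numerators have magnitude $2^{O(cM)}\mathrm{poly}(K)$, because the norms stay bounded as shown above. Squared moduli, prefix sums over $K^2$ terms, and multiplication by $s$-bit integers then fit in $O(cM+s+\log K)$ bits.
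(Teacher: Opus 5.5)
Your ingredients are the paper's: the telescoping bound on $\epsilon_I=\|\widehat\Psi_I-\Psi_I\|_2$, Lemma~\ref{lem:born-probability-stability} applied to each slice $(a_\zeta,\widehat a_\zeta)$, averaging with weights $\|a_\zeta\|_2^2$ and Cauchy--Schwarz to get $\epsilon_{I\cup J}$ per node, integer-threshold sampling with error $K^22^{-s-1}$ per draw, and a sum over the $2M-1$ nodes. Splitting off the coin error through an intermediate process is harmless. The per-draw bound holds uniformly in $\zeta$, so those switches add up in any order. You do need a fixed default output when $\widehat a_\zeta=0$, as the paper provides.

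The gap is the averaging step you flag yourself, and your tentative resolution points the wrong way. The ratio $\|a_\zeta-\widehat a_\zeta\|_2/\|a_\zeta\|_2$ is not uniformly small; it can be of order $1$ for rare $\zeta$. The Cauchy--Schwarz step is therefore available only if, at the node being switched, $\zeta$ has exactly the law $\|a_\zeta\|_2^2$. That requires the entire subtree below that node to be still exact. If you replace exact steps by rounded ones bottom-up (leaves first), then when an internal node is switched its children are already rounded. The naive correction recharges the subtree's accumulated error at every ancestor, costing an extra factor of the tree depth instead of $8M^2\nu$.

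The fix is to order the exact-to-rounded hybrids top-down (root first), or equivalently bottom-up in the rounded-to-exact direction. When a node is switched, its descendants are then exact. Its ancestors are rounded in both hybrids and act as the same stochastic map on sample strings, since the rounded tables are deterministic functions of the sampled interior strings. Data processing then lets you discard them.

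The paper packages exactly this as the recursion $\alpha_{I\cup J}\leq\alpha_I+\alpha_J+\epsilon_{I\cup J}+\gamma$. At each join it first replaces the children's implemented distributions by the exact $P_{Z^I}\otimes P_{Z^J}$, at cost $\alpha_I+\alpha_J$ by data processing and subadditivity over products. Only then does it compare the implemented and exact join steps on exact input, where the weights are $\|a_\zeta\|_2^2$. Unrolled, this gives $\sum_I(\epsilon_I+\gamma)$ over the $2M-1$ nodes.

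Two minor slips:
\begin{itemize}
\item $e\,g_I\nu\leq2g_I\nu$ is false, since $e>2$.
\item $(1+\nu)^{g_I}\leq e$, not $e^{1/2}$, since $g_I$ can be $2M-1$.
\end{itemize}
Your induction actually gives $\epsilon_I\leq(1+\nu)^{g_I}-1\leq(e-1)g_I\nu$. With $g_I\leq2M-1$ this still sums to at most $8M^2\nu$.
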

\begin{proof}
Since no rounding takes place after the input, only the growth of the
denominators and numerators has to be controlled. By induction along
Eq.~\eqref{eq:join-amplitude-invariant}, every entry
of $\widehat\Psi_I$ is of the form $2^{-c(|I|-1)}(x+\mathrm iy)$ with
$x,y\in\mathbb Z$: a leaf involves one gate, and at a join each term of
Eq.~\eqref{eq:union-table} is a product of two child amplitudes and one
gate entry, so the exponents add up to $c(|I|-1)+c(|J|-1)+c=c(|I\cup J|-1)$.
The tables of \routine{TreeSample} are therefore computed exactly by
integer additions and multiplications of numerators. Since
$\|\widetilde G\|_{2\to2}\leq1+\nu$ for every gate,
$\|\widehat\Psi_I\|_2\leq(1+\nu)^{|I|-1}\leq e^{2M\nu}\leq e$, so every
numerator has $O(cM)$ bits. Squaring numerators, summing $K^2$ such
terms, and multiplying by an $s$-bit integer, as done below, therefore
requires $O(cM+\log K+s)$ bits.

Each sampling step draws from $[K]^2$ with weights proportional to the
squared moduli of $K^2$ computed amplitudes, which we collect in a
vector $b\in\mathbb C^{K^2}$: the entries of $\widetilde U_i\ket{00}$
at a leaf, and the table entries for the retained values at a join. Index the candidates by $j\in[K^2]$, let
$\omega_j=x_j^2+y_j^2$ be the integer weights, let
$S_j=\omega_1+\cdots+\omega_j$ be their partial sums, $S_0=0$, and let
$S=S_{K^2}$. If $S=0$, output a fixed value. Otherwise draw a uniform
$R\in\{0,\ldots,2^s-1\}$ from $s$ coins and output the unique~$j$ with
\begin{align}
  2^sS_{j-1}\leq RS<2^sS_j .
  \label{eq:integer-sampling-thresholds}
\end{align}
The admissible values of~$R$ form a half-open interval of length
$2^s\omega_j/S$, which contains between $\lfloor2^s\omega_j/S\rfloor$
and $\lceil2^s\omega_j/S\rceil$ integers. Hence the probability of~$j$
differs from $\omega_j/S$ by at most~$2^{-s}$, and the step draws within
total variation distance $\gamma:=K^22^{-s-1}$ of $\mathcal B(b)$
whenever $b\neq0$.

For every node~$I$ let $\widehat P_{Z^I}$ be the distribution of the
sample returned at~$I$, and let
$\epsilon_I:=\|\widehat\Psi_I-\Psi_I\|_2$ and
$\alpha_I:=d_{\mathrm{TV}}(\widehat P_{Z^I},P_{Z^I})$. We claim that
\begin{align}
\begin{split}
  \alpha_I&\leq\epsilon_I+\gamma\qquad\textrm{at a leaf,}\\
  \alpha_{I\cup J}&\leq\alpha_I+\alpha_J+\epsilon_{I\cup J}+\gamma
  \qquad\textrm{at a join.}
\end{split}
  \label{eq:join-sampling-error-recurrence}
\end{align}
At a leaf with $\widehat\Psi_I\neq0$, the sample is drawn within
distance~$\gamma$ of $\mathcal B(\widehat\Psi_I)$, and
Lemma~\ref{lem:born-probability-stability} with $\|\Psi_I\|_2=1$ gives
$d_{\mathrm{TV}}(\mathcal B(\widehat\Psi_I),\mathcal B(\Psi_I))\leq\epsilon_I$.
If $\widehat\Psi_I=0$, then $\epsilon_I=1\geq\alpha_I$.

Consider a join of $I$ and~$J$ with interface gate~$V$. Write an outcome
on~$I\cup J$ as $(\zeta,(u,v))$, where $(u,v)\in[K]^2$ are the interface
values and $\zeta=(z^I_-,z^I_\circ,z^J_\circ,z^J_+)$ collects the
retained values, and let $a_\zeta,b_\zeta\in\mathbb C^{K^2}$ be the
slices of $\Psi_{I\cup J}$ and~$\widehat\Psi_{I\cup J}$ with retained values~$\zeta$,
\begin{align*}
  a_\zeta(u,v)=\langle z^I_-,z^I_\circ,u,v,z^J_\circ,z^J_+|\Psi_{I\cup J}\rangle
  \qquad\textrm{and}\qquad
  b_\zeta(u,v)=\langle z^I_-,z^I_\circ,u,v,z^J_\circ,z^J_+|\widehat\Psi_{I\cup J}\rangle .
\end{align*}
The slices partition the coordinates of the full vectors, so
\begin{align}
  \sum_\zeta\|a_\zeta\|_2^2=1
  \qquad\textrm{and}\qquad
  \sum_\zeta\|b_\zeta-a_\zeta\|_2^2=\epsilon_{I\cup J}^2 .
  \label{eq:join-squared-residuals}
\end{align}
By Eq.~\eqref{eq:union-table}, $b_\zeta$ is the vector of computed table
entries for the retained values~$\zeta$, so the implemented sampling
step draws $(u,v)$ within distance~$\gamma$ of $\mathcal B(b_\zeta)$ if
$b_\zeta\neq0$, and returns the fixed value if $b_\zeta=0$. The exact
step draws $(u,v)$ from $\mathcal B(a_\zeta)$, Eq.~\eqref{eq:join-conditional-born}.
Both steps retain~$\zeta$ and use fresh coins.

Replacing the child distributions $\widehat P_{Z^I}\otimes\widehat P_{Z^J}$
at the input of the implemented step by $P_{Z^I}\otimes P_{Z^J}$ changes
its output by at most $\alpha_I+\alpha_J$, by the two facts stated at
the beginning of this appendix. It remains to compare the implemented
step with the exact step on the same exact input.

With exact child distributions, the components of $\Psi_I\otimes\Psi_J$
with retained values~$\zeta$ are mapped to~$a_\zeta$ by the unitary~$V$,
so $\zeta$ has probability $\|a_\zeta\|_2^2$, and the exact step
produces~$P_{Z^{I\cup J}}$ (Lemma~\ref{lem:join}). Let $R_{Z^{I\cup J}}$ be the output
distribution of the implemented step on these exact child distributions,
and let $R_\zeta$ be its conditional distribution of $(u,v)$ given~$\zeta$.
Since both steps retain~$\zeta$, they have the same marginal
$\|a_\zeta\|_2^2$. Thus
\begin{align*}
  d_{\mathrm{TV}}(R_{Z^{I\cup J}},P_{Z^{I\cup J}})
  &=\frac12\sum_{\zeta:a_\zeta\neq0}\|a_\zeta\|_2^2
    \sum_{u,v\in[K]}\bigl|R_\zeta(u,v)-\mathcal B(a_\zeta)_{(u,v)}\bigr|\\
  &=\sum_{\zeta:a_\zeta\neq0}\|a_\zeta\|_2^2
    d_{\mathrm{TV}}(R_\zeta,\mathcal B(a_\zeta)) .
\end{align*}
Terms with $a_\zeta=0$ have zero probability and are omitted. For
$a_\zeta\neq0$, the triangle inequality and
Lemma~\ref{lem:born-probability-stability} give
\begin{align*}
  d_{\mathrm{TV}}(R_\zeta,\mathcal B(a_\zeta))
  \leq\gamma+\frac{\|b_\zeta-a_\zeta\|_2}{\|a_\zeta\|_2} .
\end{align*}
This also holds when $b_\zeta=0$, since the fraction then equals~$1$
and total variation distance is at most~$1$. Multiplying by
$\|a_\zeta\|_2^2$ leaves a factor~$\|a_\zeta\|_2$, so rare retained outcomes,
whose conditional distributions are sensitive to small amplitude errors,
cause no difficulty. By the Cauchy--Schwarz inequality and
Eq.~\eqref{eq:join-squared-residuals},
\begin{align*}
  \alpha_{I\cup J}
  &\leq\alpha_I+\alpha_J+d_{\mathrm{TV}}(R_{Z^{I\cup J}},P_{Z^{I\cup J}})\\
  &\leq\alpha_I+\alpha_J+\gamma+\sum_\zeta\|a_\zeta\|_2\,\|b_\zeta-a_\zeta\|_2
  \leq\alpha_I+\alpha_J+\gamma+\epsilon_{I\cup J} .
\end{align*}

$\Psi_I$ and $\widehat\Psi_I$ are products of the same $|I|-1$ gates
applied to $\ket{0^{|I|}}$, exact and rounded respectively. Insert
intermediate products by replacing the exact gates by their rounded
versions one at a time, starting with the leftmost factor. Consecutive
products differ in exactly one gate. Their differences telescope to
the difference between the fully rounded and exact products. The term
with $|I|-1-i$ rounded gates to the left of the differing gate and
$i-1$ exact gates to its right has operator norm at most
$(1+\nu)^{|I|-1-i}\nu$, by submultiplicativity of the operator norm,
since exact gates have norm~$1$, rounded gates have norm at most
$1+\nu$, and their differences have norm at most~$\nu$. Applying the
triangle inequality to the sum and using $\|\ket{0^{|I|}}\|_2=1$ gives
\begin{align}
  \epsilon_I\leq\sum_{i=1}^{|I|-1}(1+\nu)^{|I|-1-i}\nu
  =(1+\nu)^{|I|-1}-1\leq e^{2M\nu}-1\leq4M\nu ,
  \label{eq:join-amplitude-error-recurrence}
\end{align}
where the last step uses $e^x-1\leq2x$ for $0\leq x\leq1$. Applying
Eq.~\eqref{eq:join-sampling-error-recurrence} at the root and then
recursively at every join, and summing over the $2M-1$ nodes (an
unpaired interval carried to the next level of \routine{TreeSample} is
not a node and contributes nothing), gives
\begin{align*}
  \alpha_{[2M]}
  \leq\sum_I(\epsilon_I+\gamma)
  \leq(2M-1)(4M\nu+\gamma)
  \leq8M^2\nu+MK^2\,2^{-s},
\end{align*}
which is Eq.~\eqref{eq:finite-precision-tree-error}.
\end{proof}

\subsection{Choice of the precision}
\label{app:full-implementation}
\label{app:containment-bits}

Let $W_*\geq1$ and $\ell=2W_*+4$ be as in \Sample{}, and let
$\mathcal L$ be the family of intervals of the two partitions
$\{A_r\}_r$ and $\{B_s\}_s$ of Section~\ref{sec:two-forests}. Each
partition covers~$[n]$, and the shifted partition has at most one
interval more than the unshifted one, so that
\begin{align}
  \sum_{C\in\mathcal L}|C|=2n
  \qquad\textrm{and}\qquad
  |\mathcal L|\leq2\lceil n/\ell\rceil+1=O(n) .
  \label{eq:interval-family}
\end{align}
Each $C\in\mathcal L$ is handled by one call to \routine{LocalSample},
which runs \routine{TreeSample} on the bilayer reduction of the backward
light cone of~$C$ in~$\sU$ (Section~\ref{sec:arithmetic-realization}).
The light cone has width at most $|C|+2d\leq\ell+2d$ and is padded to a
multiple of~$2d$, so the bilayer chain has qudit dimension $K=2^d$ and
$M_C\leq|C|/(2d)+2$ leaves. Hence
\begin{align}
  M_C=O\Bigl(\frac{W_*}{d}+1\Bigr)
  \qquad\textrm{and}\qquad
  \sum_{C\in\mathcal L}M_C\leq\frac nd+2|\mathcal L|=O(n) .
  \label{eq:tree-size-bounds}
\end{align}
Each coarse gate is a product of at most~$d^2$ two-qubit gates
of~$\tsU(f)$, i.e., of gates of~$\sU$ and Paulis of the noise fixing.

The combining stage of \Sample{}, namely \routine{BoundaryPass}
(Algorithm~\ref{alg:boundary-selection}) with its calls to
\routine{IntervalContainment}, acts on bits only, the cut indicators of
the noise fixing and the interval samples, and
Corollary~\ref{cor:combining-boolean-realization} gives exact Boolean
circuits for it. Finite precision affects this stage only through the
interval samples. Conditional on the noise fixing these are
independent, so their errors add up, and the deterministic combining
step cannot increase the total variation distance. This is used in the
following proposition.

\begin{proposition}
\label{thm:full-bit-precision}
There is an absolute constant $c_0$ such that the following holds. Let
$\delta\in(0,1)$, let $W_*$ be the value of Eq.~\eqref{eq:Wstar-choice}
with $\delta/2$ in place of~$\delta$, and let
$s\geq c_0\,(d+\log_2(n/\delta))$ be an integer. Suppose that the noise
probability is given as $p_s=\kappa2^{-s}$ with $|p_s-p|\leq2^{-s}$, and
that every two-qubit gate~$U$ of~$\sU$ is given as a matrix~$\widetilde U$
whose entries have real and imaginary parts in $2^{-s}\mathbb Z$ and
differ from the entries of~$U$ by at most~$2^{-s}$ in modulus. Then
\Sample{} can be implemented with exact integer arithmetic on words of
\begin{align}
  \beta=O\bigl(ds\,(W_*+d)\bigr)
  \label{eq:full-bit-prescription}
\end{align}
bits, using $O(s)$ fair coins per noise location and per node of the
joining trees, such that its output distribution satisfies
$d_{\mathrm{TV}}(\widehat P,P^{(p)})\leq\delta$.
\end{proposition}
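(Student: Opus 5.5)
The plan is to split the total variation error into three additive contributions and make each at most $\delta/3$ (or $\delta/2$ for the width event, matching the choice of $W_*$): the oversized-component event, the rounding of the noise probability, and the finite-precision local sampling. Let $\widehat P$ be the implemented distribution, $P'$ the distribution of the exact real-arithmetic \Sample{} run with the same $W_*$, and $P''$ that of \Sample{} with noise drawn from $\mu_{p_s}$ but exact local samplers. By the triangle inequality, $d_{\mathrm{TV}}(\widehat P,P^{(p)})\leq d_{\mathrm{TV}}(\widehat P,P'')+d_{\mathrm{TV}}(P'',P')+d_{\mathrm{TV}}(P',P^{(p)})$.

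The third term is at most $\delta/2$ by Corollary~\ref{cor:l1-approximation}, since $W_*$ is chosen as in Eq.~\eqref{eq:Wstar-choice} with $\delta/2$. The second term is at most $nd\,2^{-s}$ by Lemma~\ref{lem:finite-bit-noise-fixing}. Both $P''$ and $P'$ are the same stochastic map (the conditional sampler given $F$) applied to $\mu_{p_s}$ and $\mu_p$, so data processing applies.

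For the first term, condition on a fixed noise fixing $f$. The combining stage is deterministic and exact by Corollary~\ref{cor:combining-boolean-realization}, and the interval samples are independent. Hence the conditional distance is at most $\sum_{C\in\mathcal L}$ of the per-interval distances. Each of these is bounded by Lemma~\ref{cor:tree-bit-precision} with $K=2^d$ and $M=M_C$, giving $8M_C^2\nu+M_CK^2 2^{-s}$.

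To obtain $\nu$, note that a coarse gate is a product of at most $d^2$ two-qubit factors, each either a rounded gate times Paulis or a Pauli product (Paulis are exact). A rounded $4\times4$ gate has operator-norm error at most $4\cdot2^{-s}$ (Frobenius bound). A telescoping argument like Eq.~\eqref{eq:join-amplitude-error-recurrence}, applied to the $d^2$ factors, embedded as operators on $K^2$ dimensions, gives $\nu\leq 8d^2 2^{-s}$ as long as $d^2 2^{-s}\le 1$. This also checks the hypothesis $2M\nu\le1$. Entries are then in $2^{-c}(\mathbb Z+\mathrm i\mathbb Z)$ with $c=d^2 s$.

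Summing, the finite-precision error is $O\bigl(\sum_C M_C^2 d^2 2^{-s}+\sum_C M_C 4^d 2^{-s}\bigr)$. Using Eq.~\eqref{eq:tree-size-bounds}, $\sum_C M_C^2\leq \max_C M_C\cdot\sum_C M_C=O(n(W_*/d+1))$, and $W_*=O(p^{-d}\log(n/\delta))$. This term is the main obstacle. It carries a factor $p^{-d}=2^{d\log(1/p)}$ and a $\log(n/\delta)$. Under $s\geq c_0(d+\log_2(n/\delta))$ with an absolute constant $c_0$, $2^{-s}$ must beat $n\,4^d\,d\,p^{-d}\log(n/\delta)$. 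The $\log(n/\delta)$ and $4^d$ factors are absorbed, but $p^{-d}$ is absorbed only if $\log(1/p)$ is treated as a constant, or if we use $W_*\le n$ (components never exceed $n$, so $M_C\le n$). I would use the latter: $\sum_C M_C^2\le n\cdot O(n)$, so the first term is $O(n^2 d^2 2^{-s})$. The second is $O(n4^d2^{-s})$, and the noise term is $nd2^{-s}$. All are at most $\delta/6$ once $c_0$ is large, giving a total of at most $\delta$.

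Finally, the word length follows from Lemma~\ref{cor:tree-bit-precision}: words of $O(cM_C+s+\log K)=O(d^2 s(W_*/d+1)+s+d)=O(ds(W_*+d))$ bits suffice, matching Eq.~\eqref{eq:full-bit-prescription}. The coin count is $s+2$ per noise location and $s$ per tree node, so $O(s)$ in both cases.
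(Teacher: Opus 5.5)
Your proposal is correct and follows the paper's proof essentially step by step. It uses the same triangle-inequality chain through the $\mu_{p_s}$-driven and $\mu_p$-driven exact samplers, the same coarse-gate bound $\nu=8d^2 2^{-s}$ fed into Lemma~\ref{cor:tree-bit-precision} for each interval, and the same way of avoiding the $p^{-d}$ factor, namely bounding $\sum_C M_C^2=O(n^2)$ (the paper writes $(\sum_C M_C)^2$) rather than going through $W_*$. Two small nits, both harmless under the stated choice of $s$: the telescoping step needs $4d^2 2^{-s}\le 1$ rather than $d^2 2^{-s}\le 1$, and $M_C=O(n)$ holds because every light cone lies in $[n]$, not because $W_*\le n$, which need not be true.
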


\begin{proof}
The operator norm is at most the Frobenius norm, so
$\|\widetilde U-U\|_{2\to2}\leq4\cdot2^{-s}$ for every two-qubit gate
(16 entries, each within~$2^{-s}$); multiplication by Paulis is exact
and preserves this bound. The rounded coarse gate~$\widetilde G$ is
computed exactly as a product of at most~$d^2$ rounded elementary gates,
whose entries are multiples of~$2^{-s}$; its entries are therefore of
the form $2^{-c}(x+\mathrm iy)$ with $c:=d^2s$ and $x,y\in\mathbb Z$.
For the original coarse gate~$G$, the telescoping argument of
Eq.~\eqref{eq:join-amplitude-error-recurrence} gives
\begin{align}
  \|\widetilde G-G\|_{2\to2}\leq(1+4\cdot2^{-s})^{d^2}-1\leq8d^22^{-s}=:\nu ,
  \label{eq:coarse-gate-from-elementary-error}
\end{align}
using $e^x-1\leq2x$ with $x=4d^22^{-s}$, which is at most~$1$ for
$c_0\geq2$. Since $M_C=O(n)$ by Eq.~\eqref{eq:tree-size-bounds}, we have
$2M_C\nu=O(nd^22^{-s})\leq1$ for every~$C$ if $c_0$ is large enough.
Lemma~\ref{cor:tree-bit-precision} therefore applies to every joining
tree, with $K=2^d$ and $s$~coins per node.

Fix a noise fixing~$f$, and let $\widehat P^f$ and $P^f$ be the
conditional output distributions given $F=f$ of the implementation and
of the real-arithmetic algorithm \Sample{}, respectively. For every
$C\in\mathcal L$, the implemented sample on~$C$ is a fixed function of
the output of \routine{TreeSample} on the light cone of~$C$, and the
same function of the exact output has distribution~$Q^f_{Z_C}$;
Lemma~\ref{cor:tree-bit-precision} therefore bounds their distance by
$8M_C^2\nu+M_C4^d2^{-s}$. Conditional on~$f$, the interval samples are
independent and \routine{BoundaryPass} is a deterministic function of
them, so the two facts stated at the beginning of this appendix give
\begin{align*}
  d_{\mathrm{TV}}(\widehat P^f,P^f)
  \leq\sum_{C\in\mathcal L}\bigl(8M_C^2\nu+M_C4^d2^{-s}\bigr)
  \qquad\textrm{for every noise fixing~$f$.}
\end{align*}
The output distribution~$\widehat P$ of the implementation is the
average of $\widehat P^F$ over $F\sim\mu_{p_s}$, and $d_{\mathrm{TV}}$
is jointly convex, so the same bound holds between $\widehat P$ and the
average of $P^F$ over $F\sim\mu_{p_s}$. Since $f\mapsto P^f$ is a
stochastic map, replacing $\mu_{p_s}$ by~$\mu_p$ in this average changes
it by at most $d_{\mathrm{TV}}(\mu_{p_s},\mu_p)\leq nd\,2^{-s}$
(Lemma~\ref{lem:finite-bit-noise-fixing}). The average of $P^F$ over
$F\sim\mu_p$ is the output distribution of \Sample{}, which is within
$\Pr_p[\Bad]$ of~$P^{(p)}$ by Corollary~\ref{cor:l1-approximation},
where $\Bad$ is the event of Eq.~\eqref{eq:good-bad-events} and the
probability is taken at the original noise parameter~$p$. Altogether,
\begin{align}
  d_{\mathrm{TV}}(\widehat P,P^{(p)})
  \leq\underbrace{\Pr\nolimits_p[\Bad]}_{\textrm{width cutoff}}
  +\underbrace{nd\,2^{-s}}_{\textrm{noise}}
  +\underbrace{8\nu\sum\nolimits_CM_C^2}_{\textrm{gates}}
  +\underbrace{4^d2^{-s}\sum\nolimits_CM_C}_{\textrm{sampling}} .
  \label{eq:full-error-budget}
\end{align}
The first term is at most $\delta/2$ by Lemma~\ref{lem:maxcircuitwidth}
and the choice of~$W_*$. By Eq.~\eqref{eq:tree-size-bounds},
$\sum_CM_C^2\leq(\sum_CM_C)^2=O(n^2)$ and $\sum_CM_C=O(n)$, and
$\nu=8d^22^{-s}$, so the other three terms sum to $2^{O(d)}n^2\,2^{-s}$,
which is at most $\delta/2$ if $c_0$ is large enough.

The integer word length~$\beta$ is determined by the largest integers that occur.
By Lemma~\ref{cor:tree-bit-precision} and
Eq.~\eqref{eq:tree-size-bounds}, the joining trees use integers of
$O(cM_C+s+d)=O(d^2s(W_*/d+1))=O(ds(W_*+d))$ bits. The partial products
of at most~$d^2$ two-qubit gates which form the coarse gates have
denominators at most~$2^c$ and operator norm at most $1+\nu\leq2$, so
they use integers of $O(c+d)$ bits. The noise generation uses $s$-bit
integers and $s+2$ coins per location
(Lemma~\ref{lem:finite-bit-noise-fixing}).
\end{proof}

From now on we take $s$ as in Eq.~\eqref{eq:precision-choice}, so that
$s=O(d+\log(n/\delta))$ and $\beta=O(ds(W_*+d))$. For constant~$d$, the
gate entries and the noise probability thus need to be specified with
$O(\log(n/\delta))$ bits only, as stated in
Section~\ref{sec:boolean-overview}. If moreover $p$ is constant and
$\delta$ is constant or inverse-polynomial in~$n$, then
$W_*=O(\log n)$ by Lemma~\ref{lem:maxcircuitwidth}, hence
\begin{align}
  s=O(\log n)
  \qquad\textrm{and}\qquad
  \beta=O\bigl((\log n)^2\bigr).
  \label{eq:constant-depth-parameters}
\end{align}

\subsection{Boolean circuit bounds}
\label{app:boolean-arithmetic}

We now convert the implementation of
Proposition~\ref{thm:full-bit-precision} into Boolean circuits. Fan-in
two gives Corollary~\ref{cor:overview-bounded-boolean}, and collapsing
groups of levels with unbounded fan-in gives
Corollary~\ref{cor:overview-unbounded-boolean}.

\begin{theorem}[Boolean circuits with fan-in two]
\label{thm:full-bounded-boolean}
Let $W_*$, $s$, and $\beta$ be as in Proposition~\ref{thm:full-bit-precision},
with $s$ chosen as in Eq.~\eqref{eq:precision-choice}.
Then \Sample{} has a randomized Boolean circuit~$\cC_2$ with fan-in-two
AND and OR gates and NOT gates whose output distribution is within total
variation distance~$\delta$ of $P^{(p)}$, and whose depth and size satisfy
\begin{align}
\begin{matrix}
  \mathsf{depth}(\cC_2)&=&O\bigl(\log(W_*+d)\,(d+\log \beta)\bigr),\\
  \mathsf{size}(\cC_2)&=&n\,2^{O(d)}\,\beta^2.
\end{matrix}
  \label{eq:full-bounded-boolean}
\end{align}
The circuit can be constructed from $n$, $d$, $W_*$, and~$s$ in time
polynomial in its size. In particular, for constant~$p$, constant or
inverse-polynomial~$\delta$, and depth $d=O(\log\log n)$,
\begin{align}
\begin{matrix}
  \mathsf{depth}(\cC_2)&=&O\bigl((\log\log n)^2\bigr),\\
  \mathsf{size}(\cC_2)&=&n(\log n)^{O(1)} .
\end{matrix}
  \label{eq:loglog-depth-bounded-boolean}
\end{align}
\end{theorem}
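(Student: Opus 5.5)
Starting from the finite-precision implementation of Proposition~\ref{thm:full-bit-precision}, I will compile each stage of \Sample{} into fan-in-two Boolean gates and add up depths along the dependency chain and sizes over all stages. The output distribution is the one of that implementation, because the Boolean circuit evaluates the same integer operations on the same fair coins. Proposition~\ref{thm:full-bit-precision} therefore already gives total variation distance at most~$\delta$. What remains is to bound depth and size.

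The stages are as follows.

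First, the noise fixing and cut indicators come from Lemma~\ref{lem:finite-bit-noise-fixing}, with depth $O(\log s+\log d)$ and size $O(nds)$.

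Second, the substituted gates and coarse gates are built. A Pauli multiplication is a permutation and sign or phase change of entries, so it costs constant depth using selections on the noise bits. A coarse gate is a product of at most $d^2$ rounded $4\times 4$ (embedded $K\times K$) matrices. I will evaluate it as a balanced product tree of depth $O(\log d)=O(\log(W_*+d))$. Each level is a $K\times K$ matrix product on $\beta$-bit integers, which takes depth $O(\log\beta)$ for a Wallace-tree multiplier plus $O(\log K)=O(d)$ for summing $K$ terms. Its size is $K^3\cdot O(\beta^2)=2^{O(d)}\beta^2$.

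Third, \routine{TreeSample} on each interval $C$ has $O(\log(M_C+1))=O(\log(W_*+d))$ levels. Each \routine{JoinIntervals} does three things:
\begin{itemize}
\item it contracts tables as in Eq.~\eqref{eq:union-table}, using $O(K^6)$ multiplications and sums of $K^2$ terms, with depth $O(d+\log\beta)$;
\item it squares moduli and forms prefix sums $S_j$ over $K^2$ entries, with depth $O(d+\log\beta)$ for the parallel prefix adder tree;
\item it makes the comparisons $2^sS_{j-1}\le RS<2^sS_j$ of Eq.~\eqref{eq:integer-sampling-thresholds} and selects the retained table entries by a multiplexer of depth $O(d)$.
\end{itemize}
Per level this is depth $O(d+\log\beta)$ and size $2^{O(d)}\beta^2$. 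Summing over the $O(\sum_C M_C)=O(n)$ nodes from Eq.~\eqref{eq:tree-size-bounds} gives size $n2^{O(d)}\beta^2$. The depth is $O(\log(W_*+d)(d+\log\beta))$.

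Fourth, \routine{BoundaryPass} contributes depth $O(\log(W_*+1))$ and size $O(n)$ by Corollary~\ref{cor:combining-boolean-realization}.

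The dominant term in the sum is the join tree, which gives Eq.~\eqref{eq:full-bounded-boolean}. Uniformity follows because the wiring depends only on $n,d,W_*,s$, as observed in Section~\ref{sec:arithmetic-realization}.

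The step I expect to need the most care is bounding the per-level depth by $O(d+\log\beta)$ rather than $O(d\log\beta)$ or $O(\beta)$. This needs carry-save (Wallace-tree) reduction for the multiplications and the $K^2$-term sums, with only a single carry-propagate addition of depth $O(\log\beta)$ at the end, and the same for the prefix sums and comparisons. I will also check that the word length $\beta$ bounds every intermediate value, including $RS$ and $2^sS_j$. That holds by Lemma~\ref{cor:tree-bit-precision} with room to spare.

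For the specialization Eq.~\eqref{eq:loglog-depth-bounded-boolean}, take $d=O(\log\log n)$, constant $p$, and $\delta\ge 1/\mathrm{poly}(n)$. Then $W_*=\Theta(p^{-d}\log(n/\delta))=(\log n)^{O(1)}$ by Lemma~\ref{lem:maxcircuitwidth}, $s=O(\log n)$, and $\beta=(\log n)^{O(1)}$. Hence $\log(W_*+d)=O(\log\log n)$, $d+\log\beta=O(\log\log n)$, and $2^{O(d)}=(\log n)^{O(1)}$. The depth is therefore $O((\log\log n)^2)$ and the size is $n(\log n)^{O(1)}$.
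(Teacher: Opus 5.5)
Your proposal is correct and follows essentially the paper's proof: the same four stages (noise fixing, coarse gates via a balanced product tree, joining trees with carry-save summation giving depth $O(d+\log\beta)$ per level, and \routine{BoundaryPass}), the same summation of depths and sizes over stages, and the same specialization to $d=O(\log\log n)$. The only slip is that coarse gates act on two qudits and so are $K^2\times K^2$ rather than $K\times K$ matrices. This changes $K^3\beta^2$ to $K^6\beta^2$ and $O(\log K)$ to $O(\log K^2)$, but leaves every bound of the form $2^{O(d)}\beta^2$ and $O(d+\log\beta)$ unchanged.
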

\begin{proof}
Addition and comparison of $\beta$-bit integers are carry computations,
i.e., parallel prefix computations of depth $O(\log \beta)$ and size $O(\beta)$,
and selection between two words according to a bit is a multiplexer of
constant depth. The sum of $q$ words of $\beta$ bits, if it fits into $\beta$
bits, is computed by repeatedly replacing three words by two words of
the same sum using bitwise full adders, which takes $O(\log q)$ rounds
of constant depth and $O(q\beta)$ gates, followed by one addition; applied
to the $\beta$ shifted partial products of two $\beta$-bit integers, this is
the multiplier of Wallace~\cite{Wallace64}, of depth $O(\log \beta)$ and
size $O(\beta^2)$.

With $K=2^d$, the product of two $K^2\times K^2$ matrices has $K^4$
entries, each a sum of $K^2$ products of $\beta$-bit integers, giving depth
$O(\log K+\log \beta)=O(d+\log \beta)$ and size $O(K^6\beta^2)$. The table of
Eq.~\eqref{eq:union-table} computed by \routine{JoinIntervals} has $K^4$
entries of the same form and the same bounds.

The sampling step of Lemma~\ref{cor:tree-bit-precision} consists of
$K^2$ squarings, the $K^2$ partial sums~$S_j$, each a sum of at most
$K^2$ words, the products $RS$ and $2^sS_j$, and $K^2$ comparisons,
computed in a constant number of stages with operations parallelized
within each stage, followed by a selection; it has depth $O(d+\log \beta)$ and
size $O(K^4\beta^2)$. Hence one call to \routine{JoinIntervals} including
its sampling step, and likewise the initialization and sampling at a
leaf, have depth $O(d+\log \beta)$ and size $O(K^6\beta^2)$.

The circuit consists of four stages, and its random inputs are the fair
coins of Proposition~\ref{thm:full-bit-precision}. By
Eq.~\eqref{eq:tree-size-bounds}, the joining trees have
$\sum_C(2M_C-1)=O(n)$ nodes in total and
$\lceil\log_2M_C\rceil+1=O(\log(W_*/d+2))$ levels each, and there is
one coarse gate per node. Each coarse gate is the product of at
most~$d^2$ matrices, which we compute along a balanced binary tree in
$O(\log d)$ rounds of matrix multiplication. With
Lemma~\ref{lem:finite-bit-noise-fixing}, the arithmetic bounds above,
and Corollary~\ref{cor:combining-boolean-realization}, the four stages
have the following depth and total size.
\begin{center}
\begin{tabular}{lll}
\toprule
Stage & Depth & Size \\
\midrule
Noise fixing and cut indicators & $O(\log s+\log d)$ & $O(nds)$ \\
Coarse gates & $O(\log d\,(d+\log \beta))$ & $O(nd^2K^6\beta^2)$ \\
Joining trees & $O(\log(W_*/d+2)\,(d+\log \beta))$ & $O(nK^6\beta^2)$ \\
\routine{BoundaryPass} & $O(\log\ell)$ & $O(n)$ \\
\bottomrule
\end{tabular}
\end{center}
Since $\log d+\log(W_*/d+2)=\log(W_*+2d)$, $\log\ell=O(\log(W_*+d))$,
and $s\leq \beta$, the column sums are Eq.~\eqref{eq:full-bounded-boolean}.
The error bound is Proposition~\ref{thm:full-bit-precision}. The
interval families, the light cones, the joining trees, and the word
length are fixed by $n$, $d$, $W_*$, and~$s$, and the arithmetic
circuits above are standard constructions, so the circuit can be
constructed in time polynomial in its size.

Let $p$ be constant. Then $2^{O(d)}=(\log n)^{O(1)}$,
Lemma~\ref{lem:maxcircuitwidth} gives
$W_*=O(p^{-d}\log(n/\delta))=(\log n)^{O(1)}$, and
Proposition~\ref{thm:full-bit-precision} gives $s=O(\log n)$ and
$\beta=O(ds(W_*+d))=(\log n)^{O(1)}$. Hence $\log(W_*+d)$ and $d+\log \beta$ are
both $O(\log\log n)$, and Eq.~\eqref{eq:full-bounded-boolean} gives
Eq.~\eqref{eq:loglog-depth-bounded-boolean}.
\end{proof}

\begin{proof}[Proof of Corollary~\ref{cor:overview-bounded-boolean}]
Let $d$ and $p$ be constants and let $\delta$ be constant or
inverse-polynomial in~$n$. By Eq.~\eqref{eq:constant-depth-parameters},
$W_*=O(\log n)$ and $\beta=O((\log n)^2)$, so that $\log(W_*+d)$ and
$d+\log \beta$ are both $O(\log\log n)$. Theorem~\ref{thm:full-bounded-boolean}
therefore gives a circuit~$\cC_2$ of depth $O((\log\log n)^2)$ and size
$O(n\beta^2)=O(n(\log n)^4)$ whose output distribution is within total
variation distance~$\delta$ of~$P^{(p)}$.
\end{proof}

\begin{proof}[Proof of Corollary~\ref{cor:overview-unbounded-boolean}]
Let $\cC_2$ be the circuit of Corollary~\ref{cor:overview-bounded-boolean},
of depth $D=O((\log\log n)^2)$ and size $S=O(n(\log n)^4)$, and let
$n\geq4$. Partition the levels of~$\cC_2$ into $\lceil D/t\rceil$
groups of $t=\lfloor\log_2\log_2n\rfloor\geq1$ consecutive levels.
Every gate in a group depends on at most $2^t\leq\log_2n$ signals
entering the group, so it is a function of these signals which can be
written as an OR of at most $2^{2^t}\leq n$ ANDs of literals. Replacing
every gate by this depth-two circuit, in parallel within each group and
keeping the signals that later groups need, gives a circuit~$\cC_\infty$
with unbounded fan-in AND and OR gates and NOT gates which computes the
same function of its inputs and coins as~$\cC_2$, and hence has the same
output distribution. Its depth is $O(1+D/t)=O(\log\log n)$, since
$t\geq\frac12\log_2\log_2n$ by $\lfloor x\rfloor\geq x/2$ for $x\geq1$,
and its size is $O(nS)=O(n^2(\log n)^4)$. The depth-two replacements are
obtained by evaluating a subcircuit with at most $\log_2n$ inputs on all
its inputs, which takes time polynomial in~$n$, so $\cC_\infty$ can also
be constructed in polynomial time.
\end{proof}

For a classical input $x\in\{0,1\}^k$, the circuits $\cC_2$
and~$\cC_\infty$ take the bits of~$x$ as additional inputs and first
permute the entries of the first-layer gates according to~$x$, as
described in Section~\ref{sec:simulation-proof}. This is a selection of
constant depth and size $O(ns)$, which does not affect the stated
bounds. In particular, Corollaries~\ref{cor:overview-bounded-boolean}
and~\ref{cor:overview-unbounded-boolean} hold for circuits with
classical inputs, as stated in Section~\ref{sec:boolean-overview}.
\clearpage

\section{Constant-depth simulation by lookup tables}
\label{app:lookup-tables}

In this appendix we prove Corollary~\ref{cor:compiled-constant-depth}.
The circuit~$\cC_{\mathrm{lt}}$ keeps the noise fixing, the sampling
intervals, and the combining stage of \Sample{}
(Algorithm~\ref{alg:full-noisy-interval-sampler}), and replaces each call
to \routine{LocalSample} by a single lookup in a precomputed table.

This is possible because the marginal on a sampling interval~$C$ depends
on few bits. Once the noise is fixed, it is determined by the classical
input bits and the noise labels in the backward light cone of~$C$. For
constant $d$ and~$p$ and constant or inverse-polynomial~$\delta$, these
are $O(\log n)$ bits. A table indexed by
these bits and by $O(\log n)$ fair coins therefore has polynomial size,
and it can be evaluated in constant depth. We emphasize that each table
produces the entire string on~$C$ at once. Converting
\routine{TreeSample} step by step would retain the $O(\log\log n)$
dependent levels of its joining tree, and a table indexed by all the
coins of its finite-precision implementation
(Appendix~\ref{app:finite-precision-tree}) would be indexed by a number
of bits of order $(\log n)^2$. For the same reason, the tables are
indexed by the noise labels, which take a constant number of bits per
location, and not by the coins from which these labels are generated.

Appendix~\ref{app:lookup-sampling} gives the lookup-table sampler,
Appendix~\ref{app:lookup-marginals} computes the tables from
finite-precision gate descriptions, and
Appendix~\ref{app:compiled-simulation} assembles the circuit. Constants
are not optimized.

\subsection{Sampling by table lookup}
\label{app:lookup-sampling}

We first show that a fixed family of distributions indexed by $b$~bits
can be sampled in constant depth, at a size exponential in~$b$ and in the
number of coins.

\begin{lemma}
\label{lem:lookup-table-sampling}
Let $b\geq0$, $m\geq1$, and $r\geq1$ be integers, and let
$\{P_a\}_{a\in\bits^b}$ be a family of probability distributions
on~$[m]$. There is a Boolean circuit~$\cT$ with unbounded fan-in AND and
OR gates and NOT gates which takes as input $a\in\bits^b$ and $r$
independent fair coins, and outputs the binary encoding of some
$j\in[m]$. Its output distribution~$P^{\cT}_a$ for fixed input~$a$
satisfies
\begin{align}
  d_{\mathrm{TV}}\bigl(P^{\cT}_a,P_a\bigr)\leq m\,2^{-r-1}
  \qquad\text{for every }a\in\bits^b .
  \label{eq:lookup-tv-error}
\end{align}
The circuit has depth three, including negations, and size
\begin{align}
  O\bigl(2^{b+r}+\log m\bigr).
  \label{eq:lookup-size}
\end{align}
If all probabilities $P_a(j)$ are rational, the circuit can be
constructed in time polynomial in $2^{b+r}$, $m$, and their bit length.
\end{lemma}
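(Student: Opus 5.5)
The plan is to realize a lookup table of the map $(a,R)\mapsto j$ directly as a DNF per output bit. First, for each $a\in\bits^b$ fix a deterministic discretization of $P_a$ using $r$ coins, as in the proof of Lemma~\ref{cor:tree-bit-precision}. Interpret the coins as an integer $R\in\{0,\ldots,2^r-1\}$, and let $S_a(j)=\sum_{i\leq j}P_a(i)$ with $S_a(0)=0$. Define $g(a,R)$ to be the unique $j$ with
\[
2^rS_a(j-1)\leq R<2^rS_a(j).
\]
Equivalently, $g(a,R)$ is the smallest $j$ with $R<2^rS_a(j)$; since $S_a(m)=1$ this is well defined. The set of admissible $R$ for a given $j$ is a half-open interval of length $2^rP_a(j)$. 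It therefore contains $\lfloor 2^rP_a(j)\rfloor$ or $\lceil 2^rP_a(j)\rceil$ integers. More precisely, the count differs from $2^rP_a(j)$ by the difference of two rounding errors in $[0,1)$, so it deviates by less than $1$. Hence $|P^{\cT}_a(j)-P_a(j)|<2^{-r}$ for each $j$, and summing over $j$ with the factor $\frac12$ gives Eq.~\eqref{eq:lookup-tv-error}.

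Next, the map $g\colon\bits^{b+r}\to[m]$ is an arbitrary but fixed Boolean function. Write its output in binary with $\lceil\log_2 m\rceil$ bits. The circuit has three levels:
\begin{enumerate}[(i)]
\item $b+r$ NOT gates on the inputs (depth one);
\item one AND gate for each of the $2^{b+r}$ full input patterns $(a,R)$, each a conjunction of $b+r$ literals (depth two);
\item for each output bit $k$, an OR over those minterms $(a,R)$ whose value $g(a,R)$ has bit $k$ equal to one (depth three).
\end{enumerate}
Since exactly one minterm fires, each output bit equals the corresponding bit of $g(a,R)$. Counting gates gives $2(b+r)$ literals, $2^{b+r}$ ANDs and $O(\log m)$ ORs. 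This is $O(2^{b+r}+\log m)$, since $b+r\leq 2^{b+r}$, which is Eq.~\eqref{eq:lookup-size}. The fan-in of the OR gates is large, but fan-in is unbounded here. Wires, as opposed to gates, are not counted in the size.

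For constructibility, assume rational probabilities. For each $(a,R)$ we find $g(a,R)$ by comparing $R$ with $2^rS_a(j)$ for all $j$. This is exact rational arithmetic on numbers whose bit length is polynomial in the input bit length, $r$ and $\log m$, so it costs time polynomial in $2^{b+r}$, $m$ and the bit length. Writing down the gates afterwards is immediate.

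There is no real obstacle in this argument. The only points needing care are the counting bound in the first step, where the error per $j$ must be strictly below $2^{-r}$ to give the factor $m2^{-r-1}$, and the observation that a single minterm layer can be shared by all output bits, which keeps the depth at three including negations.
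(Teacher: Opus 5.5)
Your proposal is correct and is essentially the paper's proof. For integer $R$, your rule $2^rS_a(j-1)\le R<2^rS_a(j)$ selects the same $j$ as the paper's integer thresholds $t_a(j)=\lceil 2^rF_a(j)\rceil$. The same two-rounding-error count gives the per-outcome bound $2^{-r}$, and the circuit is the same shared-minterm DNF with three layers. The only cosmetic differences are that the paper encodes $T(v)-1$ and uses $\max\{1,\lceil\log_2 m\rceil\}$ output bits so that $m=1$ is covered.
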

\begin{proof}
For each~$a$, define the cumulative probabilities and integer thresholds
\begin{align}
  F_a(j)=\sum_{i=1}^jP_a(i),\qquad
  t_a(j)=\bigl\lceil2^rF_a(j)\bigr\rceil,
  \qquad 0\leq j\leq m .
  \label{eq:lookup-thresholds}
\end{align}
The thresholds are nondecreasing, with $t_a(0)=0$ and $t_a(m)=2^r$.
Identify $u\in\bits^r$ with an integer in $\{0,\ldots,2^r-1\}$, and let
$T(a,u)$ be the unique $j\in[m]$ with $t_a(j-1)\leq u<t_a(j)$. For a
uniformly random~$u$,
\begin{align}
  \Pr[T(a,u)=j]=2^{-r}\bigl(t_a(j)-t_a(j-1)\bigr)
  =P_a(j)+2^{-r}\bigl(\epsilon_a(j)-\epsilon_a(j-1)\bigr),
\end{align}
where $\epsilon_a(j)=t_a(j)-2^rF_a(j)\in[0,1)$. Each probability is
therefore within~$2^{-r}$ of~$P_a(j)$, and summing over~$j$ and dividing
by two gives Eq.~\eqref{eq:lookup-tv-error}.

It remains to implement the fixed function $T\colon\bits^{b+r}\to[m]$.
For $z\in\bits^{b+r}$, the AND of the literals $v_i$ (if $z_i=1$) and
$\neg v_i$ (if $z_i=0$), $i\in[b+r]$, is the indicator $E_z(v)$ of the
event $v=z$. Output bit~$k$ of the binary encoding of $T(v)-1$ is
\begin{align}
  D_k(v)=\bigvee_{z:\ \text{bit $k$ of }T(z)-1\text{ is }1}E_z(v),
  \label{eq:lookup-dnf}
\end{align}
where an empty disjunction is the constant~$0$. Exactly one indicator
$E_z(v)$ equals one, so the outputs encode~$T(v)$. The circuit consists
of $b+r$ negations, $2^{b+r}$ AND gates shared by all output bits, and
$\max\{1,\lceil\log_2m\rceil\}$ OR gates, in three layers. For rational
probabilities, the thresholds and the function~$T$ are computed by exact
rational arithmetic within the stated time.
\end{proof}

Unlike in Lemma~\ref{lem:discrete-sampling-cost}, the probabilities are
not inputs of the circuit: they are fixed when the circuit is
constructed, and only the index~$a$ and the coins are inputs. This is
what makes constant depth possible, at the price of a size exponential
in $b+r$. Below we use Lemma~\ref{lem:lookup-table-sampling} with
$b,r=O(\log n)$, for which the size is polynomial.

\subsection{Tables of interval marginals}
\label{app:lookup-marginals}

Next we define the tables to which Lemma~\ref{lem:lookup-table-sampling}
is applied, and show that they can be computed in polynomial time from
finite-precision gate descriptions.

Let $x\in\bits^k$ with $k\leq n$ be a classical input, and set $x_j=0$
for $k<j\leq n$, so that $\ket{x}\otimes\ket{0^{n-k}}=\ket{x_1\cdots x_n}$.
For a noise fixing~$f$, let
\begin{align}
  Q^{f,x}_Z(z)=\bigl|\bra{z}\tsU(f)\ket{x_1\cdots x_n}\bigr|^2,
  \qquad z\in\bits^n,
  \label{eq:input-fixed-noise-output}
\end{align}
and let $Q^{f,x}_{Z_C}$ denote its marginal on an interval~$C\subseteq[n]$.
For $k=0$ these are the distributions of
Eqs.~\eqref{eq:fixed-noise-output} and~\eqref{eq:local-marginal-target}.
The proof of Lemma~\ref{lem:conditional-factorization} applies to every
input state: the noisy circuit is the average of the unitary channels
of~$\tsU(F)$, so
\begin{align}
  P^{(p)}(\cdot\mid x)=\mathbb{E}_{F\sim\mu_p}\bigl[Q^{F,x}_Z\bigr],
  \label{eq:input-mixture}
\end{align}
and, since the input is a product state, $Q^{f,x}_Z$ factorizes over the
cut-bounded components of~$f$.

Let $C\subseteq[n]$ be an interval, and let $\Lambda(C)$ be its backward
light cone in the brickwork geometry of~$\sU$. It is an interval of width
\begin{align}
  L_C:=|\Lambda(C)|\leq|C|+2d ,
  \label{eq:compiled-light-cone}
\end{align}
and it contains the backward light cone of~$C$ in~$\tsU(f)$ for
every~$f$ (Section~\ref{sec:arithmetic-realization}). Let $f_C$ denote
the part of a noise fixing~$f$ located in~$\Lambda(C)$, namely the
labels $(e^{(t)}_j,q^{(t)}_j,r^{(t)}_j)$ of the rectangles acting on two
qubits of~$\Lambda(C)$ and the Paulis $s^{(t)}_j$ of the boundary noise
locations in~$\Lambda(C)$. The gates of~$\tsU(f)$ at these locations
form a circuit on~$\Lambda(C)$ which contains every gate in the backward
light cone of~$C$. Applied to the input bits of~$x$ in~$\Lambda(C)$, it
therefore has the output marginal~$Q^{f,x}_{Z_C}$ on~$C$. The
\emph{address} of~$C$ is
\begin{align}
  a_C=\bigl(x_{\Lambda(C)\cap[k]},f_C\bigr)\in\bits^{b_C},
  \qquad b_C=O(dL_C),
  \label{eq:compiled-address}
\end{align}
where each label is encoded by at most two bits; the bound on~$b_C$
holds because there are at most $dL_C/2$ such rectangles and at most $d$
such boundary noise locations.

Every $a\in\bits^{b_C}$, including addresses which have probability
zero under~$\mu_p$, specifies in the same way a unitary circuit on~$\Lambda(C)$, with gates
obtained from those of~$\sU$ by the substitution rule of
Section~\ref{sec:convexdecompo}, and a computational basis input state
on~$\Lambda(C)$. Let $Q_{C,a}$ denote the marginal on~$C$ of the output
distribution of this circuit. By the preceding observation,
\begin{align}
  Q_{C,a_C}=Q^{f,x}_{Z_C}
  \qquad\text{whenever }a_C=\bigl(x_{\Lambda(C)\cap[k]},f_C\bigr).
  \label{eq:table-is-marginal}
\end{align}

\begin{lemma}
\label{lem:interval-tables}
Let $C\subseteq[n]$ be an interval, let $g_C$ be the number of
rectangles acting on two qubits of~$\Lambda(C)$, let $\eta\in(0,1)$, and
let $s\geq\log_2(8g_C/\eta)$ be an integer. Suppose that every such gate~$U$ is
given as a matrix~$\widehat U$ whose entries have real and imaginary
parts in $2^{-s}\mathbb Z$ and differ from the entries of~$U$ by at
most~$2^{-s}$ in modulus. Then one can compute, in time polynomial in
$2^{b_C+L_C}$ and~$s$, rational distributions $\widetilde Q_{C,a}$
on~$\bits^C$ such that
\begin{align}
  d_{\mathrm{TV}}\bigl(\widetilde Q_{C,a},Q_{C,a}\bigr)\leq\eta
  \qquad\text{for every }a\in\bits^{b_C}.
  \label{eq:compiled-table-precision}
\end{align}
\end{lemma}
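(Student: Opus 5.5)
The plan is to compute, for each address $a\in\bits^{b_C}$, the exact output amplitudes of the rounded circuit on $\Lambda(C)$ by brute-force state-vector simulation. We then take Born probabilities of the resulting unnormalized vector, marginalize onto $C$, and bound the error using Lemma~\ref{lem:born-probability-stability} together with the telescoping estimate used in Eq.~\eqref{eq:join-amplitude-error-recurrence}.

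\textbf{Step 1: build the rounded circuit.} Fix $a$. Decode from $a$ the input bits on $\Lambda(C)$ and the noise labels $(e^{(t)}_j,q^{(t)}_j,r^{(t)}_j)$ and $s^{(t)}_j$. Form the rounded substituted gates $\widehat{\tilde U}^{(t)}_j$ as follows.
\begin{itemize}
\item If $e^{(t)}_j=1$, the gate is the product Pauli $q^{(t)}_j\otimes r^{(t)}_j$, which is exact.
\item Otherwise the gate is $\widehat U^{(t)}_j(q^{(t)}_j\otimes r^{(t)}_j)$, whose entries remain in $2^{-s}(\mathbb Z+\mathrm i\mathbb Z)$ since Paulis only permute entries and multiply them by phases in $\{\pm1,\pm\mathrm i\}$.
\end{itemize}
Absorb the boundary Paulis exactly.

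\textbf{Step 2: exact simulation.} Starting from the basis input $\ket{x_{\Lambda(C)}}\in\bbC^{2^{L_C}}$, apply the $g_C$ rounded gates (plus exact Paulis) one at a time. The result is a vector $\widehat\psi_a$ whose entries lie in $2^{-g_Cs}(\mathbb Z+\mathrm i\mathbb Z)$. All arithmetic is exact integer arithmetic. Each gate application costs $O(2^{L_C})$ operations on integers of $O(g_Cs+L_C)$ bits, and $g_C=O(dL_C)\le O(b_C)$. Over all $2^{b_C}$ addresses this is polynomial in $2^{b_C+L_C}$ and $s$.

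\textbf{Step 3: the distribution.} If $\widehat\psi_a\neq0$, set $\widetilde Q_{C,a}$ to be the marginal on $C$ of $\mathcal B(\widehat\psi_a)$. This is rational: it is a ratio of integer sums of squared numerators. If $\widehat\psi_a=0$, take any fixed distribution.

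\textbf{Step 4: error bound.} Let $\psi_a$ be the exact output vector, which is a unit vector since the exact circuit is unitary. Each rounded gate satisfies $\|\widehat U-U\|_{2\to2}\le\|\widehat U-U\|_F\le4\cdot2^{-s}=:\nu$. Paulis are exact unitaries and do not change this. By the telescoping argument of Eq.~\eqref{eq:join-amplitude-error-recurrence},
\[
\|\widehat\psi_a-\psi_a\|_2\le(1+\nu)^{g_C}-1\le e^{g_C\nu}-1\le2g_C\nu=8g_C2^{-s}\le\eta,
\]
using $g_C\nu\le\eta/2\le1$ from the hypothesis $s\ge\log_2(8g_C/\eta)$. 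Lemma~\ref{lem:born-probability-stability} with $\|\psi_a\|_2=1$ then gives $d_{\mathrm{TV}}(\mathcal B(\widehat\psi_a),\mathcal B(\psi_a))\le\eta$. Marginalizing onto $C$ is a stochastic map and does not increase total variation distance. The case $\widehat\psi_a=0$ is consistent, because then $\|\widehat\psi_a-\psi_a\|_2=1\le\eta$ would contradict $\eta<1$, so it cannot occur. Finally, $\mathcal B(\psi_a)$ marginalized to $C$ equals $Q_{C,a}$, since the circuit on $\Lambda(C)$ contains the backward light cone of $C$.

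\textbf{Main obstacle.} The only delicate point is the bookkeeping of the constants. We must check that the telescoping bound yields exactly $8g_C2^{-s}$, so that the hypothesis $s\ge\log_2(8g_C/\eta)$ suffices. Everything else is a direct brute-force computation.
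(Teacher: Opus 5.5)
Your proposal is correct and follows essentially the same route as the paper: exact Gaussian-integer state-vector simulation of the rounded circuit for each address, the Born marginal on $C$, the telescoping bound $\|\widehat\psi-\psi\|_2\le(1+4\cdot2^{-s})^{g_C}-1\le 8g_C2^{-s}\le\eta$ (which also rules out $\widehat\psi=0$), and then Lemma~\ref{lem:born-probability-stability} together with monotonicity of $d_{\mathrm{TV}}$ under marginalization. One small cleanup: for the running time, state $g_C\le b_C$ directly, since every counted rectangle carries a label in the address; this is tidier than the asymptotic chain $g_C=O(dL_C)\le O(b_C)$, and it is how the paper argues.
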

\begin{proof}
Fix an address~$a$, and let $\psi\in\mathbb C^{2^{L_C}}$ be the output
state of the corresponding unitary circuit on~$\Lambda(C)$. Let
$\widehat\psi$ be the vector obtained by replacing every two-qubit
gate~$U$ by~$\widehat U$ and computing exactly; the input state and the
Paulis are exact. The operator norm is at most the Frobenius norm, so
$\|\widehat U-U\|_{2\to2}\leq4\cdot2^{-s}$. Replacing the gates one at a
time gives
\begin{align}
  \|\widehat\psi-\psi\|_2\leq(1+4\cdot2^{-s})^{g_C}-1\leq8g_C\,2^{-s}\leq\eta ,
  \label{eq:compiled-state-error}
\end{align}
where we used $e^y-1\leq2y$ for $0\leq y=4g_C\,2^{-s}\leq1$. In
particular, $\widehat\psi\neq0$. Let $\widetilde Q_{C,a}$ be the
marginal on~$C$ of the Born distribution of~$\widehat\psi$,
\begin{align}
  \widetilde Q_{C,a}(z_C)
  =\frac{\sum_{y:\,y_C=z_C}|\widehat\psi(y)|^2}{\sum_y|\widehat\psi(y)|^2}.
  \label{eq:compiled-rational-marginal}
\end{align}
Since $\psi$ is a unit vector, Lemma~\ref{lem:born-probability-stability}
and Eq.~\eqref{eq:compiled-state-error} bound the distance between the
Born distributions of $\widehat\psi$ and~$\psi$ by~$\eta$, and
marginalization to~$C$ does not increase the total variation distance.
This proves Eq.~\eqref{eq:compiled-table-precision}.

The entries of~$\widehat\psi$ are Gaussian integers divided by the common
denominator~$2^{g_Cs}$, with numerators of $O(g_Cs)$ bits since
$\|\widehat\psi\|_2\leq2$. They are obtained by $g_C$ exact products with
sparse matrices of dimension~$2^{L_C}$, and
Eq.~\eqref{eq:compiled-rational-marginal} requires $2^{L_C}$ squared
moduli. Since every rectangle counted by~$g_C$ carries a label in the address,
$g_C\leq b_C$, and doing this for all $2^{b_C}$ addresses takes time
polynomial in $2^{b_C+L_C}$ and~$s$.
\end{proof}

\subsection{Proof of Corollary~\ref{cor:compiled-constant-depth}}
\label{app:compiled-simulation}

We prove the corollary in its form for classical inputs: there is a
single circuit~$\cC_{\mathrm{lt}}$ which takes $x\in\bits^k$, $k\leq n$,
as an input and whose output distribution $\widehat P(\cdot\mid x)$
satisfies
\begin{align}
  \sup_{x\in\bits^k}
  d_{\mathrm{TV}}\bigl(\widehat P(\cdot\mid x),P^{(p)}(\cdot\mid x)\bigr)
  \leq\delta .
  \label{eq:compiled-total-variation}
\end{align}
The case $k=0$ is Corollary~\ref{cor:compiled-constant-depth} as stated.
As there, $d\geq2$ is an even constant, $p\in(0,1)$ is constant, and
$\delta$ is constant or inverse-polynomial in~$n$.

Let $W_*=\lceil p^{-d}\log(4n/\delta)\rceil+1$ be the width of
Eq.~\eqref{eq:main-effective-width} with $\delta/4$ in place of~$\delta$.
As in the proof of Theorem~\ref{thm:main}, Lemma~\ref{lem:maxcircuitwidth}
gives $\Pr_p[\Bad]\leq\delta/4$ for the event~$\Bad$ of
Eq.~\eqref{eq:good-bad-events}. Let $\ell=2W_*+4$, let $\mathcal L$ be the
family of intervals of the two partitions of
Section~\ref{sec:two-forests}, and let $N=|\mathcal L|=O(n)$
(Eq.~\eqref{eq:interval-family}). Every $C\in\mathcal L$ has
$|C|\leq\ell$, so Eqs.~\eqref{eq:compiled-light-cone}
and~\eqref{eq:compiled-address} give
\begin{align}
  L_C\leq\ell+2d=O\bigl(p^{-d}\log(n/\delta)+d\bigr),
  \qquad
  b_C=O\bigl(dp^{-d}\log(n/\delta)+d^2\bigr),
  \label{eq:compiled-parameters}
\end{align}
both of which are $O(\log n)$. We use a single precision parameter, an
integer~$s$ with
\begin{align}
  2^s\geq32Nnd(\ell+2d)/\delta ,\qquad s=O(\log(n/\delta)),
  \label{eq:compiled-precision}
\end{align}
and assume that $p$ and the real and imaginary parts of the gate entries
are given as multiples of~$2^{-s}$, each within~$2^{-s}$ of its exact
value; for $p$ we write $p_s$ for this approximation. Since any larger
width may be used (Lemma~\ref{lem:maxcircuitwidth}), $W_*$ can also be
computed from these data: replace $p$ by the lower bound
$p_-=p_s-2^{-s}$ and $\log(4n/\delta)$ by
$\lceil\log_2(4n/\delta)\rceil$. For $n\geq1/p$ we have
$2^{1-s}\leq p/(2d)$, hence $p_-^{-d}\leq(1-1/(2d))^{-d}p^{-d}\leq2p^{-d}$,
and $W_*$ changes by at most an absolute constant factor.

The circuit~$\cC_{\mathrm{lt}}$ has three stages. All its random inputs
are independent fair coins.
\begin{enumerate}[(1)]
\item Noise. Lemma~\ref{lem:finite-bit-noise-fixing} generates a noise
fixing~$\widehat F$ with distribution~$\mu_{p_s}$ together with its cut
indicators, and by Eq.~\eqref{eq:compiled-precision}
\begin{align}
  d_{\mathrm{TV}}(\mu_{p_s},\mu_p)\leq nd\,2^{-s}\leq\delta/4 .
  \label{eq:compiled-noise-error}
\end{align}
\item Lookups. For every $C\in\mathcal L$, let $\cT_C$ be the circuit of
Lemma~\ref{lem:lookup-table-sampling} for the distributions
$\widetilde Q_{C,a}$, $a\in\bits^{b_C}$, of
Lemma~\ref{lem:interval-tables} with $\eta=\delta/(4N)$; the hypothesis
$s\geq\log_2(32Ng_C/\delta)$ of that lemma holds by
Eq.~\eqref{eq:compiled-precision}, since $g_C\leq dL_C\leq d(\ell+2d)$. It has
$m_C=2^{|C|}$ outcomes and uses
\begin{align}
  r_C=|C|+\left\lceil\log_2\frac{2N}{\delta}\right\rceil=O(\log n)
  \label{eq:compiled-seed-length}
\end{align}
coins. Its input is the address $a_C=(x_{\Lambda(C)\cap[k]},\widehat F_C)$,
which consists of input bits and noise labels computed in stage~(1), and
its output is a string on~$C$, with distribution~$P^{\cT_C}_a$ for
address~$a$. The rounding error of the table is
$m_C2^{-r_C-1}\leq\delta/(4N)$, so
Eqs.~\eqref{eq:lookup-tv-error} and~\eqref{eq:compiled-table-precision}
and the triangle inequality give
\begin{align}
  d_{\mathrm{TV}}\bigl(P^{\cT_C}_a,Q_{C,a}\bigr)
  \leq\frac{\delta}{2N}
  \qquad\text{for every }a\in\bits^{b_C}.
  \label{eq:compiled-lookup-error}
\end{align}
The circuits~$\cT_C$ use mutually independent coins, which are also
independent of the coins of stage~(1).
\item Assembly. Concatenating the outputs of the circuits~$\cT_C$ on the
intervals $\{A_r\}_r$ and $\{B_s\}_s$ gives candidate strings
$X,Y\in\bits^n$. The circuit outputs the result of the constant-depth
circuit for \routine{BoundaryPass} of
Corollary~\ref{cor:combining-boolean-realization}, applied to $X$, $Y$,
and the cut indicators of~$\widehat F$.
\end{enumerate}

The argument follows the proof of
Proposition~\ref{thm:full-bit-precision}. Fix~$x$ and a noise
fixing~$f$. Conditioned on $\widehat F=f$, the strings on the intervals
$C\in\mathcal L$ are independent, and by
Eqs.~\eqref{eq:table-is-marginal} and~\eqref{eq:compiled-lookup-error}
each is within $\delta/(2N)$ of~$Q^{f,x}_{Z_C}$. Let $R^{f,x}$ be the
distribution obtained by applying \routine{BoundaryPass} to independent
exact samples from these marginals. Since \routine{BoundaryPass} is a
deterministic function of the samples and of~$f$, the conditional output
distribution of~$\cC_{\mathrm{lut}}$ is within $N\cdot\delta/(2N)=\delta/2$
of~$R^{f,x}$. Averaging over~$f$ with joint convexity, and replacing
$\mu_{p_s}$ by~$\mu_p$ with Eq.~\eqref{eq:compiled-noise-error}, gives
$d_{\mathrm{TV}}(\widehat P(\cdot\mid x),\mathbb E_{F\sim\mu_p}R^{F,x})
\leq\delta/2+\delta/4$. Finally, $Q^{f,x}_Z$ factorizes over the
cut-bounded components of~$f$, so Theorem~\ref{thm:local-selection}
gives $R^{f,x}=Q^{f,x}_Z$ whenever $W_{\max}<W_*$. With
Eq.~\eqref{eq:input-mixture}, the argument of
Corollary~\ref{cor:l1-approximation} bounds the remaining distance
by~$\Pr_p[\Bad]\leq\delta/4$. Altogether,
\begin{align}
  d_{\mathrm{TV}}\bigl(\widehat P(\cdot\mid x),P^{(p)}(\cdot\mid x)\bigr)
  \leq\underbrace{\delta/2}_{\text{tables}}
  +\underbrace{\delta/4}_{\text{noise}}
  +\underbrace{\delta/4}_{\text{width cutoff}}
  =\delta
  \label{eq:compiled-error-budget}
\end{align}
for every~$x$, which is Eq.~\eqref{eq:compiled-total-variation}.

Regard the coins as inputs. In stage~(1), each noise location compares
$s$~coins $T\in\bits^s$ with the fixed threshold~$\kappa=2^sp_s$ of
Lemma~\ref{lem:finite-bit-noise-fixing}. The indicator of $T<\kappa$ is
the OR, over the positions~$i$ at which $\kappa$ has a one, of the AND
of $\neg T_i$ and the literals expressing $T_j=\kappa_j$ for the more
significant positions~$j$; this takes three layers including negations. One further layer of AND gates
forms the Pauli labels, the branch labels of the rectangles, and the cut
indicators. Stage~(2) adds three layers by
Lemma~\ref{lem:lookup-table-sampling}. In parallel, the containment flags
of \routine{BoundaryPass} take two layers after the cut indicators
(proof of Corollary~\ref{cor:combining-boolean-realization}), and the
negation~$\neg\chi_j$ is computed in parallel with the last layer of the
lookups. The selection
$Z_j=(\chi_j\land X_j)\lor(\neg\chi_j\land Y_j)$ then adds two layers. Hence $\cC_{\mathrm{lt}}$ has depth at most
$4+3+2=9$, or~$10$ if the coins are counted as a layer of gates, for all
$d$, $p$, and~$\delta$.

Stage~(1) has size $O(nds)$ and stage~(3) has size~$O(n)$. By
Lemma~\ref{lem:lookup-table-sampling}, stage~(2) consists of $N=O(n)$
circuits of size $O(2^{b_C+r_C}+|C|)$. By
Eqs.~\eqref{eq:compiled-parameters} and~\eqref{eq:compiled-seed-length},
$b_C+r_C=O(dp^{-d}\log(n/\delta)+d^2)$, so that
\begin{align}
  \mathsf{size}(\cC_{\mathrm{lt}})
  =2^{O(d^2)}\,(n/\delta)^{O(dp^{-d})}=n^{O(1)} .
  \label{eq:compiled-size}
\end{align}
By Eq.~\eqref{eq:compiled-precision}, $O(\log(n/\delta))$-bit
approximations to the gate entries and to~$p$ suffice, as assumed in
Section~\ref{sec:boolean-overview}. Given these,
Lemmas~\ref{lem:interval-tables} and~\ref{lem:lookup-table-sampling}
construct all tables and lookup circuits in time polynomial in
$2^{b_C+L_C+r_C}=n^{O(1)}$, and the other two stages are explicit. Thus
$\cC_{\mathrm{lt}}$ can be constructed in polynomial time. This completes
the proof of Corollary~\ref{cor:compiled-constant-depth}.

Without the precision assumption, the same construction with the exact
marginals~$Q_{C,a}$ in place of~$\widetilde Q_{C,a}$ shows that
$\cC_{\mathrm{lt}}$ exists for arbitrary gates, but it need not be
efficiently constructible.

\end{document}